\newtheorem{theorem}{Theorem}
\newtheorem*{theorem*}{Theorem}
\newtheorem{lemma}[theorem]{Lemma}
\newtheorem{proposition}[theorem]{Proposition}
\newtheorem{corollary}[theorem]{Corollary}
\newtheorem{conjecture}[theorem]{Conjecture}
\crefname{result}{Result}{Results}
\crefname{fact}{Fact}{Facts}
\theoremstyle{definition}
\newtheorem{definition}[theorem]{Definition}
\newcommand{\ketbra}[2]{\ket{#1}\!\bra{#2}}
\newcommand{\unitary}[1]{\textup{#1}}
\newcommand{\cont}{\unitary{controlled-}}
\newcommand{\SWAP}{\unitary{SWAP}}
\newcommand{\NOT}{\unitary{NOT}}
\newcommand{\complex}{\mathbb C}
\newcommand{\nats}{\mathbb N}
\newcommand{\class}[1]{\ensuremath{\mathsf{#1}}\xspace}
\mathchardef\mhyphen="2D %
\newcommand{\BQP}{\class{BQP}}
\newcommand{\QCMA}{\class{QCMA}}
\newcommand{\QMA}{\class{QMA}}
\newcommand{\prob}[1]{\textup{\textsc{#1}}\xspace}
\newcommand{\permInvers}{\prob{Permutation Inversion}}
\newcommand{\functionErasure}{\prob{Function Erasure}}
\newcommand{\indexErasure}{\prob{Index Erasure}}
\newcommand{\graphIso}{\prob{Graph Isomorphism}}
\newcommand{\rigidGraphIso}{\prob{Rigid Graph Isomorphism}}
\newcommand{\setComp}{\prob{Set Comparison}}
\newcommand{\setEquality}{\prob{Set Equality}}
\newcommand{\perminvgarb}{\prob{Embedded PermInv}}
\newcommand{\collision}{\prob{Collision}}
\newcommand{\calA}{\mathcal{A}}
\newcommand{\calB}{\mathcal{B}}
\newcommand{\calC}{\mathcal{C}}
\newcommand{\calD}{\mathcal{D}}
\newcommand{\calO}{\mathcal{O}}
\newcommand{\regA}{\calA}
\newcommand{\regB}{\calB}
\newcommand{\regC}{\calC}
\newcommand{\regD}{\calD}
\DeclarePairedDelimiter\absd{\lvert}{\rvert}
\DeclarePairedDelimiter\normd{\lVert}{\rVert}
\DeclarePairedDelimiter\parend{\lparen}{\rparen}
\DeclarePairedDelimiter\floord{\lfloor}{\rfloor}
\newcommand{\abs}[1]{\absd*{#1}}
\newcommand{\norm}[1]{\normd*{#1}}
\newcommand{\paren}[1]{\parend*{#1}}
\newcommand{\floor}[1]{\floord*{#1}}
\newcommand{\bO}[1]{\operatorname*{O}\paren{#1}}
\newcommand{\lO}[1]{\operatorname*{o}\paren{#1}}
\newcommand{\bOm}[1]{\operatorname*{\Omega}\paren{#1}}
\newcommand{\bOnoparen}{\operatorname*{O}}
\newcommand{\bOmnoparen}{\operatorname*{\Omega}}
\newcommand{\bTnoparen}{\operatorname*{\Theta}}
\newcommand{\lonoparen}{\operatorname*{o}}
\newcommand{\ie}{i.e.\xspace}
\newcommand{\eg}{e.g.\xspace}
\newcommand{\bin}[0]{\left\{0,1\right\}}
\renewcommand{\set}[1]{\left\{#1\right\}}
\newcommand{\piinv}{\pi^{-1}}
\newcommand{\inv}{^{-1}}
\newcommand{\XOR}{\textsc{xor}}
\newcommand{\oracle}[1]{\textup{#1}}
\newcommand{\Ppi}{\oracle{P}\!_{\pi}}
\newcommand{\Ppiinv}{\oracle{P}\!_{\pi^{-1}}}
\newcommand{\Spi}{\oracle{S}_{\pi}}
\newcommand{\Spiinv}{\oracle{S}_{\pi^{-1}}}
\newcommand{\diffusion}{\ensuremath{\oracle{D}}}
\DeclareMathOperator{\relg}{\overrightarrow{\gamma_2}}
\newcommand{\zo}{\bin}
\newcommand{\p}[1]{\paren{#1}}
\newcommand{\cb}[1]{\set{#1}}
\newcommand{\ang}[1]{\left\langle#1\right\rangle}
\DeclareMathOperator{\g}{\gamma_2}
\newcommand{\xor}{\mathsf{xor}}
\newcommand{\phase}{\mathsf{phase}}
\newcommand{\perm}{\mathsf{perm}}
\newcommand{\dxor}{\Delta_{\xor}}
\newcommand{\dperm}{\Delta_{\perm}}
\newcommand{\dphase}{\Delta_{\phase}}
\DeclareMathOperator{\lmax}{\lambda_{\mathsf{max}}}
\DeclareMathOperator{\ql}{Q_{\mathsf{LV}}}
\DeclareMathOperator{\qbperm}{Q_{\mathsf{BE}}^{\perm}}
\DeclareMathOperator{\qlperm}{Q_{\mathsf{LV}}^{\perm}}
\DeclareMathOperator{\qbxor}{Q_{\mathsf{BE}}^{\xor}}
\DeclareMathOperator{\qlxor}{Q_{\mathsf{LV}}^{\xor}}
\DeclareMathOperator{\relgext}{\overrightarrow{\gamma_2}^{\mathsf{ext}}}
\DeclareMathOperator{\relgstd}{\overrightarrow{\gamma_2}^{\mathsf{std}}}
\title{
Quantum Search with In-Place Queries
}
\author{
    Blake Holman\\\small{\textsl{Sandia National Laboratories}}\\\small{\textsl{Purdue University}}\\\small{\texttt{\href{mailto:blake@perdue.edu}{holman14@purdue.edu}}}\and
    Ronak Ramachandran\\\small{\textsl{The University of Texas at Austin}}\\\small{\texttt{\href{mailto:ronakr@utexas.edu}{ronakr@utexas.edu}}}\and
    Justin Yirka\\\small{\textsl{Sandia National Laboratories}}\\\small{\textsl{The University of Texas at Austin}}\\\small{\texttt{\href{mailto:yirka@utexas.edu}{yirka@utexas.edu}}}
}
\date{April 2025}
\begin{document}

\maketitle

\begin{abstract}
    Quantum query complexity is typically characterized in terms of
    \XOR{} queries $\ket{x,y}\mapsto \ket{x,y\oplus f(x)}$
    or phase queries,
    which ensure that even queries to non-invertible functions are unitary.
    When querying a permutation, another natural model is unitary:
    in-place queries $\ket{x}\mapsto \ket{f(x)}$.

    Some problems are known to require exponentially fewer in-place queries than \XOR{} queries,
    but no separation has been shown in the opposite direction.
    A candidate for such a separation was the problem of inverting a permutation over $N$ elements.
    This task, equivalent to unstructured search in the context of permutations,
    is solvable with $\bOnoparen(\sqrt{N})$ \XOR{} queries but was conjectured to require
    $\bOm{N}$ in-place queries.

    We refute this conjecture by designing a quantum algorithm for Permutation Inversion using
    $\bOnoparen(\sqrt{N})$ in-place queries.
    Our algorithm achieves the same speedup as Grover's algorithm
    despite the inability to efficiently uncompute queries
    or perform straightforward oracle-controlled reflections.

    Nonetheless, we show that there are indeed problems which require fewer \XOR{} queries than in-place queries.
    We introduce a subspace-conversion problem called Function Erasure
    that requires 1 \XOR{} query and $\bTnoparen(\sqrt{N})$ in-place queries.
    Then, we build on a recent extension of the quantum adversary method to
    characterize exact conditions for a decision problem to exhibit such a separation,
    and we propose a candidate problem.
\end{abstract}

\section{Introduction}\label{sec:intro}
Quantum algorithms are typically developed and characterized in terms of query complexity.
The strongest promises of quantum advantage over classical computation come from
unconditional separations proved in terms of black-box queries,
including Shor's period-finding algorithm and Grover's search algorithm.
Understanding the nuances of the query model is therefore essential for advancing quantum algorithm design and sculpting quantum advantages.

Given an arbitrary Boolean function $f$,
the standard query model in quantum computation is defined by \XOR{} oracles $\oracle{S}_f$,
also known as ``standard oracles'',
which map basis states $\ket{x}\ket{y}$ to $\ket{x}\ket{y\oplus f(x)}$.
Other common models, such as phase oracles, are known to be equivalent.
The use of \XOR{} oracles goes back to the early days of quantum computation
\cite{Feynman86-quantumComputers,Deutsch85,DeutschJozsa92,BernsteinVazirani-BV97-quantumComplexityTheory,BBBV97}
and even reversible computation
\cite{Bennett73-logicalReversibility,Bennett82-thermodynamicsOfComputation,Peres85-reversible,Bennett89-reversible}.
\XOR{} oracles embed potentially irreversible functions in a reversible way,
ensuring that all queries are unitary.
This enables quantum query complexity to encompass arbitrary Boolean functions
and offers a standard input-output format for using one algorithm as a sub-routine in another.

Other oracle models for quantum computation have been studied,
but most abandon unitarity
\cite{Regev-RS08-faultyOracle,Harrow-HR13-uselessOracle,Temme14-searchWithFaultyHamiltonianOracle,LL16-bombQueries,HG22-cqHybridSchemes,Rosmanis23-searchWithNoisyOracle,LMP24-faultyOracle,NatarajanNirkhe-NN24-QCMA}
or provide query access to quantum functions with no analogue in classical query complexity, \eg{} general unitaries \cite{Aaronson-AK07-QCMA,ABPS24}.

When querying a permutation, there is another natural oracle model: an \textit{in-place} oracle $\oracle{P}_f$ which maps $\ket{x}$ to $\ket{f(x)}$.
These oracles have been called
in-place \cite{Fefferman-FK18-QCMA,BFM23-nonstandardOracles},
erasing \cite{Aaronson02collisionProblem,Aaronson21-querySurvey},
and
minimal \cite{Kashefi-KKVB02comparisonQuantumOracles,Atici03-compareOracles}.\footnote{
    Unfortunately, ``permutation oracle'' has been used to refer to
    any oracle which embeds a permutation.

    Following a suggestion by John Kallaugher,
    we have found it convenient in conversation to refer to
    ``xoracles'' and ``smoracles'' (for ``small oracles'').
}
Just like \XOR{} oracles, in-place oracles can be directly studied and compared in both quantum and classical computation.

In-place oracles were first studied in the quantum setting by
Kashefi, Kent, Vedral, and Banaszek \cite{Kashefi-KKVB02comparisonQuantumOracles}.
They showed several results comparing \XOR{} oracles and in-place oracles,
including a proof that $\bTnoparen(\sqrt{N})$ queries to an \XOR{} oracle are required to simulate an in-place query to the same permutation.
Around the same time, Aaronson \cite{Aaronson02collisionProblem}
proved that \setComp{},
an approximate version of the \collision{} problem,
requires an exponential number of \XOR{} queries but only a constant number of in-place queries.

These oracles relate to multiple topics in  quantum algorithms and complexity theory.
Aaronson's lower bound for the collision problem \cite{Aaronson02collisionProblem} was partially inspired by the desire to separate the in-place and \XOR{} query models.
\cite{Kashefi-KKVB02comparisonQuantumOracles} observed that a constant number of in-place queries
is sufficient to solve \rigidGraphIso{}, a necessary subcase for solving general \graphIso{}.
An identical protocol was later generalized to define the concept of \prob{QSampling}, which is sufficient to solve \class{SZK},
by Aharonov and Ta-Shma \cite{Aharonov-AT03-adiabatic}.
These ideas inspired pursuing lower bounds on the \indexErasure{} problem \cite{Shi01-arxivVersion-collisionProblem,AMRR11-symmetryStateGeneration,Rosmanis-LR20-indexErasure},
ruling out potential algorithms for \graphIso{} using \XOR{} oracles.
Fefferman and Kimmel \cite{Fefferman-FK18-QCMA} showed an oracle separation of \QMA{} and \QCMA{} relative to randomized in-place oracles.
Also, the expressive power of in-place oracles relates to the conjectured existence of one-way permutations \cite[p. 926]{Bennett82-thermodynamicsOfComputation}.
Additionally, because in-place oracles are not self-inverse, they offer a setting in which to study computation with inverse-free gate sets \cite{Bouland-BG21-inverseFreeSolovayKitaev}.

In-place oracles outperform \XOR{} oracles in every established separation between the two query models,
but it is conjectured that the oracles are incomparable, each better-suited for certain tasks.
Aaronson \cite{Aaronson21-querySurvey} raised proving such a separation as an open problem.
Fefferman and Kimmel \cite{Fefferman-FK18-QCMA} conjectured that inverting a permutation over $N$ elements,
a task which requires only $\bOnoparen(\sqrt{N})$ queries to an \XOR{} oracle,
requires $\bOm{N}$ queries to an in-place oracle.
\permInvers{} is formally as hard as unstructured search \cite{Nayak11-permInversAsHardAsSearch},
so this conjecture effectively predicts that the speedup of Grover's algorithm \cite{Grover-G96-search}
is impossible with an in-place oracle.

\paragraph{Results}
We refute the conjecture of \cite{Fefferman-FK18-QCMA} by designing a new quantum algorithm that solves \permInvers{}
with $\bOnoparen(\sqrt{N})$ queries to an in-place oracle,
recovering the same speedup as Grover's search algorithm.

We additionally apply this algorithm to tightly characterize the ability of \XOR{}
and in-place oracles to simulate each other.
Then, we change focus and make progress towards showing the desired separation.
We introduced a subspace-conversion problem that requires 1 \XOR{} query and exponentially-many in-place queries.

Finally, we propose a candidate decision problem
that can be solved with $\bOnoparen(\sqrt{N})$ queries to an \XOR{} oracle
and that we conjecture requires $\bOm{N}$ queries to an in-place oracle.
We then apply recent advances in the quantum adversary bound to define a new class of adversary matrices which must be used if such a decision-problem separation exists.

\subsection{Quantum Search}\label{sec:intro-quantSearch}
Unstructured search, famously solved by Grover's algorithm with $\bOnoparen(\sqrt{N})$ queries to an \XOR{} oracle,
is one of the most well-studied problems in quantum query complexity.
The first non-trivial quantum lower bound was for unstructured search \cite{BBBV97}.
Later work modifying the query model, for instance by introducing noise or faults into queries, focused on unstructured search
\cite{Regev-RS08-faultyOracle,Temme14-searchWithFaultyHamiltonianOracle,LL16-bombQueries,HG22-cqHybridSchemes,Rosmanis23-searchWithNoisyOracle,ABPS24}.

In-place oracles are only defined for bijections (see \cref{sec:oracles}).
Restricted to permutations, the unstructured search problem is equivalent to \permInvers{}
\cite{Nayak11-permInversAsHardAsSearch}.\footnote{
    The reductions between \permInvers{} and unstructured search
    are entirely classical.
    So the reductions hold using either \XOR{} oracles or in-place oracles, although some quantum garbage registers may differ.
}
\begin{definition}\label{def:permInvers}
    Given query access to a permutation $\pi$ on $[N]=\{0,\dots,N-1\}$,
    the \permInvers{} problem is to output $\pi^{-1}(0)$.
\end{definition}

The choice to invert $0$ can of course be replaced with any element.
It is also straightforward to define a related decision problem,
for example, deciding if $\pi\inv(0)$ is odd or even.

Like general unstructured search,
\permInvers{} has been a frequent target for new lower bound techniques.
It can be solved with $\bOnoparen(\sqrt{N})$ queries to an \XOR{} oracle using Grover's algorithm.
Ambainis \cite{Ambainis02adversaryMethod} applied his new quantum adversary method to show that $\bOmnoparen(\sqrt{N})$
queries to an \XOR{} oracle are in fact required to solve the problem.
Nayak \cite{Nayak11-permInversAsHardAsSearch} gave an alternative proof by showing the problem is
as hard as general unstructured search.
Rosmanis \cite{Rosmanis22-tightboundsinvertingpermutations} also reproduced this tight lower bound using
the compressed oracle technique on random permutations.
As for in-place oracles,
\cite{Fefferman-FK18-QCMA} proved that
$\bOmnoparen(\sqrt{N})$ in-place queries are needed to solve \permInvers{}.
Belovs and Yolcu \cite{Belovs2023onewayticketlasvegas} later applied their advancements on the quantum adversary method to reprove the same lower bound.

We add to this sequence of work, studying \permInvers{} in \cref{sec:permInversion} to give the following result.

\begin{theorem}\label{thm:permInvers}
    For a permutation $\pi$ on $[N]$, \permInvers{} can be solved with $\bOnoparen(\sqrt{N})$ in-place queries to $\pi$.
\end{theorem}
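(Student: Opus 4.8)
The plan is to solve \permInvers{} by amplitude amplification for the unique marked point $x^\ast := \pi^{-1}(0)$, mirroring Grover's algorithm. With an \XOR{} oracle one alternates the reflection $R_u = I - 2\ketbra uu$ about the uniform superposition $\ket u = N^{-1/2}\sum_{x}\ket x$ (oracle-free) with the reflection $R_{x^\ast} = I - 2\ketbra{x^\ast}{x^\ast}$ about the marked state, starting from $\ket u$; after $\Theta(\sqrt{N})$ rounds the state is (close to) $\ket{x^\ast}$, which is then measured. The only step that touches the oracle is $R_{x^\ast}$, and in the in-place model its textbook implementation $R_{x^\ast} = \Ppiinv\, R_0\, \Ppi$ (with $R_0 = I - 2\ketbra00$) is off-limits: an in-place query cannot be uncomputed, and $\Ppiinv$ is not available. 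So the task reduces to realizing a good-enough substitute for $R_{x^\ast}$ --- or the amplification as a whole --- using only forward queries to $\Ppi$.

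The structural fact I would exploit is that $\Ppi$ fixes $\ket u$ and, more precisely, maps the two-dimensional ``Grover plane'' $\calP_j := \spn\{\,\ket{\pi^{j}(0)},\ \ket{w_j}\,\}$ (with $\ket{w_j}$ the uniform superposition over $[N]\setminus\{\pi^{j}(0)\}$) isometrically onto $\calP_{j+1}$, carrying $\cos\theta\,\ket{w_j} + \sin\theta\,\ket{\pi^{j}(0)}$ to $\cos\theta\,\ket{w_{j+1}} + \sin\theta\,\ket{\pi^{j+1}(0)}$. Hence one forward query advances the ``target index'' $j\mapsto j+1$ while leaving the amplification angle $\theta$ untouched, and because $\ket{\pi^{j}(0)}$ is an explicitly known basis state once we have walked $j$ steps forward from $0$, the reflections about $\ket{\pi^{j}(0)}$ and about $\ket u$ that realize one Grover step are both oracle-free. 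Since $x^\ast = \pi^{\ell-1}(0)$, where $\ell$ is the length of the cycle through $0$, this already proves the theorem when $\ell = \bOnoparen(\sqrt{N})$: walk forward to index $j = \ell-1$ at a cost of $\ell-1$ queries, interleave $\Theta(\sqrt{N})$ free Grover steps to drive $\theta \to \pi/2$, and measure to read off $x^\ast$.

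The main obstacle --- and where the real work lies --- is the complementary case: $0$ sits on a long cycle, $\ell = \lOm{\sqrt{N}}$, possibly $\ell = N$. Then $\ell$ is unknown, the cycle is too long to traverse within budget, and the ``walk the index up to $\ell-1$'' strategy above (equivalently, any naive forward-query implementation of $R_{x^\ast}$) costs $\Omega(\ell)$ queries, since $x^\ast$ lies $\ell-1$ forward $\Ppi$-steps from the only anchor $\ket 0$. The heart of the argument must therefore be a different mechanism that sidesteps both uncomputation and traversal --- for instance, running the amplification on a carefully prepared superposition (over forward chains from many starting points, or over guessed backward offsets) so that the marked direction is produced after only $\bOnoparen(\sqrt{N})$ forward queries, while arguing that the garbage incurred by never restoring the register to its original basis stays benign. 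Getting this last piece to work within the $\bOnoparen(\sqrt{N})$ query budget is the crux of \cref{thm:permInvers}.
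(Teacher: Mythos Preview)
Your proposal is not a proof: it is an outline that explicitly leaves the main case unresolved. You yourself write that when the cycle through $0$ has length $\ell = \lOm{\sqrt{N}}$, ``the heart of the argument must therefore be a different mechanism'' and that ``getting this last piece to work \ldots\ is the crux.'' That crux is the entire theorem. The short-cycle case you do handle is in fact trivial---if $\ell \le \sqrt{N}$, classically iterating $\Ppi$ from $0$ until you return to $0$ already outputs $x^\ast = \pi^{\ell-1}(0)$ in $\ell$ queries, so your Grover-plane machinery adds nothing there. Your structural observation that $\Ppi$ carries $\calP_j$ to $\calP_{j+1}$ while preserving the Grover angle is correct, but it only lets you amplify toward $\ket{\pi^{j}(0)}$ for indices $j$ you have already reached by forward walking, which is information you already possess classically. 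It gives no route to $x^\ast$ without traversing $\ell-1$ forward steps.

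The paper's algorithm does not try to synthesize the reflection $R_{x^\ast}$ at all. Each round makes two queries. The first (\emph{Mark}) computes $\pi(i)$ into an ancilla and sets a flag qubit when the output is $0$, splitting the state into a flagged branch (all amplitude on $\ket{x^\ast}$) and an unflagged branch (uniform over the rest), with the garbage $\ket{\pi(i)}$ still present. The second query (\emph{Shift}) applies a \emph{controlled}-$\Ppi$ to the main register only on the unflagged branch; because $\pi$ is a bijection, this simultaneously clears the garbage (the ancilla now matches the main register and can be \XOR{}ed away) and cyclically shifts the unflagged amplitudes so that $\ket{x^\ast}$ now carries weight $\beta$ there too. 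Finally, one applies the ordinary diffusion operator controlled on the flag being $\ket{-}$, which interferes the two branches so that, up to a $1/N$-probability failure, the state returns to the form $\alpha_t\ket{x^\ast} + \beta_t\sum_{i\ne x^\ast}\ket{i}$ with $\alpha_t \approx \alpha_{t-1} + \beta_{t-1}$. The recursion yields $\alpha_t = \sin\bigl((t{+}1)\arctan(1/\sqrt{N-1})\bigr)$, so $T = \Theta(\sqrt{N})$ rounds suffice. The mechanism is uniform in the cycle structure; nothing in it depends on $\ell$, and no step attempts to undo a query.
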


Thus, we refute the conjecture that $\bOm{N}$ in-place queries are required,
and we show the
$\bOmnoparen(\sqrt{N})$ lower bound \cite{Fefferman-FK18-QCMA,Belovs15variationsOnQuantum} is tight.

\paragraph{Grover's Algorithm}
Before we sketch our algorithm, we first recall Grover's algorithm for unstructured search \cite{Grover-G96-search} in the context of \permInvers{}.
Grover's algorithm repeatedly alternates between using \XOR{} queries to negate the amplitude of $\ket{\pi^{-1}(0)}$
and using the ``Grover Diffusion operator'' to reflect all amplitudes about the average,
steadily amplifying $\ket{\pi^{-1}(0)}$ on every iteration.
In other words, the algorithm
alternates between the oracle-dependent reflection
$I - 2\ketbra{\pi^{-1}(0)}{\pi^{-1}(0)}$
and the diffusion reflection
\begin{equation}\label{eqn:diffusion}
    \diffusion{} = I - 2\ketbra{s}{s} ,
\end{equation}
where $\ket{s}$ is the uniform superposition $\frac{1}{\sqrt{N}}\sum\ket{i}$.
This is illustrated in \cref{fig:groverFigure}.

In-place oracles seem at odds with oracle-dependent reflections, since reflections---like \XOR{} queries---are self-inverse, but inverting an in-place query is equivalent to inverting the underlying permutation, which would solve \permInvers{}.
With this in mind, it would be natural to conjecture, as \cite{Fefferman-FK18-QCMA} did, that no Grover-style speedup is possible using in-place oracles.

\begin{figure}
    \centering
    \scalebox{1}{\input{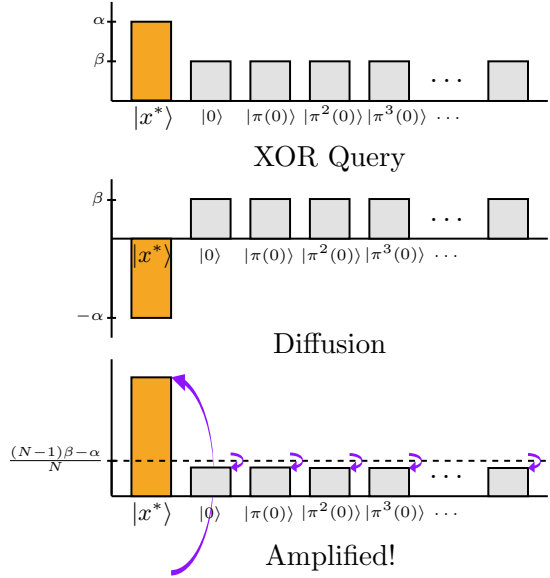}}
    \caption{(Color) Illustration of how one iteration of Grover's search algorithm amplifies $\ket{x^*} := \ket{\pi^{-1}(0)}$.}
    \label{fig:groverFigure}
\end{figure}

\paragraph{A New Algorithm}
Let $x^* \coloneqq \pi^{-1}(0)$ be the ``marked item" to be found.
Our algorithm starts with an equal superposition over $[N]$ along with an ancilla register and a ``flag'' qubit:
$\frac{1}{\sqrt{N}} \sum \ket{i}\ket{0^n}\ket{0}$.
The algorithm repeatedly iterates over steps \textit{Mark}, \textit{Shift}, and \textit{Diffuse the Difference}.
The intuition behind these steps is as follows.
\begin{itemize}
    \item \textit{Mark:}
    Query the oracle on all $i\in [N]$.
    Then,
    conditioned on the output of $\pi(i)$ being $0$,
    flip the flag qubit from $\ket{0}$ to $\ket{1}$.

    (The \textit{Mark} step cannot be used to implement Grover's algorithm as usual
    because the query answer remains in the ancilla register, as garbage, until the next step.)

    \item \textit{Shift:} In the $\ket{1}$-flagged branch, all amplitude is concentrated on $\ket{x^*}$, while in the $\ket{0}$-flagged branch, the amplitude is spread evenly over all basis states except $\ket{x^*}$.

    In only the $\ket{0}$-flagged branch of the superposition,
    query the oracle to shift the amplitude of each basis state forward according to $\pi$
    (perform a controlled in-place query to $\pi$). \\
    This shifts amplitude from $\ket{i}$ onto $\ket{\pi(i)}$, and in particular, from $\ket{\pi^{-1}(x^*)}$ onto $\ket{x^*}$.

    \item \textit{Diffuse the Difference:}
    The two branches are now such that if they are interfered
    to produce two branches, one branch which adds amplitudes and another branch which subtracts amplitudes,
    then the amplitude on $\ket{x^*}$ would be above average in the former branch and below average in the latter branch.

    Perform the standard Grover diffusion operator (\cref{eqn:diffusion})
    controlled on the flag qubit being the $\ket{-}$ state,
    which reflects the ``difference branch'' about its average amplitude.

    This results in the amplitude on $\ket{x^*}$ being similarly amplified in both branches.
    In fact, we find the branches are inverse-exponentially close to each other,
    and that after the $t$-th iteration, the overall state is effectively
    \begin{equation*}
        \ket{\psi_{t}} =  \left( \alpha_{t} \ket{x^*} + \sum_{i \in [N] \setminus \set{x^*}} \beta_{t} \ket{i} \right) \ket{0^n}\ket{0} ,
    \end{equation*}
    where $\alpha_t$ increases by approximately $1/\sqrt{N}$ each iteration.
\end{itemize}
These steps are repeated $\bOnoparen(\sqrt{N})$ times to amplify the amplitude on $\ket{x^*}$ until there is a constant probability of measuring it.
Each iteration uses a constant number of in-place queries,
so the overall query complexity is $\bOnoparen(\sqrt{N})$.
For more intuition, see a circuit diagram in \cref{fig:PermInv qckt} on \cpageref{fig:PermInv qckt}
and an illustration in \cref{fig:algFigure} on \cpageref{fig:algFigure} similar to \cref{fig:groverFigure} above.

In \cref{sec:controlledinplace}, we give a construction for the controlled in-place query necessary for the \textit{Shift} step of the algorithm.
This construction differs significantly from the analogous construction for \XOR{} oracles.

\begin{lemma}\label{result:contPi}
    There exists a unitary circuit making 1 in-place query to $\pi$
    which for all $x\in [N]$ maps
    \[
    \ket{a} \ket{x} \ket{y} \mapsto
    \begin{cases}
        \ket{a}\ket{x} \ket{y} & \text{when } a = 0 \\
        \ket{a}\ket{\pi(x)} \ket{y} & \text{when } a = 1
    \end{cases} ,
    \]
    where $y$ is the image under $\pi$ of some fixed point, such as $y=\pi(0)$.
\end{lemma}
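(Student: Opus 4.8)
The plan is to exploit the one piece of oracle information we are \emph{handed for free}: the third register holds $y=\pi(0)$, a known input--output pair of $\pi$ (known preimage $0$, image supplied in the register). A genuine ``controlled query'' is hopeless for the usual reason --- $\Ppi$ is not self-inverse, so there is no way to conditionally \emph{undo} a query without spending a second one (which would itself amount to inverting $\pi$). So instead of trying to skip the query in the $a=0$ branch, I will always apply $\Ppi$ exactly once to a fixed register, and in the $a=0$ branch arrange for that register to contain the \emph{known} value $0$, so that the query returns the \emph{known} value $\pi(0)$, which can then be erased against the supplied copy $y$. I will allow one extra $n$-qubit ancilla $\calD$ initialized to $\ket{0^n}$; since it returns to $\ket{0^n}$, the resulting circuit realizes the stated map on $(\calA,\calB,\calC)$.

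Concretely the circuit has four parts. (1) Conditioned on $a=0$, permute the contents of $(\calB,\calC,\calD)$ by the three-cycle that moves $x$ along, so that afterwards $\calB$ holds $0^n$, $\calC$ holds $x$, and $\calD$ holds $y=\pi(0)$ (two controlled swaps; in the $a=1$ branch nothing fires and $\calB$ still holds $x$). (2) Apply a single, \emph{unconditional} in-place query $\Ppi$ to $\calB$. (3) Conditioned on $a=0$, swap $\calB\leftrightarrow\calC$. (4) Conditioned on $a=0$, XOR $\calC$ bitwise into $\calD$. Tracking basis states: in the $a=1$ branch only step~(2) acts, sending $\ket{x}_{\calB}\mapsto\ket{\pi(x)}_{\calB}$ while $\calC=y$ and $\calD=0^n$ throughout, which is exactly the $a=1$ case. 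In the $a=0$ branch, after (1) we have $(\calB,\calC,\calD)=(0^n,x,y)$; step~(2) gives $(\pi(0),x,y)=(y,x,y)$; step~(3) gives $(x,y,y)$; step~(4) resets $\calD$ to $y\oplus y=0^n$, leaving $(\calB,\calC,\calD)=(x,y,0^n)$, which is exactly the $a=0$ case. Every gate other than the one query is a controlled swap or a Toffoli-type gate, needing no oracle access, so the whole circuit is unitary and uses exactly one in-place query.

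The remaining work is routine bookkeeping: checking that $\calD$ is restored \emph{exactly} --- which relies precisely on the query output in the off-branch being exactly $\pi(0)$, matching the promised $y$, so the erasure in step~(4) is exact --- and confirming the degenerate cases $\pi(0)=0$, $x=0$, and $x=\pi(0)$, all of which go through since every step is a permutation of basis states defined for all values of its inputs. I do not expect a genuine obstacle; the only real content is the routing idea in steps~(1)--(2), namely realizing that the ``do nothing'' branch should \emph{feed the oracle the known preimage} $0$ and then recognize and discard the resulting $\pi(0)$ using the provided $y$, rather than attempting to bypass the query at all.
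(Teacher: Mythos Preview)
Your proof is correct and uses essentially the same idea as the paper: always make the single query unconditionally, and in the off-branch feed the oracle the known preimage $0$ so the output $\pi(0)$ can be erased against the supplied $y$. The paper's routing is a minor variant---it applies $\Ppi$ to the ancilla $\calD$ rather than to $\calB$, swapping $x$ into $\calD$ when $a=1$ instead of cycling $0$ into $\calB$ when $a=0$---which saves one controlled swap but is otherwise the same construction.
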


Note that although $y$ depends on the oracle $\pi$,
it is independent of the query $x$.
So while $y$ is garbage, it is effectively negligible.
Because it is never entangled with the input register,
the garbage can be safely measured and erased.
See \cref{sec:controlledinplace} for more details.

\subsection{Simulating Other Oracles}\label{sec:intro-otherOracles}
In \cref{sec:otherOracles}, we tightly characterize the ability of \XOR{} and in-place oracles to simulate each other.
We do so by
applying our new algorithm to give new upper bounds and by developing a novel lower bound.

For a permutation $\pi$ on $[N]$,
Grover's algorithm can be used to simulate an \XOR{} query to $\pi^{-1}$, an in-place query to $\pi^{-1}$,
or an in-place query to $\pi$ using $\bOnoparen(\sqrt{N})$ \XOR{} queries to $\pi$,
and this complexity is known to be tight \cite{Kashefi-KKVB02comparisonQuantumOracles}.
We show how to use our new algorithm to perform the analogous simulations
using $\bOnoparen(\sqrt{N})$ queries to an in-place oracle.
The constructions are non-trivial due to the inability of in-place oracles to uncompute garbage.
The simulations are approximate with inverse-exponential error due to the error in our algorithm for \permInvers{}.

Next, we prove that our simulations are tight by giving matching lower bounds.
Inspired by \cite{Kashefi-KKVB02comparisonQuantumOracles},
we prove this by arguing that if few in-place queries could simulate an \XOR{} query, then we could violate the lower bound of \cite{Fefferman-FK18-QCMA} for performing unstructured search.

\begin{theorem}\label{thm:inPlaceSimulatingXOR_lower}
    For a permutation $\pi$ on $[N]$,
    $\bOmnoparen(\sqrt{N})$ in-place queries to $\pi$ are necessary to approximately simulate an \XOR{} query to $\pi$.
\end{theorem}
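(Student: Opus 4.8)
The plan is to prove the lower bound by contradiction, converting a hypothetical cheap in-place simulation of an \XOR{} query into a cheap in-place algorithm for \permInvers{}, and then invoking the $\bOm{\sqrt{N}}$ in-place query lower bound for \permInvers{} of \cite{Fefferman-FK18-QCMA,Belovs15variationsOnQuantum} (equivalently, the in-place version of the unstructured search lower bound). Concretely, suppose there is a unitary circuit $C$ that makes $q$ in-place queries to $\pi$ and, on a designated input/output register with all ancillas returned to $\ket{0}$, approximately implements the \XOR{} oracle $\Spi\colon \ket{x}\ket{y}\mapsto\ket{x}\ket{y\oplus\pi(x)}$ up to small error $\eps$ (say, in operator or diamond norm). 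I want to conclude $q=\bOm{\sqrt{N}}$.

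The single idea is an adjoint observation, in the spirit of \cite{Kashefi-KKVB02comparisonQuantumOracles}. The circuit $C$ is a composition of fixed unitaries with applications of $\Ppi\colon\ket{z}\mapsto\ket{\pi(z)}$; since $\Ppi$ is the permutation matrix of $\pi$, we have $\Ppi^\dagger=\Ppiinv$, the in-place oracle for $\piinv$. Hence $C^\dagger$ is a circuit making $q$ in-place queries to $\piinv$. Moreover $\Spi$ is self-inverse ($\Spi^2=I$, since XORing twice cancels), so $C^\dagger$ approximately implements $\Spi^\dagger=\Spi$ with the same error $\eps$. Writing $\sigma\coloneqq\piinv$, this says: $C^\dagger$ makes $q$ in-place queries to $\sigma$ and approximately implements the \XOR{} oracle for $\sigma^{-1}$, that is, $\ket{x}\ket{y}\mapsto\ket{x}\ket{y\oplus\sigma^{-1}(x)}$. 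Now a single-shot use finishes the reduction: running $C^\dagger$ on $\ket{0}\ket{0}$ (ancillas in $\ket{0}$) produces a state $\bO{\eps}$-close to $\ket{0}\ket{\sigma^{-1}(0)}\ket{0}$, so measuring the second register returns $\sigma^{-1}(0)$ with probability $\ge 1-\bO{\eps}$. This solves \permInvers{} for the arbitrary permutation $\sigma$ using only $q$ in-place queries to $\sigma$, so the $\bOm{\sqrt{N}}$ lower bound forces $q=\bOm{\sqrt{N}}$. (Note this already rules out small \emph{constant} simulation error, since that lower bound holds for bounded error; the inverse-exponential error of our matching upper bound is far from needed here.)

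The step I expect to need the most care is not the core argument but the bookkeeping around \emph{approximate} and possibly \emph{garbage-producing} simulations. I must pin down ``simulate an \XOR{} query'' so that passing to $C^\dagger$ is meaningful: cleanest is to require ancillas returned to $\ket{0}$ (also the natural notion for using a simulation as a subroutine), in which case $C^\dagger$ acts correctly on $\ket{\cdot}\ket{\cdot}\ket{0}$; if one instead allows leftover, query-input-independent garbage $\ket{g_\pi}$, the adjoint behaves well only on inputs of the form $\ket{\cdot}\ket{\cdot}\ket{g_\pi}$, which cannot be prepared with queries to $\sigma$ alone, so I would either forbid such garbage or show it can be absorbed with a constant-factor loss. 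I also need the routine facts that $\eps$-closeness is preserved under adjoints (immediate for operator/diamond norm) and translates into the claimed measurement success probability. Everything beyond the three-line adjoint manipulation is standard.
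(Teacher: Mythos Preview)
Your proposal is correct and follows essentially the same approach as the paper: invert the circuit gate-by-gate (using $\Ppi^\dagger=\Ppiinv$ and $\Spi^\dagger=\Spi$), then relabel $\sigma=\piinv$ (the paper phrases this as ``replacing queries to $\Ppiinv$ with queries to $\Ppi$'') to obtain a cheap in-place approximation of $\Spiinv$, contradicting the Fefferman--Kimmel lower bound for \permInvers{}. Your discussion of ancilla/garbage bookkeeping is slightly more explicit than the paper's, but the core three-line argument is identical.
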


Given that an \XOR{} query to $\pi$ can be implemented using 1 \XOR{} query to $\pi$,
\cref{thm:inPlaceSimulatingXOR_lower} makes this the first task known to require more in-place queries than \XOR{} queries.
We improve on this in the next section.

We can summarize all upper and lower bounds above as follows.

\begin{restatable}[Summary of relationships]{corollary}{summary}
\label{cor:summaryRelationships}
    For a permutation $\pi$ on $[N]$,
    $\bTnoparen(\sqrt{N})$ queries to any one of
    an in-place oracle for $\pi$, an in-place oracle for $\pi\inv$, an \XOR{} oracle for $\pi$, or an $\XOR{}$ oracle for $\pi\inv$
    are necessary and sufficient to approximately simulate
    any one of the others.
\end{restatable}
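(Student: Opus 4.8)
The plan is to peel the corollary into its constituent simulation bounds. With four oracles in play there are $4\cdot 3 = 12$ ordered source-to-target pairs, and for each I want a simulation using $\bO{\sqrt N}$ source queries together with a matching $\bOm{\sqrt N}$ lower bound. The first move is to halve the bookkeeping using the symmetry $\pi\leftrightarrow\pi\inv$: relabelling the unknown permutation turns an in-place (resp.\ \XOR{}) oracle for $\pi$ into an in-place (resp.\ \XOR{}) oracle for $\pi\inv$ and conversely, and it acts as a bijection on the four oracles that interchanges the two ``$\pi$'' oracles with the two ``$\pi\inv$'' oracles. Hence it suffices to establish, for the six pairs whose source is an \XOR{} oracle for $\pi$ or an in-place oracle for $\pi$, both the upper and lower bounds; applying those statements to $\pi\inv$ in place of $\pi$ then yields the remaining six pairs.

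For the upper bounds, a source \XOR{} oracle for $\pi$ simulates an \XOR{} oracle for $\pi\inv$, an in-place oracle for $\pi\inv$, and an in-place oracle for $\pi$, each with $\bO{\sqrt N}$ queries; these are standard consequences of Grover's algorithm and I would cite \cite{Grover-G96-search,Kashefi-KKVB02comparisonQuantumOracles}. A source in-place oracle for $\pi$ simulates the analogous three targets with $\bO{\sqrt N}$ queries: each of these is the content of \cref{sec:otherOracles}, obtained by running the \permInvers{} algorithm of \cref{thm:permInvers} (together with the controlled in-place query of \cref{result:contPi}) as a subroutine, with the single caveat that the simulations inherit the inverse-exponential error of that algorithm. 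The $\pi\leftrightarrow\pi\inv$ symmetry then supplies the remaining six upper bounds.

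For the lower bounds I would split the twelve pairs into two types. For an ``inverse-swapping'' pair, where the target oracle queries $\pi\inv$ and the source oracle queries $\pi$: a single (approximate) query to an in-place oracle for $\pi\inv$ applied to $\ket{0}$, or to an \XOR{} oracle for $\pi\inv$ applied to $\ket{0}\ket{0}$, outputs $\ket{\pi\inv(0)}$ and thus solves \permInvers{} for $\pi$, so if $t$ source queries sufficed to simulate such a target query then \permInvers{} for $\pi$ would be solvable with $t$ source queries, forcing $t=\bOm{\sqrt N}$ by the \XOR{}-oracle lower bound of \cite{Ambainis02adversaryMethod} when the source queries $\pi$ via an \XOR{} oracle, and by the in-place lower bound of \cite{Fefferman-FK18-QCMA} when it queries $\pi$ via an in-place oracle; the inverse-exponential simulation error is harmless since bounded-error search tolerates it. For a ``same-permutation'' pair, where source and target query the same permutation and differ only in oracle type: simulating an in-place oracle for $\pi$ from an \XOR{} oracle for $\pi$ needs $\bOm{\sqrt N}$ queries by \cite{Kashefi-KKVB02comparisonQuantumOracles}, and the reverse direction is exactly \cref{thm:inPlaceSimulatingXOR_lower}. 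The $\pi\leftrightarrow\pi\inv$ symmetry fills in the other same-permutation and inverse-swapping pairs, and assembling all twelve $\bT{\sqrt N}$ bounds gives the corollary.

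I do not expect a genuine obstacle inside the corollary: all of the real work has been pushed into \cref{thm:permInvers} and the simulations derived from it, and into \cref{thm:inPlaceSimulatingXOR_lower}. The only things to be careful about are checking that the $\pi\leftrightarrow\pi\inv$ relabelling is legitimate for each transported statement, and confirming that errors do not accumulate --- which holds because every one of the twelve relationships is realised by a single direct simulation rather than a composition, so each bound carries at most one layer of inverse-exponential error.
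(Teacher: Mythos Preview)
Your proposal is correct and mirrors the paper's approach: the paper treats \cref{cor:summaryRelationships} as a summary obtained by combining \cref{lem:PermImplementSinv}, \cref{thm:PermImplementXor_upper}, \cref{corr:PermImplementPerminv}, and \cref{thm:inPlaceSimulatingXOR_lower} with the known results of \cite{Kashefi-KKVB02comparisonQuantumOracles,Fefferman-FK18-QCMA,Ambainis02adversaryMethod}, and your explicit case analysis (inverse-swapping versus same-permutation pairs, halved via the $\pi\leftrightarrow\pi\inv$ relabelling) is simply a more systematic rendering of that synthesis. One minor correction: the in-place-$\pi$ to in-place-$\pi\inv$ simulation (\cref{corr:PermImplementPerminv}) in the paper actually routes through both $\Spi$ and $\Spiinv$, so it is a short composition rather than a single direct simulation, but the accumulated error is still inverse-exponential and your conclusion is unaffected.
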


\subsection{A Subspace-Conversion Separation}\label{sec:intro-subspace}
Next, in \cref{sec:subspace} we improve the unitary-implementation separation given in the previous section to a subspace-conversion separation.

\indexErasure{} is the task of generating the state $\frac{1}{\sqrt{N}}\sum_{x\in [N]}\ket{f(x)}$ given queries to $f$.
It was introduced by Shi \cite{Shi01-arxivVersion-collisionProblem}
and formalized as a state-generation task by
Ambainis, Magnin, Roetteler, and Roland \cite{AMRR11-symmetryStateGeneration}.
As noted by \cite{Shi01-arxivVersion-collisionProblem},
solving \indexErasure{} would imply solutions to \setEquality{} and \graphIso{}. Similar work on QSampling \cite{Aharonov-AT03-adiabatic} suggests many more applications.
\indexErasure{} requires $\bOmnoparen(\sqrt{N})$ \XOR{} queries \cite{AMRR11-symmetryStateGeneration,Rosmanis-LR20-indexErasure} but just 1 in-place query, so
the problem seems to capture key differences between the models.

We define the converse problem, \functionErasure{}.

\begin{definition}\label{def:funcErasure}
    Given query access to a function $f$, \functionErasure{} is the subspace-conversion problem of
    transforming any superposition
    of the form $\sum \alpha_x \ket{x}\ket{f(x)}$ to
    $\sum \alpha_x \ket{x}$.
\end{definition}

A state-conversion problem requires implementing an algorithm
which, given an oracle to function $f$,
maps an input $\ket{\psi_f}$ to output $\ket{\phi_f}$.
A subspace-conversion problem simply
generalizes this to multiple input-output pairs for each oracle function $f$.
We discuss the details of unitary-implementation, subspace-conversion, and other types of problems in \cref{sec:subspace}.

\functionErasure{} can trivially be solved with 1 \XOR{} query to $f$.
Then by \cref{cor:summaryRelationships},
$\bOnoparen(\sqrt{N})$ in-place queries are sufficient.
Finally, we show how \functionErasure{} and one additional in-place query are sufficient to simulate an \XOR{} query.
To avoid violating
\cref{thm:inPlaceSimulatingXOR_lower}, this implies $\bOmnoparen(\sqrt{N})$ queries are necessary.

\begin{theorem}\label{thm:inPlaceAndFE}
    For a permutation $\pi$ on $[N]$,
    $\bTnoparen(\sqrt{N})$ in-place queries to $\pi$ are necessary and sufficient for \functionErasure{}.
\end{theorem}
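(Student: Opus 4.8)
The plan is to establish the two bounds separately: sufficiency follows almost immediately from \cref{cor:summaryRelationships}, while necessity comes from a reduction showing that \functionErasure{} is essentially as powerful as an \XOR{} query to $\pi$.

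\emph{Sufficiency.} First I would recall that \functionErasure{} for a permutation $\pi$ is solved by a single \XOR{} query to $\pi$: given $\sum_x \alpha_x \ket{x}\ket{\pi(x)}$, apply the \XOR{} query $\ket{x}\ket{y}\mapsto\ket{x}\ket{y\oplus\pi(x)}$ to the two registers to obtain $\sum_x \alpha_x \ket{x}\ket{0}$, then discard the now-fixed second register. By \cref{cor:summaryRelationships}, this single \XOR{} query to $\pi$ can be simulated, up to inverse-exponential error, using $\bOnoparen(\sqrt{N})$ in-place queries to $\pi$, which yields an $\bOnoparen(\sqrt{N})$-query in-place algorithm for \functionErasure{}.

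\emph{Necessity.} Suppose \functionErasure{} for $\pi$ can be solved with $q$ in-place queries to $\pi$. I would combine that algorithm with one extra in-place query to build a circuit simulating an \XOR{} query to $\pi$, as follows. On input $\ket{x}\ket{y}$ together with a fresh ancilla $\ket{0}$: (i) copy $x$ into the ancilla with \CNOT{} gates, giving $\ket{x}\ket{y}\ket{x}$; (ii) apply one in-place query to the first register, giving $\ket{\pi(x)}\ket{y}\ket{x}$; (iii) XOR the first register into the second, giving $\ket{\pi(x)}\ket{y\oplus\pi(x)}\ket{x}$; (iv) run the \functionErasure{} algorithm on the register pair (ancilla, first register), which holds $\ket{x}\ket{\pi(x)}\in\spn\{\ket{x}\ket{\pi(x)}:x\in[N]\}$, to erase the image and leave the index $x$ in the ancilla. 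After relabelling registers the output is $\ket{x}\ket{y\oplus\pi(x)}$, i.e., an \XOR{} query to $\pi$, using $q+1$ in-place queries in total. By \cref{thm:inPlaceSimulatingXOR_lower}, any such simulation needs $\bOmnoparen(\sqrt{N})$ in-place queries, so $q+1=\bOmnoparen(\sqrt{N})$ and hence $q=\bOmnoparen(\sqrt{N})$.

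The step requiring the most care, and the main obstacle, is (iv): invoking \functionErasure{} as a subroutine while its input registers are entangled with the reference register holding $y\oplus\pi(x)$, and being sure the subroutine leaves no $x$-dependent garbage that would spoil the simulated \XOR{} query. Both points are handled by the ``subspace'' nature of the problem: a valid subspace-conversion algorithm acts as a fixed isometry on the span of its promised inputs, so it commutes with the reference register (preserving entanglement), and since its output on $\spn\{\ket{x}\ket{\pi(x)}:x\in[N]\}$ must equal $\sum_x\alpha_x\ket{x}$ exactly for every $\alpha$, its workspace is necessarily returned to a state independent of $x$. I would make this rigorous using the composition facts for subspace-conversion problems developed in \cref{sec:subspace}, and I would note that an approximate solution to \functionErasure{} only introduces bounded (in our case inverse-exponential) error into the simulated \XOR{} query, which is harmless against the approximate-simulation lower bound \cref{thm:inPlaceSimulatingXOR_lower} and matches the error already incurred in the sufficiency direction.
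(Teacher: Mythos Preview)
Your proposal is correct and follows essentially the same approach as the paper: the upper bound via simulating one \XOR{} query with $\bOnoparen(\sqrt{N})$ in-place queries, and the lower bound via the reduction that \functionErasure{} plus one in-place query simulates an \XOR{} query, then invoking \cref{thm:inPlaceSimulatingXOR_lower}. The only cosmetic difference is which register receives the in-place query (the paper queries the copied ancilla rather than the original $x$-register), and you are more explicit than the paper about why invoking the subspace-conversion subroutine in the presence of the reference register $\ket{y\oplus\pi(x)}$ is legitimate---the paper simply appeals to linearity.
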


\cref{thm:inPlaceAndFE} makes \functionErasure{} the first coherent subspace-conversion problem known to require
fewer \XOR{} queries than in-place queries.
This improves on the new unitary-implementation separation from the previous section.

\subsection{Lower Bounds}\label{sec:intro-lower}
The first works to study in-place oracles proved that there are problems
which can be solved with asymptotically fewer queries to in-place oracles than to the corresponding \XOR{} oracles
\cite{Kashefi-KKVB02comparisonQuantumOracles,Aaronson02collisionProblem}.
They left open the question of whether a separation could be shown in the opposite direction,
making the two oracles formally incomparable,
or whether in-place oracles are generically superior to \XOR{} oracles.
Our main result (\cref{thm:permInvers}) refutes one conjectured path towards constructing a problem for which \XOR{} oracles are better than in-place oracles.
Our study of \functionErasure{}
demonstrates the first problem which provably requires fewer queries to an \XOR{} oracle than an in-place oracle,
although it is a subspace-conversion problem instead of a decision problem.
In \cref{sec:lowerbounds}, we consider the possibility of improving this to a decision-problem separation.

\paragraph{Conventional Lower Bound Techniques}
In \cref{sec:lowerbounds-conventionalTechniques}, we discuss how common quantum lower bound techniques, the polynomial method
\cite{BBCMdW01-polynomialMethod}
and the unweighted adversary method \cite{Ambainis02adversaryMethod}, fail to prove the desired separation.
We show that under these techniques, any lower bound on the number of in-place queries implies the same lower bound on the number of \XOR{} queries,
making these techniques unable to prove a separation where \XOR{} oracles outperform in-place oracles.

\paragraph{A Candidate Decision Problem}
In \cref{sec:candidateProblem}, we introduce a new problem, \perminvgarb{},
which can be solved with $\bTnoparen(\sqrt{N})$ queries to an \XOR{} oracle
and which we conjecture requires $\bOmnoparen(N)$ queries to an in-place oracle.
As we discuss, the problem is designed to embed an injection from $[N^2]$ to $[N]$
into a bijection on $[N^2]$, which we believe circumvents algorithms using in-place oracles.
The idea behind this problem builds on the ``Simon's problem with garbage'' proposed by Aaronson \cite{Aaronson21-querySurvey}.

\paragraph{Techniques for a Decision-Problem Separation}
Finally, in \cref{sec:techAndCondForSeparation},
we briefly discuss the potential for more sophisticated lower bound methods
to prove a decision-problem separation,
including for our candidate \perminvgarb{}.
A full exposition is given in \cref{app:adv}.

The recent extension of the quantum adversary method by
Belovs and Yolcu \cite{Belovs2023onewayticketlasvegas}
applies to arbitrary linear transformations, including in-place oracles.
The adversary bound is an optimization problem
over \textit{adversary matrices}
such that the optimal value equals the quantum query complexity for a given problem.
Of course, the difficulty with the adversary method is to design a ``good'' adversary matrix
exhibiting a tight bound.

We introduce a special class of feasible solutions which we call \textit{extended adversary matrices}.
We show, with some technical caveats, that there exists an \XOR{} query advantage over in-place oracles
for a decision problem if and only if it is witnessed by extended adversary matrices.
Then, for our candidate problem \perminvgarb{},
we are able to remove these caveats and state that if our conjectured separation is true,
then it must be witnessed by extended adversary matrices.

\subsection{Open Problems}\label{intro-open}
Our work suggests several topics
for improving on our results or understanding in-place oracles.
Several of these questions can be asked regarding
either quantum or \emph{classical reversible} computing.
\begin{enumerate}
    \item Variants of Grover's algorithm are known for multiple targets.
        Applications of Grover's algorithm include
        amplitude amplification, approximate counting, more.
        Can these variations or applications be reproduced with in-place oracles?
    \item Can other primitives of quantum computation, such as phase estimation, be reproduced with in-place oracles?
    \item What is the query complexity of
        $\perminvgarb$?
        We conjecture that a linear number of queries to an in-place oracle is necessary, separating them from \XOR{} oracles.
        Can another decision problem give a separation?
        Aaronson \cite{Aaronson21-querySurvey} has suggested a variant of Simon's problem
        that he suggests may relate to the effects of decoherence.
    \item Is a version of in-place oracles querying injections and implemented by isometries,
        rather than bijections and unitaries, interesting?
        The adversary method of \cite{Belovs2023onewayticketlasvegas}
        seems to still apply in this case.
    \item Bennett \cite{Bennett89-reversible} found there are provable time/space tradeoffs for reversible computation.
        Perhaps similar separations can be proved
        for in-place versus \XOR{} oracles.
\end{enumerate}
\section{Quantum Oracles}\label{sec:oracles}
As stated previously, the standard query model in quantum computation and
classical reversible computation is the \XOR{} oracle. Other common models, such as the phase oracle, are equivalent.
For a function $f$, an \XOR{} oracle $\oracle{S}_f$ maps
$\ket{x}\ket{y}\mapsto \ket{x}\ket{y\oplus f(x)}$,
where $\oplus$ denotes bitwise \XOR{} with queries encoded in binary.

When querying an invertible function, there is another natural unitary query model.\footnote{
    We restrict our study to bijections, and without loss of generality to permutations on $[N]$.
    A similar oracle which queries an injection would still be reversible,
    but it would be an isometry rather than a unitary.
    Our algorithm seems to require a bijection since is uses the oracle's previous outputs as its next inputs.
    \label{footnote:injections}
}
An in-place oracle $\Ppi$ maps $\ket{x} \mapsto \ket{\pi(x)}$.

Here we list several basic identities given by \cite{Kashefi-KKVB02comparisonQuantumOracles}.

\begin{enumerate}
    \item Given query access to both $\pi$ and $\piinv$, standard and in-place oracles are equivalent.\\
    More precisely, $\Ppi$ can be simulated using 1 query to $\Spi$ and 1 query to either of $\Spiinv,\Ppiinv$.
    Similarly, $\Spi$ can be simulated using 1 query to $\Ppi$ and 1 query to either of $\Spiinv,\Ppiinv$.
    So, the interesting case is when we can query $\pi$ but cannot query its inverse.

    \item \XOR{} oracles are self-inverse, $\Spi = \paren{\Spi}^\dagger$, but generally $\paren{\Spi}^\dagger\neq\Spiinv$.\\
    In contrast, generally $\Ppi\neq\paren{\Ppi}^\dagger$ but it does hold that $\paren{\Ppi}^\dagger = \Ppiinv$.

    \item $\bTnoparen(\sqrt{N})$ queries to an \XOR{} oracle $\Spi$ can be used to simulate a query to $\Ppi$.\\
    The upper bound is due to Grover's search algorithm.
    The lower bound follows by
    observing that a circuit for $\Ppi$ querying $\Spi$ can be inverted to give a circuit for $\Ppiinv$
    querying $\paren{\Spi}^\dagger = \Spi$,
    which would solve \permInvers{}, which requires $\bOmnoparen(\sqrt{N})$ queries to $\Spi$.
\end{enumerate}

The \XOR{} query model was motivated by two needs.
First is the need to embed non-invertible functions in
a reversible query.
Second is that because \XOR{} oracles are self-inverse, they enable uncomputing.
An early criticism of reversible computation by Landauer \cite{Landauer61} was that in order to maintain reversibility,
a computation would need to retain intermediate work until the end,
only deferring the cost of information erasure instead of avoiding it.
To the contrary, Bennett showed that any circuit can efficiently be made into a reversible one that uncomputes any intermediate work and gives its original output in the form of an \XOR{} query \cite{Bennett73-logicalReversibility,Bennett82-thermodynamicsOfComputation}.
Given a garbage-producing reversible circuit, first apply the circuit,
then copy the desired output into a new register using \XOR{},
and then apply the circuit in reverse, gate-by-gate, to uncompute all intermediate steps,
leaving only the input and the copied output.
Moreover, such a gate-by-gate reversal works when one algorithm is used as a black-box subroutine for another,
since given a black-box following this \XOR{}-model,
it is self-inverse.
So full algorithms, including subroutines, can indeed be reversed gate-by-gate.
Besides these two reasons,
\XOR{} oracles simply appeared natural at the time quantum computing was formalized.
As far as we are aware, in-place oracles have not been studied in the classical reversible computing literature.
There have been just a few references to alternative classical reversible implementations of 1-to-1 functions
\cite{Peres85-reversible,Bennett89-reversible}.
So quantum computation, which is based on reversible operations,
later inherited the \XOR{} model.
At the same time,
the ability to uncompute enabled quantum interference
\cite{Feynman86-quantumComputers,BernsteinVazirani-BV97-quantumComplexityTheory}.
Many early results also only involved binary functions,
and other results were motivated more by ensuring quantum computers could implement
tasks such as error-reduction and subroutines ($\BQP^\BQP=\BQP$ \cite{BBBV97})
rather than questioning the query model.

One more important feature of \XOR{} oracles is that for a function $f$,
the complexity of implementing $\oracle{S}_f$
using reversible operations
is at most a constant multiplicative factor more than the
general, irreversible circuit implementing $f$ \cite{Bennett73-logicalReversibility}.
For in-place oracles,
no construction is known for efficiently transforming
an irreversible circuit for permutation $\pi$ into a reversible circuit for $\Ppi$.
In fact, the widely believed existence of one-way permutations implies that there exist
permutations for which this is impossible.
This is because given a reversible implementation of $\Ppi$,
inverting the circuit gate-by-gate gives $\paren{\Ppi}^\dagger = \Ppiinv$ with exactly the same circuit size,
whereas one-way permutations should have different complexities than their inverses.
This may limit the practical instantiation of in-place oracles,
although they may lead to useful insights in other ways.

\subsection{Controlled In-Place Oracle} \label{sec:controlledinplace}
Given an \XOR{} oracle $\oracle{S}_f$, it is easy to implement the controlled oracle $\cont\oracle{S}_f$:
query $\oracle{S}_f$, conditionally save the result, and then uncompute by querying $\oracle{S}_f$ again.
For general unitary oracles, access to controlled queries is non-trivial.
In fact, it is impossible to implement the controlled version of an unknown unitary given only black-box access \cite{AFCB14-controlledOperations,TMVG18-controlledOperations}.
This may prevent implementing standard algorithms like phase estimation \cite{CCGMSS24-phaseEstimation}.

Here we give \cref{result:contPi}, that it is possible to efficiently implement a controlled in-place oracle,
although we produce some query-independent garbage.
We give a unitary algorithm which implements the map
\[
    \ket{a}_\regA \ket{x}_\regB \ket{\pi(0)}_\regC \mapsto
    \begin{cases}
        \ket{a}\ket{x} \ket{\pi(0)} & \text{when } a = 0 \\
        \ket{a}\ket{\pi(x)} \ket{\pi(0)} & \text{when } a = 1
    \end{cases} ,
\]
performing a controlled in-place query to $\pi$ so long as $\regC$ contains $\pi(0)$.
Here, 0 could be replaced with any fixed input.
The $\regC$ register,
which depends on the permutation $\pi$ but not on the query $x$,
can be prepared ahead of time with a single query to $\Ppi$.
In particular, because $\pi(0)$ is independent of the query $x$, an algorithm could save $\pi(0)$ to a single global register which is shared by all controlled queries to $\Ppi$.
This register is never entangled with the quantum state, so it can be safely erased at any time.
The register could even be prepared by classical preprocessing querying
$\Ppi$ and modifying the quantum circuit as necessary to automatically write and uncompute $\pi\paren{0}$ as needed.

\begin{proof}[Proof of \cref{result:contPi}]
Let $\regA$, $\regB$, and $\regC$ contain the control qubit $a$, the query $x$, and $\pi(0)$, respectively. Initialize an auxiliary register $\regD$ to $\ket{0}$. First, conditioned on $a$ being $1$, swap registers $\regB$ and $\regD$. Second, apply $\Ppi$ to $\regD$. Third, conditioned on $a$ being $0$, \XOR{} the contents of $\regC$ into $\regD$ to clear the auxiliary register. Finally, conditioned on $a$ being $1$, swap back registers $\regB$ and $\regD$.
Omitting register $\regC$, which is fixed to $\ket{\pi(0)}$ throughout, these steps are illustrated below.
\begin{align*}
    \text{Swap\hspace{3.5em}Query\hspace{3.8em}\XOR{}\hspace{4.5em}Swap\hspace{8.5em}}
    \\
    \ket{a}_\regA \ket{x}_\regB \ket{0}_\regD
    \rightarrow
    \begin{cases}
        \ket{a}\ket{x}\ket{0} \rightarrow \ket{a}\ket{x}\ket{\pi(0)} \rightarrow \ket{a}\ket{x}\ket{0}\hspace{1.4em}\rightarrow \ket{a}\ket{x}\ket{0}
        & \text{when } a = 0
        \\
        \ket{a}\ket{0}\ket{x} \rightarrow \ket{a}\ket{0}\ket{\pi(x)} \rightarrow \ket{a}\ket{0}\ket{\pi(x)} \rightarrow \ket{a}\ket{\pi(x)}\ket{0}
        & \text{when } a = 1
    \end{cases}
\end{align*}
So, rather than query and conditionally copy as we can for the \XOR{} oracle, here we conditionally query either $x$ or some fixed value.
\end{proof}
\section{Permutation Inversion}\label{sec:permInversion}
In this section, we prove our main result, that \permInvers{} (\cref{def:permInvers}) can be solved with $\bTnoparen(\sqrt{N})$ queries to an in-place oracle.

\begin{proof}[Proof of \cref{thm:permInvers}]
    The lower bound was proved by Fefferman and Kimmel \cite{Fefferman-FK18-QCMA} and later reproved by \cite{Belovs2023onewayticketlasvegas}.
    To prove the upper bound, we give an algorithm.

    \paragraph{Algorithm}
    For convenience, we assume $N = 2^n$ and identify the integers $[N]$ by their binary representations in $\{0,1\}^n$. We denote the target element $\pi^{-1}(0)$ by $x^*$.

    First, query $\Ppi$ once to check whether $\pi(0)$ is 0, and terminate early with answer 0 if it is.
    Otherwise,
    initialize three registers to the state
    $\ket{\psi_0} := \frac{1}{\sqrt{N}}\sum_{i=1}^N \ket{i}_{\regA} \ket{0^n}_{\regB} \ket{0}_{\regC}$,
    where $\regA$ and $\regB$ are each $n=\log{N}$ qubits and $\regC$ is one qubit.
    Then, repeat the following steps $T=\bOnoparen(\sqrt{N})$ times:
    \begin{enumerate}
        \item[1.] \textit{Mark} \\
        \XOR{} register $\regA$ into $\regB$, and apply $\Ppi$ to $\regB$. \\
        Controlled on $\regB$ being $\ket{0^n}$, apply $\NOT$ to $\regC$, flagging the branch where $\regA$ contains $x^*$.

        \item[2.] \textit{Shift (and Clean Up)} \\
        Controlled on $\regC$ being $\ket{0}$, apply $\Ppi$ to $\regA$. \\
        Controlled on $\regC$ being $\ket{0}$, \XOR{} $\regA$ into $\regB$, resetting $\regB$ to $\ket{0^n}$.

        \item[3.] \textit{Diffuse the Difference} \\
        Controlled on $\regC$ being $\ket{-}$, apply the diffusion operator to $\regA$.\\
        The diffusion operator $\diffusion{} \coloneqq 2 H^{\otimes n} \ketbra{0^n}{0^n} H^{\otimes n} - I$ is the same used in Grover's algorithm \cite{Grover-G96-search},
        equivalent to a reflection about the uniform superposition.

        \item[4.] Optional: \textit{Measure} \\
        Measure $\regC$. If $\ket{1}$ is observed then abort and report failure.
    \end{enumerate}
    Finally, measure register $\regA$ and output the result.
    See \cref{fig:PermInv qckt} for a circuit diagram
    of one iteration of the algorithm
    and \cref{fig:algFigure} for an illustration of the effect.

    Below, we will find that each \textit{Measure} step aborts with probability $1/N$.
    So, these intermediate measurements could be omitted and the qubit reused as it is,
    and the quantum union bound \cite{Gao15unionBound,ODonnell-OV22quantumunionbound}
    implies the overall success probability would decrease by at most $\sqrt{T/N} = \bO{N^{-1/4}}$. For now, we include the optional \textit{Measure} step to simplify the analysis.

    \begin{figure}[h]
    \centering
    \[
    \Qcircuit @C=.5em @R=0.5em @!R {
         & & & & & & \mbox{\hspace{1.8em}\textit{Mark}} & & & & & & & & \mbox{\textit{Shift}} & & & & & & & & \mbox{\textit{Diffuse the Difference}} \\
    	& \lstick{} & \qw       & \push{\rule{0em}{1em}} \qw & \ctrl{4}  & \qw       & \qw       & \qw       & \qw       & \push{\rule{0em}{1em}} \qw & \push{\rule{0em}{1em}} \qw & \push{\rule{0em}{1em}} \qw & \push{\rule{0em}{1em}} \qw & \multigate{2}{\Ppi} & \ctrl{3}  & \qw       & \qw       & \push{\rule{0em}{1em}} \qw & \push{\rule{0em}{1em}} \qw & \push{\rule{0em}{1em}} \qw & \push{\rule{0em}{1em}} \qw & \qw       & \multigate{2}{\diffusion} & \qw       & \push{\rule{0em}{1em}} \qw & \push{\rule{0em}{1em}} \qw & \qw       & \qw       & \push{\rule{0em}{1em}}\\
    	& \lstick{} & \qw       & \push{\rule{0em}{1em}} \qw & \qw       & \ctrl{4}  & \qw       & \qw       & \qw       & \push{\rule{0em}{1em}} \qw & \push{\rule{0em}{1em}} \qw & \push{\rule{0em}{1em}} \qw & \push{\rule{0em}{1em}} \qw & \ghost{\Ppi} & \qw       & \ctrl{2}  & \qw       & \push{\rule{0em}{1em}} \qw & \push{\rule{0em}{1em}} \qw & \push{\rule{0em}{1em}} \qw & \push{\rule{0em}{1em}} \qw & \qw       & \ghost{\diffusion} & \qw       & \push{\rule{0em}{1em}} \qw & \push{\rule{0em}{1em}} \qw & \qw       & \qw       & \rstick{\ket{\psi_{t}}_\regA} \push{\rule{0em}{1em}}\\
    	& \lstick{} & \qw       & \push{\rule{0em}{1em}} \qw & \qw       & \qw       & \ctrl{4}  & \qw       & \qw       & \push{\rule{0em}{1em}} \qw & \push{\rule{0em}{1em}} \qw & \push{\rule{0em}{1em}} \qw & \push{\rule{0em}{1em}} \qw & \ghost{\Ppi} & \qw       & \qw       & \ctrl{1}  & \push{\rule{0em}{1em}} \qw & \push{\rule{0em}{1em}} \qw & \push{\rule{0em}{1em}} \qw & \push{\rule{0em}{1em}} \qw & \qw       & \ghost{\diffusion} & \qw       & \push{\rule{0em}{1em}} \qw & \push{\rule{0em}{1em}} \qw & \qw       & \qw       & \push{\rule{0em}{1em}}
    \inputgroupv{2}{4}{0.5em}{2em}{\ket{\psi_{t-1}}_\regA \hspace{1.3em}} \\
    	& \lstick{\ket{0}_\regC \Big\{} & \qw       & \push{\rule{0em}{1em}} \qw & \qw       & \qw       & \qw       & \qw       & \targ     & \push{\rule{0em}{1em}} \qw & \push{\rule{0em}{1em}} \qw & \push{\rule{0em}{1em}} \qw & \push{\rule{0em}{1em}} \qw & \ctrlo{-1} & \ctrlo{1} & \ctrlo{2} & \ctrlo{3} & \push{\rule{0em}{1em}} \qw & \push{\rule{0em}{1em}} \qw & \push{\rule{0em}{1em}} \qw & \push{\rule{0em}{1em}} \qw & \gate{H}  & \ctrl{-1} & \gate{H}  & \push{\rule{0em}{1em}} \qw & \push{\rule{0em}{1em}} \qw & \meter    & \cw       & \rstick{\text{\textit{Measure} }\regC} \\
    	& \lstick{} & \qw       & \push{\rule{0em}{1em}} \qw & \targ     & \qw       & \qw       & \multigate{2}{\Ppi} & \ctrlo{-1} & \push{\rule{0em}{1em}} \qw & \push{\rule{0em}{1em}} \qw & \push{\rule{0em}{1em}} \qw & \push{\rule{0em}{1em}} \qw & \qw       & \targ     & \qw       & \qw       & \push{\rule{0em}{1em}} \qw & \push{\rule{0em}{1em}} \qw & \push{\rule{0em}{1em}} \qw & \push{\rule{0em}{1em}}\\
    	& \lstick{} & \qw       & \push{\rule{0em}{1em}} \qw & \qw       & \targ     & \qw       & \ghost{\Ppi} & \ctrlo{-1} & \push{\rule{0em}{1em}} \qw & \push{\rule{0em}{1em}} \qw & \push{\rule{0em}{1em}} \qw & \push{\rule{0em}{1em}} \qw & \qw       & \qw       & \targ     & \qw       & \push{\rule{0em}{1em}} \qw & \push{\rule{0em}{1em}} \qw & \push{\rule{0em}{1em}} \qw & \rstick{\ket{0^n}_\regB} \push{\rule{0em}{1em}}\\
    	& \lstick{} & \qw       & \push{\rule{0em}{1em}} \qw & \qw       & \qw       & \targ     & \ghost{\Ppi} & \ctrlo{-1} & \push{\rule{0em}{1em}} \qw & \push{\rule{0em}{1em}} \qw & \push{\rule{0em}{1em}} \qw & \push{\rule{0em}{1em}} \qw & \qw       & \qw       & \qw       & \targ     & \push{\rule{0em}{1em}} \qw & \push{\rule{0em}{1em}} \qw & \push{\rule{0em}{1em}} \qw & \push{\rule{0em}{1em}}
    \inputgroupv{6}{8}{0.5em}{2em}{\ket{0^n}_\regB \hspace{0.5em}}
    \gategroup{2}{29}{4}{29}{0em}{\}}
    \gategroup{6}{21}{8}{21}{0em}{\}}
    \gategroup{2}{4}{8}{10}{0.7em}{--}
    \gategroup{2}{13}{8}{18}{0.7em}{--}
    \gategroup{2}{21}{5}{25}{0.7em}{--}\\
    }
    \]
    \caption{
        One iteration of our \permInvers{} algorithm.
        $\diffusion$ is the standard diffusion operator. $\bullet$ denotes an operation controlled on $\ket{1}$ and $\circ$ denotes an operation controlled on $\ket{0}$.
    }
    \label{fig:PermInv qckt}
    \end{figure}
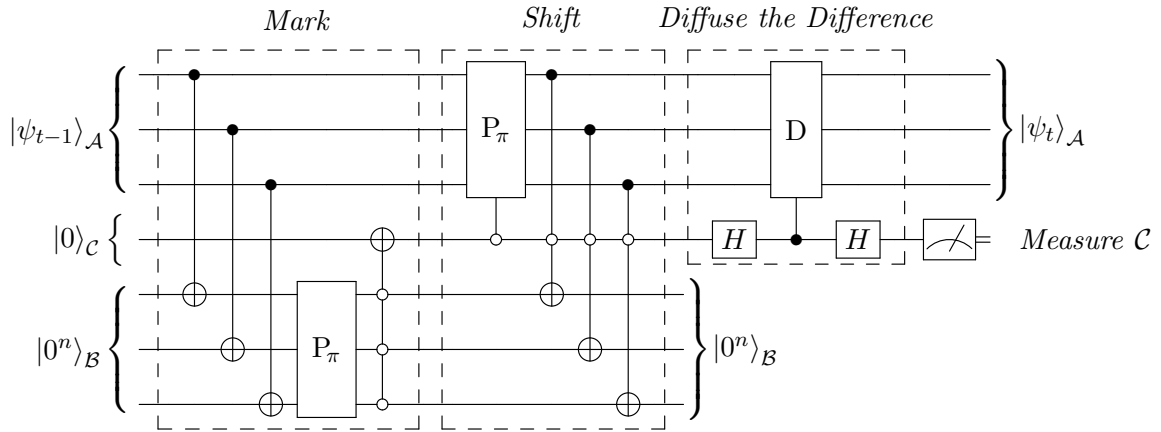

    \paragraph{Analysis}
    Now we prove that our algorithm succeeds with high probability.

    We use $\ket{\psi_t}$ to denote the state after $t$ iterations.
    We will show by induction that after each iteration, if the algorithm did not terminate early,
    then the state is of the form
    \begin{equation}\label{eq:psitForm}
        \ket{\psi_t} =  \left(\alpha_{t} \ket{x^*} + \sum_{i \in [N] \setminus \set{x^*}} \beta_{t} \ket{i}\right)_{\regA} \otimes \ket{0^n}_{\regB} \ket{0}_{\regC}
    \end{equation}
    for some real values $\alpha_t,\beta_t$. In particular, all $\ket{i}$ for $i\neq x^*$ share the same amplitude.
    The transformation from $\ket{\psi_{t-1}}$ to $\ket{\psi_t}$ is illustrated in \cref{fig:algFigure}.

    By construction, the initial state $\ket{\psi_0}$ is the uniform superposition,
    with $\alpha_0=\beta_0=\frac{1}{\sqrt{N}}$.

    Next, the $t$-th iteration begins with the state
    \[
        \ket{\psi_{t-1}} =  \left( \alpha_{t-1} \ket{x^*} + \sum_{i \in [N] \setminus \set{x^*}} \beta_{t-1} \ket{i} \right) \ket{0^n}\ket{0} .
    \]
    For ease of notation, we will drop the subscripts so that $\alpha,\beta$ implicitly refer to $\alpha_{t-1},\beta_{t-1}$.
    After the \textit{Mark} step, the state will be
    \[
        \ket{\psi'_{t-1}} = \alpha \ket{x^*} \ket{0^n} \ket{1}  + \sum_{i \in [N] \setminus \set{x^*}} \beta \ket{i} \ket{\pi(i)} \ket{0} .
    \]

    After the \textit{Shift (and Clean Up)} step, the state will be
    \begin{alignat*}{2}
        \ket{\psi''_{t-1}}
        & = \alpha \ket{x^*} \ket{0^n} \ket{1} + \sum_{i \in [N] \setminus \set{x^*}} && \beta \ket{\pi(i)} \ket{0^n} \ket{0}
        \\
        & = \alpha \ket{x^*} \ket{0^n} \ket{1} + \sum_{i \in [N] \setminus \set{0}} && \beta \ket{i} \ket{0^n} \ket{0} .
    \end{alignat*}
    As the name suggests, this step shifts amplitudes within the summation off of $\ket{0}$ and onto $\ket{x^*}$.

    Next, to prepare for the \textit{Diffuse the Difference} step, we rewrite register $\regC$ in the Hadamard basis.
    The state is equivalent to
    \begin{align*}
        \ket{\psi''_{t-1}} & = \phantom{+}
        \frac{1}{\sqrt{2}} \left[ \left(\beta+\alpha\right) \ket{x^*} + \sum_{i \in [N] \setminus \set{0,x^*}} \beta \ket{i} \right] \ket{0^n} \ket{+}
        \\
        & \phantom{=} +
        \frac{1}{\sqrt{2}} \left[ \left(\beta-\alpha\right) \ket{x^*} + \sum_{i \in [N] \setminus \set{0,x^*}} \beta \ket{i} \right] \ket{0^n} \ket{-} .
    \end{align*}
    Next, the \textit{Diffuse the Difference} step applies the diffusion operator \diffusion{} controlled on $\regC$ being $\ket{-}$.
    The diffusion operator can be viewed as reflecting every amplitude about the average amplitude.
    This results in
    \begin{gather*}
        \ket{\psi'''_{t-1}} =
        \frac{1}{\sqrt{2}} \Biggl[\left(\beta+\alpha\right) \ket{x^*} + \sum_{i \in [N] \setminus \set{0,x^*}} \beta \ket{i} \Biggr] \ket{0^n} \ket{+}
        \\
        + \frac{1}{\sqrt{2}} \Biggl[
            \left(\beta + \alpha - \frac{2 (\beta+\alpha) }{N}\right) \ket{x^*}
            + \left(2 \beta - \frac{2 (\beta+\alpha) }{N}\right) \ket{0}
            + \sum_{i \in [N] \setminus \{0,x^*\}} \left(\beta - \frac{2 (\beta+\alpha) }{N}\right) \ket{i}
        \Biggr]\!\ket{0^n}\!\ket{-} .
    \end{gather*}
    Returning register $\regC$ to the standard basis, we see
    \begin{align*}
        \ket{\psi'''_{t-1}} & =
        \left[ \left(\beta + \alpha - \frac{\beta+\alpha}{N}\right)\ket{x^*}
        + \sum_{i \in [N] \setminus \set{x^*}} \left(\beta - \frac{\beta+\alpha}{N}\right) \ket{i} \right] \ket{0^n}\ket{0}
        \\
        & + \left[
            \frac{\beta+\alpha}{N}\ket{x^*}
            - \left(\beta - \tfrac{\beta+\alpha}{N}\right)\ket{0}
            + \sum_{i \in [N] \setminus \set{0,x^*}} \frac{\beta+\alpha}{N} \ket{i}
        \right] \ket{0^n}\ket{1} .
    \end{align*}
   	The amplitude on $\ket{x^*}$ is now larger than the original amplitude $\alpha$.

    \begin{figure}[!b]
    	\centering
    	\scalebox{0.88}{\input{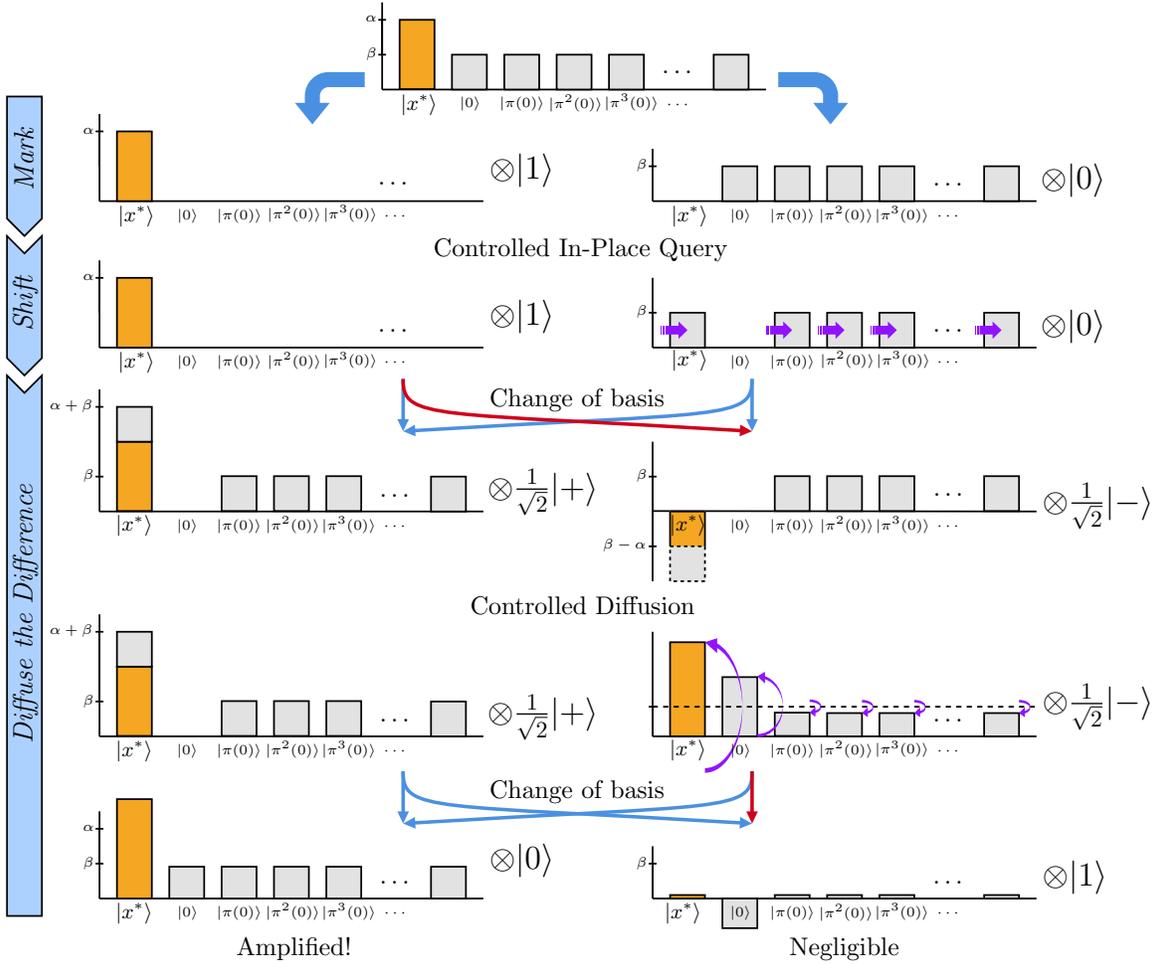}}
    	\caption{
    		(Color)
    		Illustration of how amplitudes change in each iteration of the algorithm.
    		Register $\regB$ is left implicit (note it is unentangled with $\regA$ and $\regC$ by the end of the \textit{Shift} step).
    		Each iteration begins with the nearly uniform superposition from \cref{eq:psitForm}.
    		The \textit{Mark} step queries $\pi$ and creates a marked branch and an unmarked branch, illustrated in two columns.
    		The \textit{Shift} step makes a query in only the unmarked branch,
    		shifting amplitude onto $\ket{x^*}$.
    		The \textit{Diffuse the Difference} step is controlled on $\ket{-}$,
    		so we first rewrite the basis of $\regC$, rearranging amplitudes accordingly.
    		Red and blue arrows indicate positive and negative contributions.
    		The diffusion operator reflects all amplitudes about their mean.
    		A final change of basis leaves a state almost entirely entangled with $\ket{0}$ and with
    		increased amplitude on $x^*$.
    	}
    	\label{fig:algFigure}
    \end{figure}

    \clearpage

    Finally, for the sake of analysis, we choose to measure $\regC$ and abort if $\ket{1}$ is observed.
    We will handle the failure case later.
    For now, we postselect on having observed $\ket{0}$.
    This results in the final (normalized) state
    \begin{align*}
        \ket{\psi_t} &=
        \sqrt{\frac{N}{N-1}} \left[ \left(\beta + \alpha - \frac{\beta+\alpha}{N}\right)\ket{x^*}
        + \sum_{i \in [N] \setminus \set{x^*}} \left(\beta - \frac{\beta+\alpha}{N}\right) \ket{i} \right]
        \ket{0^n}\ket{0}
        \\
        & = \left[
        \sqrt{\frac{N-1}{N}} \left(\beta+\alpha\right) \ket{x^*}
        + \sum_{i \in [N] \setminus \set{x^*}} \left(  \sqrt{\frac{N-1}{N}} \beta - \frac{1}{\sqrt{N}\sqrt{N-1}} \alpha \right) \ket{i}
        \right] \ket{0^n}\ket{0}.
    \end{align*}
    at the end of the $t$-th iteration.
    This state has the form we claimed, with
    \[
        \alpha_{t} = \sqrt{\frac{N-1}{N}} \left(\beta_{t-1}+\alpha_{t-1}\right)
        \quad\text{and}\quad
        \beta_{t} = \sqrt{\frac{N-1}{N}} \beta_{t-1} - \frac{1}{\sqrt{N}\sqrt{N-1}} \alpha_{t-1},
    \]
    concluding our induction.

    The above recurrence lets us write a closed form for $\alpha_t$ and $\beta_t$:
    \begin{align*}
        \begin{bmatrix}
            \alpha_t \\ \beta_t
        \end{bmatrix}
        =
        \begin{bmatrix}
            \sqrt{\frac{N-1}{N}} & \sqrt{\frac{N-1}{N}} \\
            \frac{-1}{\sqrt{N}\sqrt{N-1}} & \sqrt{\frac{N-1}{N}}
        \end{bmatrix}
        \begin{bmatrix}
            \alpha_{t-1} \\ \beta_{t-1}
        \end{bmatrix}
        =
        \begin{bmatrix}
            \sqrt{\frac{N-1}{N}} & \sqrt{\frac{N-1}{N}} \\
            \frac{-1}{\sqrt{N}\sqrt{N-1}} & \sqrt{\frac{N-1}{N}}
        \end{bmatrix}^t
        \begin{bmatrix}
            \frac{1}{\sqrt{N}} \\ \frac{1}{\sqrt{N}}
        \end{bmatrix} .
    \end{align*}
    For a diagonalizable matrix $M=ADA^{-1}$, we know $M^t = AD^tA^{-1}$, so
    we can diagonalize the above matrix to find
    \[
        \alpha_t = \frac{1}{\sqrt{N^{t+1}}} \frac{1}{2i} \left[\left(\sqrt{N-1} + i\right)^{t+1} - \left(\sqrt{N-1} - i\right)^{t+1}\right] .
    \]
    Rewriting the expression in polar form, this is equivalent to
    \[
        \alpha_t = \frac{1}{\sqrt{N^{t+1}}} \frac{1}{2i} \left[\sqrt{N^{t+1}} e^{i\paren{t+1}\theta} - \sqrt{N^{t+1}} e^{-i\paren{t+1}\theta}\right] \quad\text{for}\quad \theta=\arctan{\paren{\frac{1}{\sqrt{N-1}}}} .
    \]
    Finally, the identity $\frac{z-\overline{z}}{2i} = \operatorname{Im}\paren{z} = \sin{\paren{\phi}}$ for $z = e^{i \phi}$
    yields
    \[
        \alpha_t = \sin{\left[ \paren{t+1} \arctan{\left(\frac{1}{\sqrt{N-1}}\right)} \right]} .
    \]

    We want to find the value of $t$ that maximizes $\alpha_t$.
    Setting
    \[
        t^*=\frac{\frac{\pi}{2}}{\arctan{\paren{\frac{1}{\sqrt{N-1}}}}}-1
    \]
    achieves $\alpha_{t^*}=1$.
    The series expansion of this formula shows $t^*$ is asymptotically $\frac{\pi}{2}\sqrt{N} + \bO{1}$, as desired.
    However, $t$ must be an integer, so we set the number of iterations to $T=\floor{t^*}$.
    Observe that
    $\sin{\paren{x}}$ increases as $x$ approaches $\frac{\pi}{2}$,
    so it is sufficient to lower bound $\alpha_{t^*-1} \le \alpha_{T}$.
    Substituting and then simplifying, we find
    \[
        \alpha_{t^*-1}
        = \sin{\left[\frac{\pi}{2} -  \arctan{\left(\frac{1}{\sqrt{N-1}}\right)} \right]}
        = \cos{\left[\arctan{\left(\frac{1}{\sqrt{N-1}}\right)}\right]}
        = \sqrt{1 - \frac{1}{N}} .
    \]
    So, given the algorithm never terminates early, it outputs $\ket{x^*}$ with probability at least $\abs{\alpha_T}^2 \geq 1-1/N$.

    Finally, we handle the possibility of the algorithm terminating early.
    In each iteration, given $\ket{\psi'''_{t-1}}$, the probability of measuring $\ket{1}$ is
    $\frac{\alpha^2 + \paren{N-1}\beta^2}{N} = \frac{1}{N}$.
    Therefore, in $T=\bOnoparen(\sqrt{N})$ iterations, the probability of aborting is at most a negligible $T/N=\bO{1/\sqrt{N}}$.

    Overall, we have that our algorithm aborts with probability at most $\bO{1/\sqrt{N}}$,
    while if it does not abort, then it fails to find $\ket{x^*}$ with probability at most $\bO{1/N}$.
    We conclude that with $T= \frac{\pi}{2}\sqrt{N}+\bO{1}$ queries to $\Ppi$, we can solve \permInvers{} with probability $1 - \bO{1/\sqrt{N}}$.
\end{proof}

\section{Simulating Other Oracles}\label{sec:otherOracles}
This section tightly characterizes the relationship between \XOR{} oracles and in-place oracles.

Using Grover's algorithm, $\bOnoparen(\sqrt{N})$ queries to an \XOR{} oracle $\Spi$ are sufficient to simulate the \XOR{} oracle for the inverse, $\Spiinv$.
In fact, using a zero-error variant of Grover's algorithm \cite{Hoyer2000-exactGrover}, this simulation is exact.
Then, since one query to each of $\Spi$ and $\Spiinv$ are sufficient to implement
either the in-place oracle $\Ppi$ or the in-place oracle for the inverse $\Ppiinv$ (see \cref{sec:oracles}),
it follows that $\bOnoparen(\sqrt{N})$ queries to the \XOR{} oracle $\Spi$ are sufficient to exactly implement any of the oracles $\Spiinv,\Ppi$, or $\Ppiinv$.
Moreover, these algorithms are optimal \cite{Kashefi-KKVB02comparisonQuantumOracles}.

Here, using our new algorithm for \permInvers{} from \cref{sec:permInversion},
we give analogous results for in-place oracles.
We show that $\bOnoparen(\sqrt{N})$ queries to an in-place oracle $\Ppi$ are sufficient to approximately simulate queries to $\Ppiinv$, $\Spi$, or $\Spiinv$.
Our simulations are approximate only because our algorithm succeeds with probability inverse-exponentially close to 1 but not zero-error.
Still, these approximate constructions are more than sufficient to emulate any $\BQP{}$ algorithm querying $\Ppiinv,\Spi,$ or $\Spiinv$.
Then, we prove that our simulations are optimal.
This establishes the first task known to require
fewer queries to an \XOR{} oracle than an in-place oracle, namely the approximate unitary-implementation problem of simulating an \XOR{} oracle.

The query complexity of unitary-implementation problems,
such as the task of simulating the unitary $\Spiinv$,
has been formalized by Belovs \cite{Belovs15variationsOnQuantum},
and we discuss it further in \cref{sec:subspace}.
We emphasize that we indeed give a unitary construction,
rather than incoherently solving the problem on basis states.
Although our implementation of a controlled in-place query from \cref{sec:controlledinplace} introduces a query-independent garbage register $\ket{\pi(0)}$,
we will show how to use another application of the $\permInvers{}$ algorithm to erase it.

Most of our work is a construction for $\Spiinv$ in \cref{lem:PermImplementSinv}.
Then, we get $\Spi$ in \cref{thm:PermImplementXor_upper}, and finally $\Ppiinv$ in \cref{corr:PermImplementPerminv}.
Later, we show matching lower bounds for these results.

\begin{lemma}\label{lem:PermImplementSinv}
    For a permutation $\pi$ on $[N]$, $\bOnoparen(\sqrt{N})$ queries to
    the in-place oracle $\Ppi$ are sufficient to approximately implement
    the \XOR{} oracle for the inverse function, $\Spiinv$.
\end{lemma}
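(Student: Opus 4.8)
The plan is to simulate $\Spiinv$ by combining the \permInvers{} algorithm from \cref{thm:permInvers} with a standard trick for converting a ``search'' subroutine into an ``\XOR{} write'' operation. An $\Spiinv$ query maps $\ket{x}\ket{y}\mapsto\ket{x}\ket{y\oplus\pi^{-1}(x)}$, so the core task is: given $\ket{x}$ in an input register and a zeroed target register, write $\pi^{-1}(x)$ reversibly into the target. The \permInvers{} algorithm does essentially this when $x=0$ (it produces $\ket{\pi^{-1}(0)}$ with amplitude inverse-exponentially close to $1$ in a dedicated register, with all other registers returned to $\ket{0}$). To handle an arbitrary $x$, I would first relabel the oracle: querying $\Ppi$ on $\ket{z\oplus x}$ and then \XOR-ing $x$ back into the input register implements an in-place oracle for the permutation $\pi_x(z):=\pi(z)\oplus x$ using one query to $\Ppi$ (here $\oplus$ is on the index register, exploiting that $[N]=\{0,1\}^n$ and that $z\mapsto z\oplus x$ is a permutation). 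Then $\pi_x^{-1}(0)=\pi^{-1}(x)$, so running the \permInvers{} algorithm with $\pi_x$ in place of $\pi$ and with the input register $\ket{x}$ held as an ancilla-controlled offset yields $\ket{\pi^{-1}(x)}$.

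\medskip

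The key steps, in order: (1) Show that one $\Ppi$ query plus \CNOT s implements $\oracle{P}_{\pi_x}$ coherently, i.e.\ controlled on the input register $\ket{x}$, so that the whole construction is a single unitary on input-plus-workspace. (2) Run the (controlled) \permInvers{} algorithm for $\pi_x$ to produce, up to inverse-exponential error $\delta$, the state $\ket{x}\ket{\pi^{-1}(x)}\ket{0}_{\text{work}}$ where the middle register held $\ket{0^n}$ at the start; this costs $\bOnoparen(\sqrt N)$ queries. (3) \XOR{} the middle register into the true target register $\ket{y}$, obtaining $\ket{x}\ket{\pi^{-1}(x)}\ket{y\oplus\pi^{-1}(x)}\ket{0}_{\text{work}}$. (4) Uncompute the middle register: run the \permInvers{} algorithm \emph{in reverse}. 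Crucially, the reversed algorithm still only makes forward queries to $\oracle{P}_{\pi_x}$, because reversing the circuit turns each $\oracle{P}_{\pi_x}$ gate into $\oracle{P}_{\pi_x}^\dagger=\oracle{P}_{\pi_x^{-1}}$ --- but wait, that would require querying the inverse, which we cannot do. This is the main obstacle (see below); the fix is to not literally reverse the algorithm but instead to re-run a \emph{fresh forward} instance of \permInvers{} and subtract: since \permInvers{} deterministically (up to $\delta$) computes $\ket{0}\mapsto\ket{\pi^{-1}(x)}$, running it again and \XOR-ing removes the middle register's contents, returning it to $\ket{0^n}$, using another $\bOnoparen(\sqrt N)$ forward queries. (5) Account for garbage: the controlled-in-place construction of \cref{result:contPi} leaves a query-independent register $\ket{\pi_x(0)}=\ket{\pi(0)\oplus x}$; this is entangled with the input $\ket{x}$, so it is \emph{not} safely discardable. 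I would handle it by noting $\pi_x(0)\oplus x=\pi(0)$ is actually independent of $x$, so a single global $\ket{\pi(0)}$ register (prepared once with one $\Ppi$ query) suffices and stays unentangled — or, failing that clean identity, uncompute it by the same re-run-and-subtract idea. (6) Bound the total error by $\bOnoparen(\delta)=2^{-\Omega(N)}$ via the triangle inequality over the $\bOnoparen(1)$ invocations of \permInvers{}, each with error $\bOnoparen(1/\sqrt N)$ in the abort probability and $\bOnoparen(1/N)$ in the output — so I should actually invoke the \emph{exact}-style cleanup or amplify \permInvers{} to error $2^{-n}$ first, which costs only a constant factor more queries.

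\medskip

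The main obstacle is step (4)/(5): the inability to uncompute with an in-place oracle is exactly the difficulty this whole section exists to overcome, and it is what makes the construction ``non-trivial'' as the paper warns. The naive ``apply, copy, un-apply'' Bennett trick is unavailable because un-applying $\Ppi$ means applying $\Ppiinv$. The resolution I would push on is that \permInvers{}, unlike a generic reversible subroutine, is (approximately) a \emph{deterministic function computer} $\ket{0^n}\ket{0}_{\text{work}}\mapsto\ket{\pi^{-1}(x)}\ket{0}_{\text{work}}$ once we condition on not aborting; a deterministic map can be uncomputed by re-running it forward and taking the symmetric difference, \emph{provided} its workspace is returned clean, which the analysis in \cref{thm:permInvers} (the ``Clean Up'' in the \textit{Shift} step and the $\ket{0^n}_\regB\ket{0}_\regC$ form of $\ket{\psi_t}$) gives us. The abort branch must be dealt with carefully — I would either fold the abort into the $\delta$ error budget (quantum union bound, as in \cref{thm:permInvers}) or use a coherent, non-measuring version of the algorithm so that the ``abort'' amplitude is simply small and contributes to $\delta$. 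A secondary subtlety: the controlled-diffusion and controlled-$\Ppi$ steps must genuinely be controlled on the external input register $\ket{x}$ without decohering it; since $\oracle{P}_{\pi_x}$ is built from $\Ppi$ by conjugating with the $x$-controlled permutation $z\mapsto z\oplus x$, and all other gates of \permInvers{} are $x$-independent, the $\ket{x}$ register only ever acts as a classical control and is never measured, so coherence is preserved. Once these pieces are in place, $\Spiinv$ is implemented with $\bOnoparen(\sqrt N)$ in-place queries up to inverse-exponential error, as claimed.
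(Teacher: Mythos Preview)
Your proposal correctly identifies the obstacle---that reversing the \permInvers{} circuit gate-by-gate would require $\Ppiinv$---but your proposed fix in step (4) does not work. ``Re-running a fresh forward instance and subtracting'' does not uncompute anything: if the (unitary) \permInvers{} routine $U$ maps $\ket{0}\to\ket{\pi^{-1}(x)}$ in some register, then applying $U$ to a fresh $\ket{0}$ register produces a second copy of $\ket{\pi^{-1}(x)}$, and \XOR-ing one copy into the other zeros out one register but leaves the other still holding $\ket{\pi^{-1}(x)}$ together with $U$'s workspace. You have merely relocated the garbage; to get rid of it you would need $U^\dagger$, which again requires $\Ppiinv$. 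The intuition that ``a deterministic map can be uncomputed by re-running it forward'' is only valid when the map is self-inverse (as \XOR{} oracles are), which $U$ is not.

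The paper's actual fix is a one-query trick you are missing: once you have $\ket{x}\ket{y}\ket{\pi^{-1}(x)}$, \XOR{} the last register into the second to get $\ket{x}\ket{y\oplus\pi^{-1}(x)}\ket{\pi^{-1}(x)}$, then apply $\Ppi$ \emph{forward} to the last register, sending $\ket{\pi^{-1}(x)}\mapsto\ket{x}$, and finally \XOR{} the first register into the last to clear it. No reversal and no second \permInvers{} run is needed. A similar issue affects your step (5): even granting your correct observation that a single global $\ket{\pi(0)}$ catalyst suffices (since $\pi_x(0)\oplus x=\pi(0)$), this register depends on $\pi$ and must be erased for a clean unitary implementation of $\Spiinv$; the ``re-run-and-subtract'' you fall back on fails for the same reason as above. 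The paper erases $\ket{\pi(0)}$ by a separate construction: invert $A_\pi$ gate-by-gate and replace each $\Ppiinv$ with $\Ppi$ to obtain a circuit $B_\pi$ satisfying $B_\pi\ket{x}\ket{\pi(x)}\ket{\pi^{-1}(0)}\approx\ket{x}\ket{0}\ket{\pi^{-1}(0)}$, and then chains $A_\pi$, a swap, $B_\pi$, and one more $\Ppi$ to reduce $\ket{\pi(0)}$ to $\ket{0}$.
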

\begin{proof}
    We will use our algorithm for \permInvers{} to construct a unitary that maps the state $\ket{x}\ket{y}$ inverse-exponentially close to the state $\ket{x}\ket{y \oplus \pi^{-1}(x)}$ for all $x,y \in [N]$.
    By linearity, this unitary will approximate the unitary $\Spi$.

    First, we modify our \permInvers{} algorithm.
    It is straightforward to modify the algorithm to invert $\pi$ on an arbitrary input instead of $0$.
    Next, by the quantum union bound \cite{Gao15unionBound, ODonnell-OV22quantumunionbound},
    omitting the intermediate measurements of the optional \textit{Measure} step
    results in a final state with trace distance at most $\bO{N^{-1/4}}$ from the desired output.
    With these modifications, and recalling that our construction of $\cont\Ppi$ from \cref{sec:controlledinplace} requires a catalyst $\ket{\pi(0)}$,
    our algorithm becomes a unitary $A_\pi$ that approximately maps
    \begin{equation}\label{eq:Api}
           A_\pi \ket{x}\ket{0}_{\regA}\ket{0}_{\regB}\ket{0}_{\regC}\ket{\pi(0)} \approx \ket{x}\ket{\pi^{-1}(x)}_{\regA}\ket{0}_{\regB}\ket{0}_{\regC}\ket{\pi(0)}.
    \end{equation}
    Henceforth we omit the auxiliary registers $\regB$ and $\regC$.

    Now we show how to erase a register containing $\ket{\pi(0)}$ using $\bOnoparen(\sqrt{N})$ in-place queries to $\Ppi$.
    A gate-by-gate inversion of the circuit for $A_\pi$ performs \cref{eq:Api} in the reverse direction but makes queries to $\paren{\Ppi}^{-1} = \Ppiinv$.
    If we replace these queries to $\Ppiinv$ with queries to $\Ppi$, we get a circuit that we label $B_\pi$ which makes $\bOnoparen(\sqrt{N})$ queries to $\Ppi$ such that
    \begin{align}\label{eq:Bpi}
        B_\pi \ket{x}\ket{\pi(x)}\ket{\pi^{-1}(0)} \approx \ket{x}\ket{0}\ket{\pi^{-1}(0)}
    \end{align}
    with the same approximation as in \cref{eq:Api}.
    Starting with $\ket{0}\ket{0}\ket{\pi(0)}$,
    we can combine these circuits to approximate the following steps:
    \begin{align*}
        \ket{0}\ket{0}\ket{\pi(0)}
        \xrightarrow{A_\pi} \ket{0}\ket{\pi^{-1}(0)}\ket{\pi(0)}
        \xrightarrow{\SWAP} \ket{0}\ket{\pi(0)}\ket{\pi^{-1}(0)}
        \xrightarrow{B_\pi} \ket{0}\ket{0}\ket{\pi^{-1}(0)}
        \xrightarrow{\Ppi} \ket{0}\ket{0}\ket{0}
        .
    \end{align*}

    By querying $\Ppi$ to generate the catalyst $\ket{\pi(0)}$ before applying $A_\pi$ and then using the above procedure to erase $\ket{\pi(0)}$ afterward, we define a unitary that approximately converts $\ket{x}\ket{0}$ to $\ket{x}\ket{\pi^{-1}(x)}$ for all $x \in [N]$, ignoring auxiliary qubits that start and end as $\ket{0}$.

    From here it is straightforward to simulate $\Spiinv$.
    Given any state $\ket{x}\ket{y}$, apply the full unitary to approximately get $\ket{x}\ket{y}\ket{\pi^{-1}(x)}$,
    \XOR{} one register into another and then query $\Ppi$ to get
    $\ket{x}\ket{y \oplus \pi^{-1}(x)}\ket{x}$,
    and finally \XOR{} the registers to reset the auxiliary register,
    leaving us approximately with the state $\ket{x}\ket{y \oplus \pi^{-1}(x)}$, as desired.
\end{proof}

Next, by a straightforward extension of the previous proof, we show how to implement the \XOR{} oracle $\Spi$ using an in-place oracle $\Ppi$.

\begin{lemma}\label{thm:PermImplementXor_upper}
    For a permutation $\pi$ on $[N]$,
    $\bOnoparen(\sqrt{N})$ queries to the in-place oracle $\Ppi$ are sufficient to approximately implement the \XOR{} oracle $\Spi$.
\end{lemma}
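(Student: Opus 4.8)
The plan is to reuse the two circuits $A_\pi$ and $B_\pi$ and the garbage‑erasure trick from the proof of \cref{lem:PermImplementSinv}, observing that $B_\pi$ is already exactly a ``forward‑image eraser'': it approximately maps $\ket{x}\ket{\pi(x)}\ket{\pi^{-1}(0)}\mapsto\ket{x}\ket{0}\ket{\pi^{-1}(0)}$ with $\bOnoparen(\sqrt N)$ queries to $\Ppi$. Combined with the trivial fact that copying $\ket{x}$ and applying a single in‑place query produces $\ket{x}\ket{\pi(x)}$, this is all we need: $\Spi$ follows from the standard compute–copy–uncompute pattern, with $B_\pi$ playing the role of the uncomputation step that we normally get for free from a self‑inverse \XOR{} oracle. (The naive idea of reversing $A_\pi$ to obtain a forward‑$\pi$ evaluator does not directly work, since the honest reversal of $A_\pi$ queries $\Ppiinv$; replacing those queries with $\Ppi$ is precisely what turns it into $B_\pi$, which erases rather than computes — hence we use it for exactly that.)

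Concretely, given an input $\ket{x}_\regA\ket{y}_\regB$ together with fresh ancillas $\regC,\regD$ in state $\ket{0}$: (i) prepare the catalyst $\ket{\pi^{-1}(0)}$ in $\regD$ by running the coherent \permInvers{} unitary of \cref{lem:PermImplementSinv} on input $\ket{0}$, costing $\bOnoparen(\sqrt N)$ queries to $\Ppi$; (ii) copy $\regA$ into $\regC$ with CNOTs and apply one in‑place query $\Ppi$ to $\regC$, giving $\ket{x}_\regA\ket{\pi(x)}_\regC$; (iii) \XOR{} $\regC$ into $\regB$, giving $\ket{x}_\regA\ket{y\oplus\pi(x)}_\regB\ket{\pi(x)}_\regC$; (iv) apply $B_\pi$ to registers $(\regA,\regC,\regD)$ to reset $\regC$ to $\ket{0}$ (approximately), using $\bOnoparen(\sqrt N)$ queries — this acts correctly on superpositions over $x,y$ by linearity, exactly as in \cref{lem:PermImplementSinv}, and leaves $\regB$ untouched; (v) uncompute the catalyst with a single in‑place query, since $\Ppi\ket{\pi^{-1}(0)}=\ket{0}$. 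The net map is $\ket{x}_\regA\ket{y}_\regB\ket{0}_\regC\ket{0}_\regD\mapsto\ket{x}_\regA\ket{y\oplus\pi(x)}_\regB\ket{0}_\regC\ket{0}_\regD$, i.e.\ an approximate implementation of $\Spi$, using $\bOnoparen(\sqrt N)$ queries total and inheriting the (inverse‑polynomial) error of \cref{lem:PermImplementSinv}, which is good enough to emulate any $\BQP$ algorithm querying $\Spi$.

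The only point requiring care — and the only place a hidden $\Ppiinv$ query could sneak in — is steps (i) and (iv), the catalyst preparation and the invocation of $B_\pi$; but both were already shown in \cref{lem:PermImplementSinv} to be realizable with $\Ppi$ alone (that is the entire purpose of the $\Ppiinv\!\mapsto\!\Ppi$ substitution used to define $B_\pi$), and the catalyst $\ket{\pi^{-1}(0)}$ is erased for free. So there is no substantive new obstacle: this is genuinely a short extension of the previous lemma, and the write‑up should amount to assembling the five steps above and tracking that each catalyst is both created and uncomputed using only $\Ppi$.
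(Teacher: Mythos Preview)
Your construction is correct but takes a different route from the paper. The paper's proof is a three-line symmetry argument that treats \cref{lem:PermImplementSinv} entirely as a black box: take the full circuit from that lemma, which queries $\Ppi$ and approximates $\Spiinv$; invert it gate-by-gate to obtain a circuit querying $(\Ppi)^\dagger=\Ppiinv$ that still approximates $\Spiinv$ (since \XOR{} oracles are self-inverse); then, because this construction works uniformly for every permutation, substitute $\pi\mapsto\pi^{-1}$ throughout to get a circuit querying $\Ppi$ that now approximates $\Spi$. You instead reopen the internals of \cref{lem:PermImplementSinv} and assemble $\Spi$ directly via compute--copy--uncompute, using $B_\pi$ as the forward-image eraser and managing the catalyst $\ket{\pi^{-1}(0)}$ by hand. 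The paper's argument is terser and avoids re-threading any catalyst bookkeeping; yours is more explicit about the mechanism and pinpoints which building block ($B_\pi$) actually does the work of uncomputing $\ket{\pi(x)}$.
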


\begin{proof}
    The proof of \cref{lem:PermImplementSinv} gives a unitary making $\bOnoparen(\sqrt{N})$ queries to $\Ppi$ to approximate $\Spiinv$.
    Because $\Spiinv$ is self-inverse, if we invert this circuit gate-by-gate, we get a circuit making $\bOnoparen(\sqrt{N})$ queries to $\Ppiinv$ that still approximates $\Spiinv$.
    Replacing queries to $\Ppiinv$ with queries to $\Ppi$ yields a circuit making $\bOnoparen(\sqrt{N})$ queries to $\Ppi$ that now approximates $\Spi$.
\end{proof}

As stated in \cref{sec:oracles}, one query to each of $\Spi$ and $\Spiinv$ is sufficient to implement $\Ppiinv$, implying the following corollary.

\begin{corollary}\label{corr:PermImplementPerminv}
    For a permutation $\pi$ on $[N]$,
    $\bOnoparen(\sqrt{N})$ queries to the in-place oracle $\Ppi$ are sufficient to approximately implement the in-place oracle to the inverse, $\Ppiinv$.
\end{corollary}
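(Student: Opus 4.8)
The plan is to chain the two preceding results together with the elementary oracle identity recorded in \cref{sec:oracles}. First I would invoke \cref{lem:PermImplementSinv} and \cref{thm:PermImplementXor_upper}: for the given permutation $\pi$, the former yields a unitary $U_{\piinv}$ making $\bOnoparen(\sqrt{N})$ queries to $\Ppi$ that approximates $\Spiinv$, and the latter yields a unitary $U_\pi$ making $\bOnoparen(\sqrt{N})$ queries to $\Ppi$ that approximates $\Spi$. Both are \emph{clean}: they return every auxiliary register (including the $\ket{\pi(0)}$ catalyst needed for the controlled in-place query of \cref{result:contPi}) to $\ket{0}$, so they can be used as drop-in black boxes.

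Next I would recall identity (1) of \cref{sec:oracles}: one \XOR{} query to $\Spiinv$ together with one \XOR{} query to $\Spi$ suffices to implement $\Ppiinv$. Explicitly, on $\ket{x}\ket{0}$ one applies $\Spiinv$ to obtain $\ket{x}\ket{\piinv(x)}$, swaps the two registers, and applies $\Spi$, whose action sends the second register $\ket{x}$ to $\ket{x \oplus \pi(\piinv(x))} = \ket{0}$, leaving $\ket{\piinv(x)}\ket{0}$. Substituting $U_{\piinv}$ for the $\Spiinv$ call and $U_\pi$ for the $\Spi$ call --- on disjoint sets of auxiliary qubits, so that the clean property is preserved end to end --- produces a circuit that makes $\bOnoparen(\sqrt{N})$ queries to $\Ppi$ in total and that approximately implements $\Ppiinv$.

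The only point requiring care is error propagation, and it is routine. Since $U_\pi$ and $U_{\piinv}$ each approximate their targets in operator norm to within the same small error as in \cref{lem:PermImplementSinv}, and composing unitaries is non-expansive in operator norm, the triangle inequality shows the assembled circuit approximates $\Ppiinv$ to within twice that error, which still vanishes. I do not expect a genuine obstacle here: this is a corollary, with all the substantive content --- the $\Spiinv$ simulation and the catalyst-erasure trick --- already carried out in \cref{lem:PermImplementSinv} and \cref{thm:PermImplementXor_upper}.
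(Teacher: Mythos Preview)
Your proposal is correct and follows essentially the same route as the paper: invoke \cref{lem:PermImplementSinv} and \cref{thm:PermImplementXor_upper} to approximately simulate $\Spiinv$ and $\Spi$ with $\bOnoparen(\sqrt{N})$ in-place queries each, then use the identity from \cref{sec:oracles} that one call to each of $\Spiinv$ and $\Spi$ implements $\Ppiinv$. The paper states this in a single sentence, while you additionally spell out the explicit $\Spiinv$--swap--$\Spi$ construction and the routine error propagation, but the content is the same.
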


Next, we prove the above simulations of $\Spiinv$, $\Spi$, and $\Ppiinv$ are optimal.
The $\bOmnoparen(\sqrt{N})$ query lower bound on \permInvers{} by \cite{Fefferman-FK18-QCMA} immediately implies the lower bounds for implementing $\Spiinv$ and $\Ppiinv$.
We prove the remaining lower bound,
that at least $\bOmnoparen(\sqrt{N})$ queries to $\Ppi$ are required to simulate $\Spi$, stated in \cref{thm:inPlaceSimulatingXOR_lower}, below.
Our proof is similar to the proof by \cite{Kashefi-KKVB02comparisonQuantumOracles} that $\bOmnoparen(\sqrt{N})$ queries to $\Spi$ are necessary to simulate $\Ppi$,
but it relies on the later result of \cite{Fefferman-FK18-QCMA}.

\begin{proof}[Proof of \cref{thm:inPlaceSimulatingXOR_lower}]
    Suppose for the sake of contradiction that a circuit closely approximates $\Spi$ with $\lonoparen(\sqrt{N})$ queries to $\Ppi$.
    Because $\Spi$ is self-inverse, if we invert this circuit gate-by-gate,
    then we get a circuit making $\lonoparen(\sqrt{N})$ queries to $\Ppiinv$ that still approximates $\Spi$.
    Replacing queries to $\Ppiinv$ with queries to $\Ppi$ yields a circuit making $\lonoparen(\sqrt{N})$ queries to $\Ppi$ that approximates $\Spiinv$,
    violating the lower bound of \cite{Fefferman-FK18-QCMA}.
    We conclude that every constant-approximation of $\Spi$ requires at least $\bOmnoparen(\sqrt{N})$ queries to $\Ppi$.
\end{proof}

Because $\Spi$ can clearly simulate $\Spi$ with one query, the above lower bound makes the approximate unitary-implementation problem of simulating an \XOR{} query to $\pi$
the first task known to require more in-place queries than \XOR{} queries.
We improve on this result in the next section.

Combining these upper and lower bounds, including the results of \cite{Kashefi-KKVB02comparisonQuantumOracles,Fefferman-FK18-QCMA},
the relationship between in-place and \XOR{} oracles is neatly summarized in \cref{cor:summaryRelationships}, which we restate.

\summary*

\section{A Subspace-Conversion Separation}\label{sec:subspace}
In this section,
we give a subspace-conversion problem which can be solved with 1 query to an \XOR{} oracle but requires $\bOmnoparen(\sqrt{N})$ queries to an in-place oracle.
This improves on the unitary-implementation separation from the previous section.
Together, these are the first problems shown to require more in-place queries than \XOR{} queries.
We begin by elaborating on unitary-implementation, subspace-conversion, and other types of problems.

Problems in query complexity traditionally have the goal of evaluating some function of the input oracle.\footnote{
    In query complexity,
    the oracle itself is usually considered the input, so a problem is a function of the oracle.
    Sometimes the oracle is also referred to as a function, since it maps queries to answers, and it is important to distinguish between the two.
    }
These are ``function evaluation'' problems, whether they are search problems or decision problems.
State-generation problems were introduced in the context of query complexity by \cite{AMRR11-symmetryStateGeneration}
and require preparing some oracle-dependent state from an initial all-zeroes state.
These problems generalize function evaluation since
function evaluation can be reduced to the state-generation task of preparing a state which contains the desired answer.
State-generation was generalized to state-conversion problems by Lee, Mittal, Reichardt, \v{S}palek, and Szegedy \cite{LMRSS11-stateConversion},
where the problem is to accept an oracle-dependent input state and produce some oracle-dependent output state.
Later,
\cite{Belovs2023onewayticketlasvegas} generalized these to subspace-conversion problems,
which require simultaneously performing state-conversion on many states.
Finally, as we discuss shortly, one can define unitary-implementation problems.

Algorithms for state-generation, state-conversion, and subspace-conversion problems
can be defined to require exact or approximate solutions
and to require coherent or non-coherent solutions.
As an example,
suppose a state-generation problem requires, on input $x$, outputting the state $\ket{\psi_x}$.
Then here, the \emph{non-coherent} state-generation problem would be satisfied by any output of the form
$\ket{\psi_x}\ket{\text{gar}_x}$, where garbage is even allowed to depend on the input.
This garbage may prevent quantum interference if an algorithm for this problem is used as a subroutine in some larger algorithm.
The definition of coherent versus non-coherent does not necessarily mean anything about how a solution may be implemented.

In this work we restrict our consideration to coherent problems.
Then, there is a hierarchy of problems: decision problems being least general to subspace-conversion problems being most general
\cite[Fig.\ 1]{LMRSS11-stateConversion}, \cite[Fig.\ 2]{Belovs2023onewayticketlasvegas}.
Additionally, coherent subspace-conversion problems
can be generalized to unitary-implementation problems \cite{Belovs15variationsOnQuantum},
for which a solution must act on all possible input states rather than some strict subspace.
Then, the hierarchy is from decision problems to unitary-implementation problems.
More precisely, a decision problem can be embedded into a search problem, which can be embedded into a state-generation, then state-conversion,
then subspace-conversion, and finally, unitary-implementation problem.
Observe that a separation of the \XOR{} and in-place query complexities of
a decision problem would imply separations for all of the more general types of problems.
Therefore, a separation with one type of problem can be seen as formally stronger or weaker than with another.

To improve our unitary-implementation separation to a coherent subspace-conversion separation,
we study an analogue of the \indexErasure{} problem,
which is known to require $\bTnoparen(\sqrt{N})$ queries to an \XOR{} oracle but just 1 query to an in-place oracle \cite{AMRR11-symmetryStateGeneration}.
Recall that \indexErasure{} is the state-generation problem of preparing the state $\frac{1}{\sqrt{N}}\sum_{x\in[N]} \ket{f(x)}$,
which is a special-case of the state-conversion problem converting $\frac{1}{\sqrt{N}}\sum_{x\in[N]}\ket{x}\ket{f(x)}$ to $\frac{1}{\sqrt{N}}\sum_{x\in[N]}\ket{f(x)}$.
Flipping perspectives, while it is hard to erase the index register using an \XOR{} oracle,
it appears hard to erase the output register using an in-place oracle.

A state-generation ``output erasure'' problem similar to \indexErasure{} does not seem appropriate since it is trivial to generate the state $\frac{1}{\sqrt{N}}\sum\ket{x}$,
and we do not know how to prove a lower bound for the state-conversion version.
So,
we define the coherent subspace-conversion problem \functionErasure{} (\cref{def:funcErasure}),
which is to convert states of the form $\sum \alpha_x \ket{x}\ket{f(x)}$ to $\sum_x \alpha_x\ket{x}$.
It is straightforward to solve \functionErasure{} with 1 \XOR{} query.
Below, we prove \cref{thm:inPlaceAndFE}, that $\bTnoparen(\sqrt{N})$ in-place queries to $\pi$ are necessary and sufficient to solve $\functionErasure{}$ for $\pi$.
This gives a subspace-conversion separation.

\begin{proof}[Proof of \cref{thm:inPlaceAndFE}]
    The upper bound follows from our simulation of one \XOR{} query with $\bOnoparen(\sqrt{N})$ in-place queries and the fact that \functionErasure{} can be solved with one \XOR{} query.

    To prove the lower bound, we first argue that
    an \XOR{} query can be simulated by one call to \functionErasure{} and one additional in-place query.
    To see this, observe that given any state of the form $\ket{x}\ket{y}$, we can \XOR{} $x$ into an auxiliary register,
    then query an in-place oracle to $\pi$ to get $\ket{x}\ket{y}\ket{\pi(x)}$,
    \XOR{} to get $\ket{x}\ket{b \oplus \pi(x)}\ket{\pi(x)}$,
    and finally use an algorithm for \functionErasure{} to erase the auxiliary register,
    leaving $\ket{x}\ket{y \oplus \pi(x)}$, as desired.
    By linearity, this procedure simulates the unitary $\Spi$.

    By \cref{thm:inPlaceSimulatingXOR_lower},
    any algorithm simulating an \XOR{} query using an in-place oracle requires
    $\bOmnoparen(\sqrt{N})$ queries.
    We conclude that any algorithm for \functionErasure{} using in-place queries,
    as in the above simulation of an \XOR{} query,
    must use at least $\bOmnoparen(\sqrt{N})$ queries,
\end{proof}
\section{Lower Bounds}\label{sec:lowerbounds}
In this section, we consider avenues for improving our separations with \XOR{} oracles outperforming in-place oracles
to demonstrate a decision-problem separation.

In \cref{sec:lowerbounds-conventionalTechniques}, we explain the limitations of conventional lower bound techniques for showing
that fewer \XOR{} queries are required for a task than in-place queries.
In \cref{sec:candidateProblem}, we introduce a candidate decision problem which we conjecture exhibits such a separation.
Then in \cref{sec:techAndCondForSeparation},
we explore recently developed tools for proving lower bounds for arbitrary oracles, including in-place oracles.
We develop
exact conditions for a decision problem to exhibit a separation.
Further details are given in \cref{app:adv}.

\subsection{Conventional Lower Bound Techniques}\label{sec:lowerbounds-conventionalTechniques}
In pursuit of proving a decision-problem separation with \XOR{} oracles outperforming in-place oracles,
we begin by considering standard tools.
Two techniques have dominated quantum query complexity: the polynomial method and the (basic) adversary method.
See the thesis of Belovs \cite[Chap.\ 3]{Belovs2014thesis} for an excellent survey of these tools.
Unfortunately, we find that these two methods are insufficient for proving the desired separation.

\paragraph{The Polynomial Method}
The polynomial method as applied to quantum computation was developed by Beals, Buhrman, Cleve, Mosca, and de~Wolf \cite{BBCMdW01-polynomialMethod}.
Briefly, the core idea behind the method is that every final amplitude of a quantum query algorithm
can be written as a polynomial with degree equal to the number of queries
and over variables corresponding to the entries in the oracle unitary.
So, if a query algorithm exists that maps queries to particular probabilities,
then a polynomial of a certain degree exists that maps the associated variables to the same values.
If one proves
a lower bound on the necessary degree of such a polynomial,
then one proves
a lower bound on the query complexity.

The standard method of converting a query algorithm
into a polynomial can be applied when querying an in-place oracle.
For the unitary encoding an in-place oracle $\Ppi$,
an entry at column $x$ and row $z$ is 1 if $\pi(x)=z$ and 0 otherwise.
So, the variables used in the polynomial are equivalent to indicator variables
$b_{z,x}$ for the event $\pi(x)=z$.
This same set of variables can be used when applying the polynomial method
to an algorithm querying an \XOR{} oracle $\Spi$.
More specifically, an entry at column $\ket{x}\ket{a}$ and row $\ket{x}\ket{a\oplus z}$ in
the unitary $\Spi$
is equivalent to $b_{z,x}$, the indicator for the event $\pi(x)=z$.
Moreover, when using these indicator variables,
the same argument that the degree of the polynomials increments by at most one for each query
applies identically in both cases.
So, an algorithm making $t$ queries to $\Ppi$ and an algorithm making $t$ queries to $\Spi$
can both be associated with polynomials
of the same maximum degree over the same variables.
Proving a lower bound on the degree of such a polynomial implies a lower bound for both query algorithms.
For this reason, any lower bound for in-place oracles proved using the polynomial method implies an identical lower bound for \XOR{} oracles, preventing the (basic/standard) polynomial method from helping us prove
a separation requiring fewer \XOR{} queries than in-place queries.

Surprisingly though, separations in the other direction are possible using the polynomial method,
if applied indirectly.
For the \setComp{} problem, an approximate version of \setEquality{} or \collision{},
Aaronson \cite{Aaronson02collisionProblem} proved an exponential lower bound on the necessary number of \XOR{} queries using the polynomial method
and then gave an algorithm using $\bO{1}$ in-place queries.
The reason this avoids the barrier described above is subtle.
\setComp{} involves distinguishing a nearly one-to-one function from a nearly two-to-one function.
A query lower bound was proved for the formally easier problem of distinguishing a one-to-one function from a many-to-one function,
which implies a lower bound on \setComp{}.
But, although queries in \setComp{} are structured such that queries to a two-to-one function are still injective,
queries to a many-to-one function are not.
So while the \XOR{} query lower bound for the easier problem implies an \XOR{} query lower bound for \setComp{},
the easier problem is undefined for in-place oracles!
To be clear, this does not contradict our previous conclusions about the polynomial method.
Here, the polynomial method was used to prove a lower bound on \XOR{} queries,
while our conclusion above was for bounds proved on in-place queries.
In particular,
proving the bound for \setComp{} indirectly, via the easier problem,
circumvents making the above argument starting with the bound on \XOR{} queries.
Unfortunately,
this indirect trick only works in one direction, as
any function that can be implemented in the in-place model can be implemented in the \XOR{} model.

\paragraph{The Adversary Method}
The quantum adversary method, in its many forms, is characterized by its use of a ``progress measure''.
Informally, there is
a maximum amount of progress that can be made with one query
and a minimum amount of progress necessary to solve a problem,
implying a lower bound on the number of queries.

The original, ``basic'' unweighted adversary method was developed by Ambainis \cite{Ambainis02adversaryMethod} for \XOR{} query lower bounds.
Fefferman and Kimmel \cite{Fefferman-FK18-QCMA} later reproved the adversary theorem for in-place oracles.
In fact, they proved the theorem with the exact same parameters as the original.
This implied the $\bOmnoparen(\sqrt{N})$ query lower bound for \permInvers{}
with \XOR{} oracles immediately extends to in-place oracles,
and in fact that any lower bound proved using Ambainis's original method extends to in-placed oracles.
Conversely, any lower bound on the number of in-place queries proved using the basic adversary method
immediately implies the same lower bound for queries to \XOR{} oracles.
Therefore, this (basic) technique is unable to prove a separation.

One might wonder whether, even though this particular adversary theorem cannot help, the underlying technique can.
On one hand, in-place oracles deal only with permutations rather than general functions,
so there may be some way to develop another relatively simple tool for lower bounds with in-place oracles.
On the other hand, any method which simply follows the same approach of dividing total progress by the maximum progress per query
will face a barrier for some problems.
For example, Grover's algorithm with \XOR{} oracles makes the maximum amount of progress in its first query,
mapping $\frac{1}{\sqrt{N}} \sum \ket{x}\ket{0}$ to $\frac{1}{\sqrt{N}}\sum\ket{x}\ket{\pi(x)}$.
But that first query can be simulated identically using an in-place oracle.
So, the maximum amount of progress per query is the same for both oracles.
It seems that progress with in-place oracles only slows in successive queries,
when the inability to uncompute garbage prevents interference and entanglement.

\subsection{A Candidate Decision Problem}\label{sec:candidateProblem}
In this section, we introduce a decision problem called $\perminvgarb$ which can be solved with $\bTnoparen(\sqrt{N})$ \XOR{} queries
and which we conjecture requires $\bOm{N}$ in-place queries.

Earlier, we showed that in-place query algorithms can achieve the same query complexity as \XOR{} oracles for \permInvers{}.
As noted in \cref{footnote:injections},
our algorithm appears to crucially rely on the fact that it is inverting a permutation rather than an injection.
The algorithm uses the image of the permutation from one iteration as the input in the next.
Now that our goal is to find a problem for which in-place queries are less useful than \XOR{} queries, we leverage this limitation. (Below, $S_i$ is the symmetric group of degree $i$.)

\begin{definition}[Promised Permutation Inversion]\label{def:permInvGarb}
    Given query access to a permutation $f$ on $[N^2]=\set{0,\dots,N^2-1}$,
    the decision problem
    $\perminvgarb : S_{N^2}\to\zo$
    is defined by
    \[
        \perminvgarb(f)=\begin{cases}
            1&\text{if $f^{-1}(0)\le N$, and}\\
            0&\text{otherwise.}
    \end{cases}\]
\end{definition}

In effect, this problem embeds an injection from $\left[N\right] \rightarrow \left[N^2\right]$ into a bijection on $\left[N^2\right]$,
with the promise that an algorithm only needs to search over $[N]$.
This problem is inspired by a candidate proposed by Aaronson \cite{Aaronson21-querySurvey}
which was a version of Simon's problem with garbage appended to each query.
When querying an \XOR{} oracle, it is easy to copy the desired part of any answer and then uncompute with an additional query,
allowing the garbage to be ignored.
In contrast, it is unclear how to uncompute or erase the garbage with an in-place oracle, which would prevent interference.
Here, instead of Simon's problem we focus on \permInvers{},
and we formalize the idea of appending garbage as embedding an injection into a bijection.

\begin{lemma}\label{lem:perminvgarbXor}
    $\perminvgarb$ can be decided with at most $\bTnoparen(\sqrt{N})$ \XOR{} queries.
\end{lemma}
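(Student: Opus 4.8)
The plan is to recognize that $\perminvgarb$ is just unstructured search over a domain of size $\Theta(N)$ and to solve it with Grover's algorithm using the \XOR{} oracle. Since $f$ is a permutation on $[N^2]$, it has a unique preimage $x^\ast$ of $0$, and by \cref{def:permInvGarb} we have $\perminvgarb(f)=1$ exactly when $x^\ast\le N$. So it suffices to search the $N+1$ candidates $\{0,1,\dots,N\}$ for an element $x$ with $f(x)=0$ (there is at most one, since $f$ is a bijection), outputting $1$ if one is found and $0$ otherwise.

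First I would build the Grover marking operator from the \XOR{} oracle. On a basis state $\ket{x}$ together with a fresh $2\log N$-qubit register and a phase qubit, query $\oracle{S}_f$ to write $f(x)$ into the fresh register, apply a phase of $-1$ conditioned on both $x\le N$ and the fresh register being all zeroes, and then query $\oracle{S}_f$ again to uncompute the fresh register. This is the standard controlled-query-and-uncompute pattern available for \XOR{} oracles (cf.\ \cref{sec:controlledinplace}); it uses $2$ queries and implements the reflection $I-2\ketbra{x^\ast}{x^\ast}$ when $x^\ast\le N$ and the identity otherwise. Next, letting $2^m$ be the least power of two with $2^m\ge N+1$ (so $2^m\le 2N$), I would run Grover's algorithm over $[2^m]$: prepare the uniform superposition, alternate this marking operator with the diffusion operator on $[2^m]$ for $\bTnoparen(\sqrt{2^m})=\bTnoparen(\sqrt N)$ iterations, measure a candidate $x^\ast$, and finally make one more \XOR{} query to test whether $x^\ast\le N$ and $f(x^\ast)=0$; the algorithm outputs $1$ iff this test passes.

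For correctness, when $\perminvgarb(f)=1$ there is a single marked element among $2^m\ge N$ total, so the same amplitude analysis used for \permInvers{} in \cref{thm:permInvers} shows that with a suitable iteration count the measurement returns it with probability at least $1-1/N$, and the verification query then confirms it. When $\perminvgarb(f)=0$ there is no marked element, the state is never rotated away from uniform, and any measured $x^\ast\le N$ has $f(x^\ast)\neq 0$ (the unique preimage of $0$ exceeds $N$), so the test fails and the algorithm correctly outputs $0$. The total cost is $2\cdot\bTnoparen(\sqrt N)+1=\bO{\sqrt N}$ \XOR{} queries. Since $\perminvgarb$ contains \permInvers{} over a domain of size $\Theta(N)$ as a sub-problem and unstructured search requires $\bOm{\sqrt N}$ queries \cite{BBBV97}, this upper bound is tight, giving the claimed $\bT{\sqrt N}$.

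There is no genuine obstacle here; the lemma is essentially a bookkeeping exercise around Grover's algorithm. The only points that require a little care are (i) restricting the marking operator to $x\le N$, so that we mark the solution to $\perminvgarb$ rather than the global preimage of $0$ wherever it happens to lie, and (ii) handling the case of zero marked elements, which is why the explicit verification query is included rather than trusting the measurement outcome directly.
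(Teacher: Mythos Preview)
Your proposal is correct and follows essentially the same approach as the paper: run Grover's algorithm over the restricted domain $\{0,\dots,N\}$, using the standard query--phase--uncompute pattern available with \XOR{} oracles to mark the unique $x$ with $f(x)=0$. Your write-up is more careful than the paper's about padding to a power of two, handling the zero-marked-item case via a verification query, and noting the matching lower bound, but these are routine embellishments of the same argument rather than a different route.
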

\begin{proof}
    The algorithm is simply to perform Grover's algorithm over $[N]$.
    If a pre-image of $0$ is found in $[N]$, then accept, and otherwise reject.

    In more detail,
    in each iteration of the algorithm, an element $x$ is marked if $f(x)=0$.
    This is straightforward because for any $x\in [N]$, a query can be made to produce $\ket{x}\ket{f(x)}$ and a phase can be conditionally applied.
    Then since \XOR{} oracles are self-inverse, it is easy to make another query to uncompute $f(x)$, leaving behind only the phase on $\ket{x}$.
\end{proof}

It is unclear how to solve \perminvgarb{} as efficiently as the above algorithm when using in-place queries.
Simply querying $\sum \ket{x} \mapsto \sum\ket{f(x)}$
would be useless.
One can instead consider algorithms that involve mapping $\ket{x}\ket{0}\mapsto \ket{x}\ket{f(x)}$.
Any such query $x\in [N]$ will lead to an unknown element $f(x)\in [N^2]$.
Since $f(x)$ may not be in $[N]$, this (a priori) unknown element seems useless for finding the pre-image $f\inv(0)\in [N]$.
Moreover, in-place oracles are not self-inverse and do not readily allow uncomputing queries.
So the image is both useless to keep around and not readily uncomputable using in-place queries.
We conjecture this task as a candidate for which \XOR{} oracles outperform in-place oracles.

\begin{conjecture}\label{conj:perminvgarb}
    $\perminvgarb$ requires at least $\bOm{N}$ queries to an in-place oracle.
\end{conjecture}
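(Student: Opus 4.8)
The plan is to prove the lower bound through the adversary method of Belovs and Yolcu for arbitrary linear oracles, by exhibiting an explicit adversary matrix --- indeed, as we argue in \cref{sec:techAndCondForSeparation} (and \cref{app:adv}), any separation of this flavor must be witnessed by an \emph{extended adversary matrix}, so this is essentially the only available route. First I would fix a hard instance distribution: let $\mathcal{Y}$ be the set of permutations $f$ on $[N^2]$ with $f^{-1}(0)\in\{1,\dots,N\}$ and $\mathcal{N}$ the set with $f^{-1}(0)\in[N^2]\setminus([N]\cup\{0\})$, and restrict attention to adversary matrices $\Gamma$ supported on $\mathcal{Y}\times\mathcal{N}$. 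The naive first attempt is to take $\Gamma$ proportional to $\ketbra{u_{\mathcal{Y}}}{u_{\mathcal{N}}}+\ketbra{u_{\mathcal{N}}}{u_{\mathcal{Y}}}$ on uniform superpositions over $\mathcal{Y}$ and $\mathcal{N}$, but this can only reproduce an $\bOm{\sqrt{N}}$ bound (matching Grover over $[N]$) since it is blind to the difference between the two oracle models.

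Next, and this is where the extended adversary matrices come in, I would reshape $\Gamma$ so that it penalizes exactly the ``output-recycling'' move that the \cref{thm:permInvers} algorithm exploits. The intuition is that an in-place query $\oracle{P}_f$ applied to a basis state $\ket{x}$ with $x\in[N]$ returns $f(x)$, which lands in $[N^2]$, a set of size $\Theta(N^2)$, and cannot be uncomputed; so a single query distinguishes $f$ from a neighbour $g$ only on the coordinate $f(x)$ versus $g(x)$ among $\Theta(N^2)$ possibilities, whereas the total amount of progress any algorithm must make to separate $\mathcal{Y}$ from $\mathcal{N}$ is $\Theta(1)$. Making this quantitative amounts to bounding the query norm in the Belovs--Yolcu bound, i.e.\ showing $\max_j \norm{\Gamma\circ\Delta_j}=\bO{1/N}$ rather than $\bO{1/\sqrt{N}}$, where $\Delta_j$ encodes the action of the $j$-th in-place query on $\Gamma$'s support; combined with $\norm{\Gamma}=\Theta(1)$ this would give the claimed $\bOm{N}$ bound.

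The main obstacle is precisely this query-norm estimate. Because the in-place oracle $\oracle{P}_f$ is not self-inverse, the query operator is not a reflection, and the clean two-eigenvalue structure that shortens the classical adversary argument is absent; one must instead track how $\Gamma$ interacts with the full permutation representation, and in particular rule out that some clever sequence of in-place queries --- applying $\oracle{P}_f$ repeatedly to wander through $[N^2]$ --- effectively simulates a Grover-style reflection restricted to the domain $[N]$ despite $f$ not preserving $[N]$. Intuitively this should fail because the forward orbit of a point under $f$ leaves $[N]$ almost immediately and takes $\Theta(N^2)$ steps to return, but converting that intuition into a bound on $\norm{\Gamma\circ\Delta_j}$ is the technical heart of the matter.

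An alternative route I would pursue in parallel is a recording/compressed-oracle argument for random permutations, in the spirit of Rosmanis \cite{Rosmanis22-tightboundsinvertingpermutations}: sample $f$ Haar-uniform subject to the $\mathcal{Y}$/$\mathcal{N}$ promise, maintain a database of the algorithm's partial knowledge of $f$, and argue that each in-place query enlarges the database by only $\bO{1}$ entries while distinguishing $\mathcal{Y}$ from $\mathcal{N}$ requires locating $f^{-1}(0)$ among $\bOm{N}$ candidates. Here the obstacle is foundational: no satisfactory ``compressed in-place oracle'' is known, essentially because inverting an in-place query is itself as hard as \permInvers{}, so the recording formalism would have to be developed nearly from scratch for non-self-inverse oracles.
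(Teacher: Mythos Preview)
The statement is a \emph{conjecture} that the paper explicitly leaves open; there is no proof in the paper to compare against. Your proposal is therefore a research plan for an open problem, and you are candid that both routes you sketch run into unresolved obstacles (the query-norm estimate for the adversary route, and the absence of a compressed in-place oracle for the recording route). That candor is appropriate, but it also means the proposal does not close the gap.

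There is one concrete internal inconsistency worth flagging. You begin by restricting $\Gamma$ to be supported on $\mathcal{Y}\times\mathcal{N}$, i.e.\ only on pairs $(f,g)$ with $\phi(f)\neq\phi(g)$. In the paper's terminology this is exactly a \emph{non-extended} adversary matrix, and \cref{thm:needextend} shows that for any such $\Gamma$ one has $\relgstd(\phi,\dperm)\le 3\,\relgstd(\phi,\dphase)$. Since $\qlxor(\perminvgarb)=\bTnoparen(\sqrt{N})$, no matrix with that support pattern can certify more than $\bO{\sqrt{N}}$ in-place queries, regardless of how cleverly its weights are chosen. You then say ``this is where the extended adversary matrices come in'' and speak of reshaping $\Gamma$, but you never actually leave the $\mathcal{Y}\times\mathcal{N}$ support; the subsequent discussion of penalizing output-recycling and bounding $\norm{\Gamma\circ\Delta_j}$ is still phrased for a matrix living on yes--no pairs. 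To have any chance at $\bOm{N}$ you must place nonzero weight on pairs $(f,g)$ with $\phi(f)=\phi(g)$, and the proposal gives no indication of what those entries should be or how they would interact with $\dperm$ to drive $\lmax(\Gamma\circ\dperm)$ down. That missing design step is precisely the open heart of the conjecture, and the paper's contribution in \cref{app:adv} is to isolate it, not to resolve it.
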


Note that even a classical algorithm can solve the problem with $N$ queries by simply querying every element of $[N]$.  Also note that while an exponential query separation is possible for Simon's with garbage, the largest separation possible with \perminvgarb{} is polynomial. We hope that the structure of the problem makes a separation more tractable.

\subsection{Sketch of Techniques for a Decision Problem Separation}
\label{sec:techAndCondForSeparation}
Here, we briefly
we explore applying a recent version of the quantum adversary bound to prove the desired
decision-problem separation.
A full exposition is given in \cref{app:adv}.

Quantum query complexity can be characterized by the adversary method. This method has been
used to develop several different adversary bounds or adversary theorems in different contexts.
For example, prior work derived adversary bounds in the \XOR{} oracle model.
In general, an adversary
bound for a decision problem $\phi:D\to\zo$ is an optimization problem such that the optimum is
a lower bound on the query complexity.
Belovs and Yolcu \cite{Belovs2023onewayticketlasvegas}
recently developed a new version of the adversary bound that
applies to arbitrary linear transformations.
In fact, \cite[Section 10]{Belovs2023onewayticketlasvegas} specifically observed this includes
in-place oracles in addition to \XOR{} oracles.
Moreover, the bound of \cite{Belovs2023onewayticketlasvegas} is tight,
meaning the optimum value of the optimization problem
corresponds to the optimum query complexity and vice versa.

One caveat is that the lower bound of \cite{Belovs2023onewayticketlasvegas} is for \textit{Las Vegas} query complexity,
a quantum analog of the expected number of queries needed for a zero-error algorithm,
in contrast to the usual notion of bounded-error complexity.
So, our results in this section are primarily focused on Las Vegas complexity.
But, for the special case of \perminvgarb{}, we are able to extend the analysis to bounded-error complexity.

The optimization problem in the adversary bound developed by \cite{Belovs2023onewayticketlasvegas}
is specifically an optimization over \textit{adversary matrices} $\Gamma$.
The optimal choice of adversary matrix then corresponds to the optimal query algorithm.
In other versions of the adversary method,
adversary matrices have been restricted to nonnegative values (the positive weight method)
or to general real numbers (the negative weights method).
For a decision problem $\phi:D\to\zo$,
previous methods have nearly always restricted $\Gamma$ such that
an entry $\Gamma[f,g]$ indexed by problem instances $f$ and $g$
satisfies that if $\phi(f)=\phi(g)$, then $\Gamma[f,g]=0$.
But, one feature of this new version of the adversary method is that it removes that restriction:
we are free to assign nonzero values to \textit{all} entries of $\Gamma$.

We call these matrices, with nonzero entries corresponding to problem instances with the same answer,
\textit{extended} adversary matrices.
We show that, just as negative-weight adversary matrices are necessary to prove tight lower bounds
for certain problems,
these ``extended'' adversary matrices
are necessary to prove the desired decision-problem separation
with \XOR{} oracles outperforming in-place oracles.
In other words, if we use only tools from the negative-weight adversary bound to construct adversary matrices $\Gamma$,
then we cannot prove our desired query separation.

\begin{theorem*}[Informal statement of \cref{cor:needextend}]\label{inf:lv}
    For a decision problem $\phi:D\to \zo$,
    the
    Las Vegas query complexity using \XOR{} oracles
    is asymptotically less than the Las Vegas query complexity using in-place oracles
    if and only if
    optimizing over extended adversary matrices witnesses it.
\end{theorem*}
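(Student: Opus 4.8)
The strategy rests on the tightness of the Belovs--Yolcu adversary bound for Las Vegas query complexity, which holds in both oracle models \cite{Belovs2023onewayticketlasvegas}: for a decision problem $\phi$, the quantity $\qlperm(\phi)$ equals the supremum of the in-place adversary value $\adv_{\perm}(\Gamma)$ over all adversary matrices $\Gamma$ feasible for the in-place oracle, and analogously $\qlxor(\phi)=\sup_\Gamma \adv_{\xor}(\Gamma)$. For a family $\{\phi_N\}$, the hypothesis $\qlxor(\phi_N)=o(\qlperm(\phi_N))$ thus says exactly that the (near-)optimal in-place adversary matrices $\Gamma_N$ satisfy $\adv_{\perm}(\Gamma_N)=\omega(\qlxor(\phi_N))$. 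The content of the theorem is that such witnesses cannot be supported only on pairs $(f,g)$ with $\phi(f)\ne\phi(g)$; they must be \emph{extended}, assigning a nonzero weight to some same-answer pair. Since both positive-weight and negative-weight adversary matrices are supported only on opposite-answer pairs, ``extended'' is a strictly stronger requirement than merely allowing negative weights.

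The ``$\Leftarrow$'' direction uses only soundness of the adversary bound. If $\{\Gamma_N\}$ is a family of extended adversary matrices, each feasible for the in-place oracle, with $\adv_{\perm}(\Gamma_N)=\omega(\qlxor(\phi_N))$, then $\qlperm(\phi_N)\ge\adv_{\perm}(\Gamma_N)=\omega(\qlxor(\phi_N))$, which is the claimed separation. No tightness, and no structural facts about \XOR{} oracles, are needed here.

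For ``$\Rightarrow$'' the crux is a transfer lemma, the adversary-bound analogue of the observations in \cref{sec:lowerbounds-conventionalTechniques} about the basic adversary method: \emph{any adversary matrix $\Gamma$ supported only on opposite-answer pairs and feasible for the in-place oracle can, after a constant-factor adjustment, be made feasible for the \XOR{} oracle, with \XOR{} adversary value at least a universal constant times its in-place adversary value.} Granting this, assume $\qlxor(\phi_N)=o(\qlperm(\phi_N))$ and use in-place tightness to pick, for each $N$, a feasible $\Gamma_N$ with $\adv_{\perm}(\Gamma_N)\ge\qlperm(\phi_N)/2$. If $\Gamma_N$ were supported only on opposite-answer pairs for infinitely many $N$, the transfer lemma and soundness of the \XOR{} bound would give $\qlxor(\phi_N)\ge\Omega(\adv_{\perm}(\Gamma_N))\ge\Omega(\qlperm(\phi_N))$ along that subsequence, contradicting the hypothesis. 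Hence $\Gamma_N$ is extended for all large $N$, and $\adv_{\perm}(\Gamma_N)\ge\qlperm(\phi_N)/2=\omega(\qlxor(\phi_N))$, so these extended matrices witness the separation.

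The transfer lemma is the main obstacle. Intuitively it says that, \emph{per query}, an in-place oracle is no more useful than an \XOR{} oracle for distinguishing instances \emph{across} the $\phi$-partition --- which is the only thing an opposite-answer-supported $\Gamma$ can detect. The key fact is the one highlighted in \cref{sec:lowerbounds-conventionalTechniques}: a single \XOR{} query and a single in-place query have the same distinguishing power on a fresh register, since each sends $\sum_x \ket{x}\ket{0}$ (respectively $\sum_x \ket{x}$) to a state determined by the graph $\{(x,f(x))\}$, and two instances $f,g$ are separated by exactly the inputs where $f(x)\ne g(x)$. Turning this into the lemma requires working with the general-oracle difference operators of \cite{Belovs2023onewayticketlasvegas} rather than the coordinate matrices $\Delta_i$, showing that support on opposite-answer pairs reduces the in-place feasibility constraint to a form comparable with the \XOR{} one, and then bounding the in-place per-query progress by the \XOR{} per-query progress up to an \emph{absolute} constant --- the $O(\sqrt N)$ bound that follows from simulating one query by the other (\cref{cor:summaryRelationships}) is far too weak. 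Tracking the Las Vegas normalization throughout, and verifying the constant-factor adjustment preserves \XOR{}-feasibility, are the remaining technical points; in the general statement these are responsible for the mild caveats mentioned in \cref{sec:techAndCondForSeparation}, which are then removed for \perminvgarb{} in \cref{sec:candidateProblem}.
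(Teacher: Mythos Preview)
Your high-level architecture is exactly the paper's: by tightness of the Belovs--Yolcu bound, $\qlxor(\phi)=o(\qlperm(\phi))$ is equivalent to $\relg(E_\phi,\dphase)=o(\relg(E_\phi,\dperm))$, and the substance is a ``transfer lemma'' saying that \emph{non-extended} adversary matrices cannot witness an in-place advantage over \XOR{}. In the paper this is \cref{thm:needextend}: $\relgstd(E_\phi,\dperm)\le 3\,\relgstd(E_\phi,\dphase)$, and everything else is the easy wrapping you describe.

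The gap is that you do not prove the transfer lemma, and the heuristic you offer for it (``a single \XOR{} query and a single in-place query have the same distinguishing power on a fresh register'') is not how the paper closes it, nor does it obviously lead there. The relevant objects are the oracle matrices $\Delta_{\calO}[f,g]=I-O_f^\dagger O_g$, and comparing $\lmax(\Gamma\circ\dperm)$ to $\lmax(\Gamma\circ\dxor)$ is not about what a first query does on $\ket{0}$. The paper's argument is quite concrete and uses two ingredients you do not mention. First, it replaces \XOR{} oracles by \emph{phase} oracles (equivalent for query complexity) so that $\dphase$ and $\dperm$ have the same block dimensions, and then observes the structural identity $\dphase=\dphase\circ\dperm$: every diagonal entry where $\dphase$ is nonzero has $\dperm$ equal to $1$. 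Second, it proves $\g(\dphase\circ E_\phi)\le 3$ by decomposing $\dphase\circ E_\phi=\Delta'-\Delta''$ and exhibiting explicit unit-norm factorizations via Pisier's characterization of $\g$ (\cref{lem:schuropt}), giving $\g(\Delta')\le 1$ and $\g(\Delta'')\le 2$. Combining these, for non-extended $\Gamma$ one has $\Gamma=\Gamma\circ E_\phi$ and hence
\[
\norm{\Gamma\circ\dphase}=\norm{(\Gamma\circ\dperm)\circ(E_\phi\circ\dphase)}\le \g(E_\phi\circ\dphase)\,\norm{\Gamma\circ\dperm}\le 3\,\norm{\Gamma\circ\dperm},
\]
and for bipartite-supported Hermitian matrices $\lmax=\norm{\cdot}$, yielding $\relgstd(E_\phi,\dperm)\le 3\,\relgstd(E_\phi,\dphase)$. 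Your ``constant-factor adjustment'' is this factor of $3$, but obtaining an \emph{absolute} constant is precisely where the work lies; the $\g$-norm bound on $E_\phi\circ\dphase$ is the missing idea.
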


Again, the above statement is in terms of Las Vegas complexity
instead of the more typical bounded-error complexity.
But, for our candidate problem \perminvgarb{} introduced in the previous section,
we are able to extend the statement to bounded-error complexity

\begin{theorem*}[Informal statement of \cref{claim:perminvgarbExtended}]\label{inf:perminv}
    For the decision problem $\perminvgarb{}$,
    the bounded-error query complexity using \XOR{} oracles is asymptotically less than the
    bounded-error query complexity using in-place oracles
    if and only if
    optimizing over extended adversary matrices witnesses it.
\end{theorem*}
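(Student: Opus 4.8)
The informal statement of \cref{cor:needextend} concerns Las Vegas complexity, and $\perminvgarb$ has \emph{no} Las Vegas separation: in either oracle model one can always find $f^{-1}(0)$ over $[N^2]$ and confirm it with one query, so $\qlxor(\perminvgarb)=\qlperm(\perminvgarb)=\bTnoparen(N)$. The plan is therefore not to apply \cref{cor:needextend} to $\perminvgarb$ directly but to route through a self-certifying variant. Let $\phi'$ be the promise problem obtained from $\perminvgarb$ by restricting to inputs with $f^{-1}(0)\le 2N$: now the preimage of $0$ lies in a set of size $\bOnoparen(N)$ in both cases $f^{-1}(0)\le N$ and $N<f^{-1}(0)\le 2N$, and once guessed it is verified by a single extra query. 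I would then show (i) $\qbperm(\perminvgarb)=\bTnoparen(\qlperm(\phi'))$ and $\qbxor(\perminvgarb)=\bTnoparen(\qlxor(\phi'))$, up to polylogarithmic factors I would need to shave for the exact statement, and (ii) apply \cref{cor:needextend} to the decision problem $\phi'$.

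\paragraph{The \XOR{} side and the standard adversary bound.}
By \cref{lem:perminvgarbXor} (Grover over $[N]$) together with an $\bOmnoparen(\sqrt{N})$ lower bound from the search over $[N]$ embedded in $\perminvgarb$, we have $\qbxor(\perminvgarb)=\bTnoparen(\sqrt{N})$; and since $\phi'$ is self-certifying, $\qlxor(\phi')=\bTnoparen(\sqrt{N})$ as well (search $[2N]$ for the preimage and verify). I would also record the structural fact---proved as in \cref{sec:lowerbounds-conventionalTechniques} for the polynomial and basic adversary methods---that restricting the Belovs--Yolcu optimization to \emph{standard} adversary matrices (those with $\Gamma[f,g]=0$ whenever $\phi(f)=\phi(g)$) gives the same optimum $\relgstd$ for \XOR{} oracles and for in-place oracles: both optimizations run over the same variables, and the only structural difference between $\Spi$ and $\Ppi$, the extra uncomputable answer register, is irrelevant to the block structure standard matrices respect. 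As $\relgstd$ is tight for bounded-error \XOR{} complexity, $\relgstd(\phi')=\bTnoparen(\sqrt{N})$ in both models; hence standard adversary matrices alone cannot certify an $\lOm{\sqrt{N}}$ in-place bound, and any witness of a separation must use genuinely extended entries.

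\paragraph{Reducing in-place bounded-error for $\perminvgarb$ to Las Vegas for $\phi'$.}
This is the heart of the argument. First, because $\phi'$ is self-certifying, in either oracle model a bounded-error algorithm can be verified with $\bOnoparen(1)$ extra queries and restarted on failure, so $\qlperm(\phi')=\bTnoparen(\qbperm(\phi'))$ and likewise for \XOR{}. Second, $\qbperm(\perminvgarb)=\bTnoparen(\qbperm(\phi'))$: restricting to the promised instances only makes the problem easier, so $\qbperm(\phi')\le\qbperm(\perminvgarb)$; conversely, a search version of a bounded-error in-place algorithm $\calB$ for $\phi'$ solves $\perminvgarb$ on an arbitrary $f$ by running $\calB$, classically verifying its returned candidate with one query, and---should repeated verifications fail---concluding that the promise is violated, so $f^{-1}(0)>2N>N$ and the answer is no; the only error is $\calB$ failing when the promise held, which is suppressed by amplification. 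Chaining these gives $\qbperm(\perminvgarb)=\bTnoparen(\qlperm(\phi'))$, and the same chain yields $\qbxor(\perminvgarb)=\bTnoparen(\qlxor(\phi'))$.

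\paragraph{Conclusion and expected obstacle.}
Applying \cref{cor:needextend} to $\phi'$ gives: $\qlxor(\phi')=\lO{\qlperm(\phi')}$ if and only if optimizing over extended adversary matrices witnesses it. Substituting the equivalences above---$\qbxor(\perminvgarb)\asymp\qlxor(\phi')$ and $\qbperm(\perminvgarb)\asymp\qlperm(\phi')$, with the standard/extended dichotomy of the adversary optimization preserved under the complexity-preserving passage between $\perminvgarb$ and $\phi'$---yields exactly the claimed ``if and only if'' for $\perminvgarb$; the ``if'' direction needs only that extended matrices furnish valid lower bounds, while ``only if'' uses the tightness half of \cref{cor:needextend}. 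The step I expect to be hardest is this reduction: the self-certifying variant must be chosen so that the passage between $\perminvgarb$ and $\phi'$ is faithful in the \emph{in-place} model---where the inability to uncompute makes handling promise violations delicate---and, in particular, so that neither direction loses more than a constant rather than a polylogarithmic factor, which is what a separation of the form $\sqrt{N}$ versus $\omega(\sqrt{N})$ requires; securing this, and checking that the standard-versus-extended distinction transfers verbatim, is the technical core separating the bounded-error statement from the Las Vegas one.
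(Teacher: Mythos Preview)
Your strategy rests on the claim that $\qlxor(\perminvgarb)=\bTnoparen(N)$, and this claim is false. You argue that a zero-error algorithm must locate $f^{-1}(0)$ somewhere in $[N^2]$, but the decision problem only asks whether $f^{-1}(0)\le N$, so it suffices to search over $[N]$. Run \emph{exact} Grover \cite{Hoyer2000-exactGrover} over the $N$ elements $\{0,\dots,N\}$ assuming a single marked item, obtain a candidate $x$, and spend one more \XOR{} query to test whether $f(x)=0$; accept iff so. If $f^{-1}(0)\le N$ there is exactly one marked item and exact Grover returns it with certainty, so you accept; if $f^{-1}(0)>N$ there are no marked items, so whatever $x$ you get has $f(x)\ne 0$ and you reject. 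This is a genuine zero-error algorithm with $\bOnoparen(\sqrt{N})$ queries, hence $\qlxor(\perminvgarb)=\bTnoparen(\sqrt{N})=\qbxor(\perminvgarb)$.

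Once this is in hand, the paper's route is far shorter than yours: combine $\qbxor(\Phi)=\bTnoparen(\sqrt{N})=\qlxor(\Phi)$ with the generic inequality $\qbperm(\Phi)=\bO{\qlperm(\Phi)}$. A bounded-error separation $\qbxor(\Phi)=\lO{\qbperm(\Phi)}$ forces $\qlperm(\Phi)=\lOm{\sqrt{N}}$, hence a Las Vegas separation $\qlxor(\Phi)=\lO{\qlperm(\Phi)}$, and then \cref{cor:needextend} applies directly to $\Phi$ itself. No self-certifying variant $\phi'$, no reduction between $\perminvgarb$ and $\phi'$ in the in-place model, and none of the attendant worries about promise violations or polylog losses are needed. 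Your entire detour was prompted by the mistaken Las Vegas calculation; with the correct value, the argument collapses to two lines.
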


See \cref{app:adv} for details.
In sum, we considerably narrow down what techniques could possibly prove an $\bOm{N}$ lower bound
on \perminvgarb{}.
Although we rule out the polynomial and unweighted adversary methods,
the new adversary method of Belovs and Yolcu \cite{Belovs2023onewayticketlasvegas} is tight,
so that if such a lower bound is possible,
then it is witnessed by adversary matrices.
By the above theorem,
we see that any lower bound stronger than $\bOmnoparen(\sqrt{N})$ must use
this new class of \textit{extended} adversary matrices.

\section*{Acknowledgments}
We especially thank John Kallaugher for helpful conversations and mentorship.
RR and JY thank Scott Aaronson for stimulating discussions.
JY thanks Bill Fefferman for suggesting his conjecture on \permInvers{}.

Sandia National Laboratories is a multimission laboratory managed and operated by National
Technology and Engineering Solutions of Sandia, LLC., a wholly owned subsidiary of Honeywell
International, Inc., for the U.S. Department of Energy’s National Nuclear Security Administration
under contract DE-NA-0003525.
This work is supported by a collaboration between the US DOE and other Agencies.
BH and JY acknowledge this work was supported by the U.S. Department of Energy, Office
of Science, Office of Advanced Scientific Computing Research, Accelerated Research in Quantum
Computing, Fundamental Algorithmic Research for Quantum Utility, with support also acknowledged from Fundamental Algorithm Research for Quantum Computing.

JY acknowledges this material is based upon work supported by the U.S. Department of Energy, Office of Science, National Quantum Information Science Research Centers, Quantum Systems Accelerator.
RR was supported by the NSF AI Institute for Foundations of Machine Learning (IFML).

\newpage
\bibliographystyle{alphaurl}
\bibliography{bibliography}
\newpage

\appendix
\crefalias{section}{appendix}

\section{Techniques and Conditions for a Decision Problem Separation}
\label{app:adv}
In \cref{sec:otherOracles,sec:subspace},
we showed there exist unitary-implementation and subspace-conversion problems that
require more in-place queries than \XOR{} queries.
But it remains open whether there is such a separation for a decision problem.
We discussed ideas for demonstrating such a separation in \cref{sec:techAndCondForSeparation},
and we expand on those ideas here.
First, we will explore the recent version of the quantum adversary method developed by Belovs and Yolcu \cite{Belovs2023onewayticketlasvegas}.
Their method is applicable to in-place oracles, although it technically applies to Las Vegas
query complexity instead of the usual notion of bounded-error query complexity.
Second, we will prove
\cref{cor:needextend},
that for any decision problem,
the Las Vegas query complexity using \XOR{} queries is less than the Las Vegas query complexity using in-place oracles
if and only if it is witnessed by a novel type of adversary matrix, which we call ``extended''.
Third, for our candidate problem \perminvgarb{} introduced in \cref{sec:candidateProblem},
we are able to make the extend the statement to bounded-error query complexity,
giving \cref{claim:perminvgarbExtended}.

\subsection{Las Vegas Query Complexity}
Quantum query complexity for Boolean functions is usually described with respect to bounded-error algorithms.
Consider an algorithm $A$ for a decision problem $\phi:D\to\zo$ with the oracle set $\calO =\{O_f\mid f\in D \}$.
This oracle set $\calO$ defines how to map an underlying function $f$ to a particular unitary $O_f$,
such as an \XOR{} oracle $\oracle{S}_f$ or in-place oracle $\oracle{P}_f$.
For an instance $f\in D$, the algorithm $A$ is given query access to $O_f$.
After some number of queries $q_f$,
the algorithm outputs $b\in \zo$.
We say that $A$ is a \textit{bounded-error algorithm} for $\phi$ if for all $f\in D$,
the output $b$ equals $\phi(f)$ with probability at least $2/3$.
The algorithm $A$ has query complexity $q=\max_{f\in D}q_f$.
The \textit{bounded-error query complexity} of the problem $\phi$
is the minimum query complexity across all bounded-error algorithms for $\phi$.
We use $\qbxor(\phi)$
to refer to the bounded-error quantum query complexity of $\phi$ when the functions $f\in D$ are encoded as \XOR{} oracles, \ie{} $\calO_f=\oracle{S}_f$,
and we use $\qbperm(\phi)$ for the query complexity using in-place oracles (``perm'' for ``permutation oracles'').

\textit{Las Vegas query complexity} for quantum computation was introduced by \cite{Belovs2023onewayticketlasvegas}.
Classically, Las Vegas query complexity refers to the expected number of queries made by a
zero-error algorithm,
in contrast to bounded-error (a.k.a.\xspace{} Monte Carlo) algorithms.
The definition of Las Vegas query complexity by \cite{Belovs2023onewayticketlasvegas} is
effectively an extension of the classical definition,
but formalizing it for quantum computation is nuanced.
See \cite{Belovs2023onewayticketlasvegas} for a formal definition and technical details.
Briefly, a Las Vegas quantum query algorithm $B$ for problem $\phi$ has oracle access to $O_f$
but also has the ability to ``skip queries''.
To formalize this in the circuit model, the algorithm
$B$ is given query access to $\tilde O_f=O_f\oplus I$ instead of $O_f$.
Then, the \textit{query weight}
is the squared norm of the amplitudes of the states that $O_f$ acts on.
The sum of these weights is the Las Vegas query complexity of $B$ on input $f$.
Similarly, the query complexity of $B$ is the maximum query complexity over all inputs $f\in D$,
and the Las Vegas query complexity of $\phi$ is the minimum query complexity of all algorithms.
We let $\ql(\phi,\calO)$ denote the Las Vegas query complexity of a problem $\phi$
with oracle set $\calO$,
and in particular let
$\qlxor(\phi)$ and $\qlperm(\phi)$
denote the Las Vegas quantum query complexity of $\phi$ when using \XOR{} oracles and in-place oracles, respectively.

The techniques developed by \cite{Belovs2023onewayticketlasvegas} are more powerful than we need here,
as they primarily consider state-conversion problems,
where an algorithm transforms some starting state
$\xi_f$ to an ending state $\tau_f$.
Since we only consider decision problems,
our discussion of their adversary optimization problem \cite[Definition 7.3]{Belovs2023onewayticketlasvegas}
can be restricted to
$\xi_f=\ket{0}$ and $\tau_f=\ket{\phi(f)}$ for all $f\in D$.
This significantly simplifies our analysis.

In the classical setting,
a Las Vegas algorithm can be turned into a bounded-error algorithm with a similar complexity.
Given the Las Vegas complexity is $L$,
the algorithm can just be terminated after, say, $3L$ queries
to create a bounded-error algorithm which, by Markov's inequality,
correctly terminates with probability at least $2/3$ and errs with probability at most $1/3$.
Therefore, up to constant factors,
the bounded-error complexity is at most the Las Vegas complexity.
In the quantum setting,
\cite[Corollary 7.7]{Belovs2023onewayticketlasvegas}
remarks that the same relationship holds, and so we have
\begin{equation}\label{eqn:BEleLV}
    \qbxor(\phi) = \bO{\qlxor(\phi)}, \quad\text{and}\quad \qbperm(\phi) = \bO{\qlperm(\phi)}.
\end{equation}

\subsection{The Quantum Adversary Bound}
Quantum query complexity can be characterized by the adversary method.
This method has been used to develop several different adversary bounds or adversary theorems
in different contexts.
For example, prior work derived adversary bounds in the \XOR{} oracle model.
In general, an adversary bound for a decision problem $\phi:D\to \{0,1\}$
is an optimization problem such that
the optimum is a lower bound on the query complexity.
The version of the adversary bound developed by \cite{Belovs2023onewayticketlasvegas}
applies to arbitrary linear transformations.
In fact, \cite[Section 10]{Belovs2023onewayticketlasvegas} observed this includes in-place oracles
as well as \XOR{} oracles.
Crucially, their method is able to analyze \textit{unidirectional access} to an oracle $O_f$,
whereas other methods assume access to an oracle and its inverse
or assume an oracle is self-inverse.

To state the adversary bound of \cite{Belovs2023onewayticketlasvegas},
we must introduce the
``unidirectional relative ${\g}$ function'', denoted $\relg$.
This is a variant of the $\g$ norm,
also known as the Schur or Hadamard product operator norm
(see the lecture notes by Ben-David \cite{Gamma2_Shalev} for a nice introduction).

\begin{definition}[The $\g$ Norm]\label{def:g}
    The $\g$ norm for a matrix $E\in \complex^{n \times m}$ is
    \[
        \g(E) = \max_{\substack{\Gamma\in \complex^{n\times m} \\ \Gamma\neq 0}}
        \frac{\norm{\Gamma\circ E}}{{\norm{\Gamma}}} .
    \]
\end{definition}

For two matrices of equal dimensions,
$\circ$ denotes the Hadamard (a.k.a.\ Schur or element-wise) product.
Intuitively, $\g(E)$ is the operator norm of the map
$\Gamma \to \Gamma\circ E$.

To define $\relg$, we first need to extend the definition of $\circ$.
Consider some matrix $\Gamma\in \complex^{n\times m}$
and some potentially larger matrix $\Delta\in \complex^{\ell n\times \ell m}$ for $\ell\in\nats^+$.
We let $\Gamma[f,g]$ denote the scalar entry in row $f$ and column $g$ of $\Gamma$.
For the larger matrix, we let $\Delta[f,g]$ denote the $\ell\times\ell$ block at position $(f,g)$ of $\Delta$. Then, we define $\Gamma\circ\Delta = \Delta\circ\Gamma \in \complex^{\ell n\times \ell m}$ by defining each $\ell \times \ell$ block
\[
    \paren{\Gamma\circ\Delta}[f,g]= \Gamma[f,g] \cdot \Delta[f,g] ,
\]
where again, $\Gamma[f,g]$ is a scalar and $\Delta[f,g]$ is an $\ell \times \ell$ block.

\begin{definition}
[The Unidirectional Relative ${\g}$ Function $\relg$ \cite{Belovs2023onewayticketlasvegas}]
    The relative ${\g}$ function $\relg$ for a Hermitian matrix $E\in\complex^{n\times n}$ relative to another Hermitian matrix $\Delta\in\complex^{\ell n\times \ell n}$ is
    \[
        \relg(E,\Delta)=\max_{\substack{\Gamma\in \complex^{n\times n} \\ \Gamma=\Gamma^\dagger\neq0}}
        \frac{\lmax\p{\Gamma\circ E}}{\lmax\p{\Gamma\circ \Delta}} .
    \]
    \label{def:relg}
\end{definition}

Note that $\relg$ is stated in terms of the maximum eigenvalue,
while most other adversary bounds are stated in terms of the spectral norm. Also note that the optimization is over Hermitian $\Gamma$.

The adversary bound of \cite{Belovs2023onewayticketlasvegas} can be stated in terms of two matrices given as input to $\relg$.
One depends on the decision problem to be solved, and the other depends on the oracle model defining how instances of the problem can be queried.
The \textit{problem matrix} for a problem $\phi:D\to\zo$ is the $|D| \times |D|$ symmetric matrix
$E_\phi$ defined by
\[
    E_\phi[f,g]=\begin{cases}
        1 &\text{if $\phi(f)\neq\phi(g)$}\\
        0&\text{otherwise.}
    \end{cases}
\]
In other words, $E_\phi$ is the indicator matrix for whether instances $f$ and $g$ have different answers.
For an oracle model $\calO$,
which defines the map from $f\in D$ to the oracle unitary $O_f$,
the \textit{oracle matrix} is the $|D| \times |D|$ symmetric block matrix $\Delta_{\mathcal O}$
defined by
\begin{equation}\label{eqn:DeltaO}
    \Delta_{\calO} [f,g] = I-O_f^\dagger O_g .
\end{equation}
In particular, we will use $\dxor$ and $\dperm$ to denote the oracle matrices
in the \XOR{} and in-place oracle models, respectively.

The following result characterizes Las Vegas query complexity in various oracle models
by the $\relg$ function of $E_\phi$ relative to the oracle matrix $\Delta_{\calO}$.

\begin{theorem}[\cite{Belovs2023onewayticketlasvegas}]\label{thm:LAadv}
    For a problem $\phi:D \to \zo$ and oracle set $\mathcal O=\cb{O_f \mid f\in D}$,
    \[
        \ql(\phi,\calO)=\relg(E_\phi, \Delta_{\mathcal O}).
    \]
\end{theorem}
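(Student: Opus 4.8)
The plan is to obtain the claimed equality as a specialization to decision problems of the general state-conversion adversary characterization of Belovs and Yolcu. Recall that \cite{Belovs2023onewayticketlasvegas} characterizes the Las Vegas query complexity of an arbitrary state-conversion problem---converting an input state $\xi_f$ to a target state $\tau_f$ using queries to $O_f$---as the optimum of an optimization over Hermitian adversary matrices $\Gamma$ subject to a feasibility condition relating $\Gamma$ to the Gram matrices of $\{\xi_f\}_{f\in D}$ and $\{\tau_f\}_{f\in D}$ and to the oracle matrix $\Delta_{\calO}$ of \cref{eqn:DeltaO}. A decision problem $\phi:D\to\zo$ embeds into this framework by taking $\xi_f=\ket{0}$ and $\tau_f=\ket{\phi(f)}$ for every $f\in D$, so the result should follow by substituting these choices into their bound and simplifying.

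The first step is the simplification. With $\xi_f=\ket{0}$ for all $f$, the Gram matrix of the inputs is the all-ones matrix. With $\tau_f=\ket{\phi(f)}$, the Gram matrix of the targets has $(f,g)$ entry $\braket{\phi(f)|\phi(g)}$, equal to $1$ when $\phi(f)=\phi(g)$ and $0$ otherwise. Their difference is therefore exactly the problem matrix $E_\phi$, whose $(f,g)$ entry indicates $\phi(f)\neq\phi(g)$. Substituting into the Belovs--Yolcu optimization, the feasibility constraint becomes a statement purely about $\Gamma\circ E_\phi$ and $\Gamma\circ\Delta_{\calO}$; after homogenizing (rescaling $\Gamma$, which is allowed since both sides scale linearly), the optimum collapses to $\lmax(\Gamma\circ E_\phi)/\lmax(\Gamma\circ\Delta_{\calO})$ maximized over Hermitian $\Gamma\neq 0$, which is precisely $\relg(E_\phi,\Delta_{\calO})$ by \cref{def:relg}.

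It then remains to confirm both inequalities, which are the two halves of the Belovs--Yolcu theorem inherited through the embedding. For $\ql(\phi,\calO)\geq\relg(E_\phi,\Delta_{\calO})$, I would run their progress-measure argument: a feasible $\Gamma$ furnishes a progress quantity that is small on the input $\ket{0}$, must grow to a value controlled by $\lmax(\Gamma\circ E_\phi)$ on each target $\ket{\phi(f)}$, and can increase under a single use of $\tilde O_f=O_f\oplus I$ by at most that step's query weight times $\lmax(\Gamma\circ\Delta_{\calO})$; summing over the execution yields the bound. For the reverse inequality I would appeal to semidefinite-programming duality for the $\relg$ program: a dual-optimal solution directly supplies the vectors and unitaries of a Las Vegas query algorithm whose total query weight equals the optimum, exactly as in \cite{Belovs2023onewayticketlasvegas}.

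The main obstacle I anticipate is entirely in checking that the specialization is faithful and correctly normalized: one must verify that restricting the general feasibility condition to $\xi_f=\ket{0}$ and $\tau_f=\ket{\phi(f)}$ leaves no residual dependence on the ambient workspace or on the particular vector representatives of the states---only on $E_\phi$ and $\Delta_{\calO}$---and that the Belovs--Yolcu definition of query weight via $\tilde O_f=O_f\oplus I$ matches the normalization built into $\Delta_{\calO}[f,g]=I-O_f^\dagger O_g$, so that the ratio defining $\relg$ comes out with leading constant exactly $1$ rather than some fixed multiple. Once this bookkeeping is settled, the claimed identity is immediate.
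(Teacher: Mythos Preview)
The paper does not prove this theorem; it is stated with attribution to Belovs and Yolcu \cite{Belovs2023onewayticketlasvegas} and no proof environment follows. The only justification the paper offers is the paragraph of prose preceding the theorem, which remarks that the general state-conversion bound of \cite{Belovs2023onewayticketlasvegas} specializes to decision problems by taking $\xi_f=\ket{0}$ and $\tau_f=\ket{\phi(f)}$.

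Your proposal is exactly this specialization, carried out in more detail than the paper bothers with: you compute that the Gram-matrix difference becomes $E_\phi$, homogenize to recover the ratio form of $\relg$, and sketch how both directions (the progress-measure lower bound and the SDP-dual algorithm) descend from the general theorem. That is the right derivation, and it matches the paper's one-sentence pointer. The caveats you flag about normalization and workspace independence are the genuine bookkeeping one would need to check against \cite[Definition~7.3 and Theorem~7.4]{Belovs2023onewayticketlasvegas}, but the paper itself simply takes the cited result as given rather than verifying these details.
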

Note that this statement of the query complexity is tight.
The optimum value of the optimization problem is both an upper and lower bound on the query complexity.

Later in this section,
it will be very convenient for all oracle matrices to have the same dimensions.
$\Spi$ is generically larger than $\Ppi$ because of the additional register.
Fortunately, query complexity with \XOR{} oracles is known to be equivalent to query complexity with
\textit{phase oracles}.
Here, a phase oracle is defined to map $\ket{x}$ to $\omega_N^{f(x)}\ket{x}$ where $\omega_N$ is the $N$-th root of unity
and $x\in [N]$.
Using this definition of a phase oracle,
we may define the oracle matrix $\dphase$.
As desired, the unitary for a phase oracle has the same dimensions as the unitary for an in-place oracle.
So below,
we make statements about query complexity with \XOR{} oracles, $\qbxor$,
in terms of the oracle matrix for phase oracles, $\dphase$.

\subsection{The Structure of Adversary Matrices}
In the definition of $\relg$,
feasible solutions $\Gamma$ to the optimization problem are generally called \textit{adversary matrices},
and the goal is to identify an adversary matrix for a given problem with high objective value, implying a strong query lower bound.

Because \cref{thm:LAadv} is tight,
and because of the relationship between \XOR{} and phase oracles,
we have the following corollary relevant to our goal of finding a separation between \XOR{}
and in-place oracles.

\begin{corollary}\label{cor:sep}
    For a problem $\phi:D\to \zo$,
    there is a separation
    $\qlxor(\phi) < \qlperm(\phi)$ if and only if
    $\relg(\phi,\dphase) < \relg(\phi,\dperm)$.
\end{corollary}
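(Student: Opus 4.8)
The plan is to derive \cref{cor:sep} as an essentially immediate consequence of the tightness of the Las Vegas adversary bound (\cref{thm:LAadv}), once the \XOR{} model has been replaced by the equivalent phase-oracle model so that all oracle matrices have matching dimensions. Concretely, I would first apply \cref{thm:LAadv} to the in-place oracle set $\{\Ppi\}$, obtaining $\qlperm(\phi)=\relg(E_\phi,\dperm)$, and then apply it to the phase-oracle set, obtaining that the Las Vegas query complexity of $\phi$ using phase oracles equals $\relg(E_\phi,\dphase)$. Throughout, $\relg(\phi,\cdot)$ is read as the shorthand $\relg(E_\phi,\cdot)$ from \cref{def:relg}, with $E_\phi$ the problem matrix of $\phi$.

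The only substantive step is to identify $\qlxor(\phi)$ with the Las Vegas query complexity of $\phi$ using phase oracles. The equivalence of the \XOR{} and phase models in the ordinary bounded-error setting is standard; what must be checked here is that it persists for Las Vegas complexity. I would argue this by observing that the textbook mutual simulation of the two oracle types is ``clean'': a phase query $\ket{x}\mapsto\omega_N^{f(x)}\ket{x}$ is realized by a single \XOR{} query acting on a fixed product ancilla in a phase-kickback state which is returned unentangled, and conversely an \XOR{} query is realized by a single phase query conjugated by a Fourier transform on the target register. Neither direction creates oracle-dependent garbage nor invokes an inverse oracle, and each routes exactly the amplitude mass on which the simulated oracle would act through one call of the simulating oracle. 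Hence the per-call Las Vegas query weight is preserved, so $\qlxor(\phi)=\relg(E_\phi,\dphase)$ by \cref{thm:LAadv}.

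Combining the identities $\qlxor(\phi)=\relg(E_\phi,\dphase)$ and $\qlperm(\phi)=\relg(E_\phi,\dperm)$, we get that $\qlxor(\phi)<\qlperm(\phi)$ holds if and only if $\relg(E_\phi,\dphase)<\relg(E_\phi,\dperm)$, which is precisely the claim. I expect the main obstacle to be exactly the bookkeeping of the middle paragraph: verifying that the known \XOR{}/phase equivalence transfers to Las Vegas query complexity with \emph{matching} query weight, rather than merely up to constant factors, which is what makes the stated biconditional an exact identity rather than an asymptotic one. If only an asymptotic separation is sought, this refinement is unnecessary: \cref{thm:LAadv} applied directly to the \XOR{} oracle set gives $\qlxor(\phi)=\relg(E_\phi,\dxor)$, the bounded-error equivalence of the two models yields $\relg(E_\phi,\dxor)=\Theta(\relg(E_\phi,\dphase))$, and \cref{cor:sep} collapses to a single substitution.
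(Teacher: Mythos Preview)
Your proposal is correct and mirrors the paper's approach exactly: the paper presents \cref{cor:sep} as an immediate consequence of the tightness of \cref{thm:LAadv} together with the stated equivalence of \XOR{} and phase oracles, without a separate proof. If anything, your middle paragraph supplies more justification for the Las Vegas \XOR{}/phase equivalence than the paper itself does.
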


We will show that if
the optimization in $\relg$ is restricted to a certain subset of adversary matrices,
then it is impossible to demonstrate an advantage of \XOR{} oracles over in-place oracles.
In other versions of the adversary method,
adversary matrices have been restricted to nonnegative values (the positive weight method)
or to general real numbers (the negative weights method).
For a decision problem $\phi:D\to\zo$,
previous methods have nearly always restricted $\Gamma$ such that
an entry $\Gamma[f,g]$ indexed by problem instances $f$ and $g$
satisfies that if $\phi(f)=\phi(g)$, then $\Gamma[f,g]=0$.
But, this new version of the adversary method removes that restriction:
we are free to assign nonzero values to \textit{all} entries of $\Gamma$.
We show that, just as negative-weight adversary matrices are necessary to prove tight lower bounds
for certain problems,
these ``extended'' adversary matrices,
with nonzero entries corresponding to problem instances with the same answer,
are necessary to prove the desired decision-problem separation
with \XOR{} oracles outperforming in-place oracles.
In other words, if we use only tools from the negative-weight adversary bound to construct adversary matrices $\Gamma$,
then we cannot prove such a query separation.

\begin{definition}
    An adversary matrix $\Gamma$ is considered \textit{extended} if some entry
    $\Gamma[f,g]$ is nonzero with $\phi(f)=\phi(g)$.
\end{definition}

It is useful to observe that an adversary matrix $\Gamma$ is extended
if and only if $\Gamma \neq \Gamma \circ E_\phi$.
Let us also define modified versions of $\relg$ in terms of extended and non-extended adversary matrices.

\begin{definition}
    $\relgext$ and $\relgstd$ for a matrix $E_\phi\in\complex^{n\times n}$ relative to another matrix $\Delta\in\complex^{\ell n\times \ell n}$ are defined by
    \[
    \relgext(E_\phi,\Delta)=\max_{\substack{
        \Gamma\in \complex^{n\times n} \\ \Gamma=\Gamma^\dagger\neq 0 \\ \Gamma\neq\Gamma\circ E_\phi}
        }
    \frac{\lmax\p{\Gamma\circ E_\phi}}{\lmax\p{\Gamma\circ \Delta}}
    \quad\text{and}\quad
    \relgstd(E_\phi,\Delta)=\max_{\substack{
        \Gamma\in \complex^{n\times n} \\ \Gamma=\Gamma^\dagger\neq 0 \\ \Gamma=\Gamma\circ E_\phi}
        }
    \frac{\lmax\p{\Gamma\circ E_\phi}}{\lmax\p{\Gamma\circ \Delta}} .
    \]
\end{definition}

We show the following.

\begin{lemma}\label{thm:needextend}
    For any decision problem $\phi:D\to \zo$,
    $\relgstd(\phi,\dperm) \leq 3 \relgstd(\phi,\dphase)$.
\end{lemma}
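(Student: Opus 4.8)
The plan is to prove the matrix inequality $\lmax(\Gamma\circ\dphase)\le 3\,\lmax(\Gamma\circ\dperm)$ for every nonzero Hermitian adversary matrix $\Gamma$ that is \emph{standard}, i.e.\ satisfies $\Gamma=\Gamma\circ E_\phi$. Since for such $\Gamma$ the ratio $\frac{\lmax(\Gamma\circ E_\phi)}{\lmax(\Gamma\circ\dperm)}=\frac{\lmax(\Gamma)}{\lmax(\Gamma\circ\dperm)}$ is one of the quantities whose supremum defines $\relgstd(\phi,\dperm)$, this inequality immediately yields $\frac{\lmax(\Gamma)}{\lmax(\Gamma\circ\dperm)}\le 3\,\frac{\lmax(\Gamma)}{\lmax(\Gamma\circ\dphase)}\le 3\,\relgstd(\phi,\dphase)$, and taking the supremum over standard $\Gamma$ gives the lemma (both denominators are strictly positive whenever $\Gamma\ne 0$ is standard, so nothing degenerates). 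Partition $D=\phi^{-1}(0)\sqcup\phi^{-1}(1)$. A standard $\Gamma$ vanishes on every pair $(f,g)$ with $\phi(f)=\phi(g)$, so each of $\Gamma$, $\Gamma\circ\dphase$, and $\Gamma\circ\dperm$ is \emph{bipartite}: it has zero diagonal blocks and nonzero entries only between $\phi^{-1}(0)$ and $\phi^{-1}(1)$. The single fact I would use repeatedly is that a bipartite Hermitian matrix has spectrum symmetric about $0$, so its top eigenvalue equals its operator norm. This is what lets me replace $\lmax$ — which is badly behaved under sums and Hadamard products — by the operator norm, which is controlled under Hadamard products by the $\g$ norm of \cref{def:g}.

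Next I would slice both oracle matrices along the query-answer register. Write $\dphase[f,g]$ for the diagonal block whose $(x,x)$ entry is $1-\omega_N^{g(x)-f(x)}$, and $\dperm[f,g]=I-\oracle{P}_f^\dagger\oracle{P}_g$, whose $(x,x)$ entry is $\mathbbm{1}[f(x)\neq g(x)]$. Because every block of $\dphase$ is diagonal, $\Gamma\circ\dphase$ is block-diagonal across the answer register; combined with bipartiteness this gives $\lmax(\Gamma\circ\dphase)=\norm{\Gamma\circ\dphase}=\max_{x\in[N]}\norm{\Gamma\circ C_x}$, where $C_x$ is the $|D|\times|D|$ matrix with $C_x[f,g]=1-\omega_N^{g(x)-f(x)}$. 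On the in-place side, let $B_x[f,g]=\mathbbm{1}[f(x)\neq g(x)]$; then $\Gamma\circ B_x$ is exactly the principal submatrix of $\Gamma\circ\dperm$ obtained by restricting the answer register to $\ket{x}$. Since compressions do not increase operator norm, $\norm{\Gamma\circ B_x}\le\norm{\Gamma\circ\dperm}=\lmax(\Gamma\circ\dperm)$ for every $x$.

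It then remains to bound $\norm{\Gamma\circ C_x}$ by a constant times $\norm{\Gamma\circ B_x}$. Split $C_x$ into its entrywise real and imaginary parts, $C_x=(J-R_x)-iS_x$, where $J$ is the all-ones matrix, $R_x[f,g]=\cos\!\big(\tfrac{2\pi(g(x)-f(x))}{N}\big)$, and $S_x[f,g]=\sin\!\big(\tfrac{2\pi(g(x)-f(x))}{N}\big)$. Both $J-R_x$ and $S_x$ vanish on exactly the pairs where $B_x$ does (those with $f(x)=g(x)$), so $\Gamma\circ(J-R_x)=(\Gamma\circ B_x)\circ(J-R_x)$ and $\Gamma\circ S_x=(\Gamma\circ B_x)\circ S_x$. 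Writing $a_f=\omega_N^{-f(x)}$ and $b_f=\omega_N^{f(x)}$, we have $R_x=\tfrac12(aa^\dagger+bb^\dagger)$ and $S_x=\tfrac1{2i}(aa^\dagger-bb^\dagger)$, each a bounded combination of rank-one matrices all of whose entries have modulus one; Hadamard multiplication by such a matrix is conjugation by a unitary diagonal matrix, so $\g(aa^\dagger)=\g(bb^\dagger)=1$, hence $\g(R_x),\g(S_x)\le 1$, and $\g(J-R_x)\le\g(J)+\g(R_x)\le 2$. Therefore $\norm{\Gamma\circ C_x}\le\norm{\Gamma\circ(J-R_x)}+\norm{\Gamma\circ S_x}\le\big(\g(J-R_x)+\g(S_x)\big)\norm{\Gamma\circ B_x}\le 3\,\lmax(\Gamma\circ\dperm)$, and taking the maximum over $x$ finishes the argument. (One can even get the constant $2$ by using $\g(J-\Omega_x)\le 2$ directly for $\Omega_x[f,g]=\omega_N^{g(x)-f(x)}$, but $3$ suffices.)

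The step I expect to be the crux is precisely the passage from eigenvalues to operator norms. For general standard $\Gamma$ one has $\Gamma\circ C_x=\Gamma-D_x\Gamma D_x^\dagger$ with $D_x=\diag(\omega_N^{-f(x)})_f$ unitary, and $\lmax$ of such a difference of two unitarily-conjugate Hermitian matrices is not controlled by $\lmax(\Gamma)$, nor is $\lmax(\Gamma\circ C_x)$ in any obvious way comparable to $\lmax(\Gamma\circ\dperm)$; the crude bound $\lmax(A-B)\le\lmax(A)-\lambda_{\min}(A)$ only becomes useful once one knows $\lmax$ and $-\lambda_{\min}$ agree, which is exactly where bipartiteness — forced by the ``standard'' hypothesis — is essential. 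Combined with the answer-register slicing, which reduces the whole comparison to measuring the phase pattern $C_x$ of the phase oracle against the disagreement indicator $B_x$ of the in-place oracle, this is what makes the two oracle models comparable up to a universal constant.
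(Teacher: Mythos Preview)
Your proof is correct and follows essentially the same high-level strategy as the paper: use the bipartite structure forced by $\Gamma=\Gamma\circ E_\phi$ to replace $\lmax$ by the operator norm, and then establish $\norm{\Gamma\circ\dphase}\le 3\,\norm{\Gamma\circ\dperm}$ via a $\g$-type bound. The technical packaging differs. The paper observes the entrywise identity $\dphase=\dphase\circ\dperm$, rewrites $\Gamma\circ\dphase=(\Gamma\circ\dperm)\circ(E_\phi\circ\dphase)$, and then proves the global bound $\g(E_\phi\circ\dphase)\le 3$ by decomposing $E_\phi\circ\dphase=\Delta'-\Delta''$ and exhibiting explicit factorizations (\cref{lem:schuropt}) giving $\g(\Delta')\le 1$ and $\g(\Delta'')\le 2$. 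You instead exploit diagonality of the phase oracle to slice $\Gamma\circ\dphase$ into $|D|\times|D|$ blocks $\Gamma\circ C_x$, observe that the corresponding in-place slice $\Gamma\circ B_x$ is a principal submatrix of $\Gamma\circ\dperm$, and bound $\g(C_x)$ elementarily via its rank-one constituents. Both routes encode the same content---your $C_x=J-\Omega_x$ is exactly the per-slice shadow of the paper's $\Delta'-\Delta''$---but your slicing argument avoids the factorization lemma and makes the role of the disagreement indicator $B_x$ more transparent, at the cost of an extra step (the principal-submatrix bound) that the paper's global identity $\dphase=\dphase\circ\dperm$ sidesteps. Your parenthetical that $\g(J-\Omega_x)\le 2$ would already suffice is also correct, and in fact sharpens the constant relative to both your main argument and the paper's.
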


The implication of \cref{thm:needextend} is that
non-extended adversary matrices always yield a weaker query lower bound for in-place oracles than
\XOR{} oracles oracles when applied to \cref{thm:LAadv}.
Because the adversary bound is tight, if a separation exists, it must be provable using \cref{thm:LAadv}.
Therefore,
there exists a decision problem separation in Las Vegas complexity with \XOR{} oracles outperforming
in-place oracles if and only if it is witnessed by extended adversary matrices.
Formally,
combining \cref{thm:needextend,cor:sep} yields the following theorem.

\begin{restatable}{theorem}{corneedextend}
\label{cor:needextend}
    For a decision problem $\phi:D\to \zo$,
    there is a separation
    $\qlxor(\phi) \leq \lO{\qlperm(\phi)}$ if and only if
    $\relgext(\phi,\dphase)\leq \lO{\relgext(\phi,\dperm)}$.
\end{restatable}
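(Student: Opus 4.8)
The plan is to obtain \cref{cor:needextend} by feeding the structural inequality \cref{thm:needextend} into the tight characterization of Las Vegas complexity, i.e.\ \cref{thm:LAadv} as packaged for our setting in \cref{cor:sep}. After that reduction, everything becomes elementary asymptotic bookkeeping over the four numbers $\relgext(\phi,\dphase)$, $\relgstd(\phi,\dphase)$, $\relgext(\phi,\dperm)$, $\relgstd(\phi,\dperm)$. The first thing I would record is the trivial decomposition
\[
    \relg(\phi,\Delta)=\max\bigl\{\relgext(\phi,\Delta),\ \relgstd(\phi,\Delta)\bigr\},
\]
valid for any Hermitian oracle matrix $\Delta$, which holds simply because the feasible Hermitian adversary matrices $\Gamma\neq 0$ partition into the extended ones ($\Gamma\neq\Gamma\circ E_\phi$) and the non-extended ones ($\Gamma=\Gamma\circ E_\phi$), and $\relgext,\relgstd$ are exactly the restrictions of the $\relg$ optimization to these two classes. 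Combined with \cref{cor:sep} this lets me write $\qlxor(\phi)=\max\{\relgext(\phi,\dphase),\relgstd(\phi,\dphase)\}$ and $\qlperm(\phi)=\max\{\relgext(\phi,\dperm),\relgstd(\phi,\dperm)\}$, turning the theorem into a purely combinatorial statement.

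For the forward direction (a query separation forces extended matrices), suppose $\qlxor(\phi)=\lO{\qlperm(\phi)}$. Chaining \cref{thm:needextend} with the trivial bound $\relgstd(\phi,\dphase)\le\relg(\phi,\dphase)=\qlxor(\phi)$ yields
\[
    \relgstd(\phi,\dperm)\le 3\,\relgstd(\phi,\dphase)\le 3\,\qlxor(\phi)=\lO{\qlperm(\phi)},
\]
so for all sufficiently large $N$ the non-extended term is strictly dominated in the maximum defining $\qlperm(\phi)$, hence $\relgext(\phi,\dperm)=\qlperm(\phi)$. Since $\relgext(\phi,\dphase)\le\relg(\phi,\dphase)=\qlxor(\phi)$ always holds, we get $\relgext(\phi,\dphase)\le\qlxor(\phi)=\lO{\qlperm(\phi)}=\lO{\relgext(\phi,\dperm)}$, as claimed.

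For the reverse direction I would use the same two ingredients in the opposite order. Because \cref{thm:LAadv} is an equality, a query-complexity gap $\qlxor(\phi)=\lO{\qlperm(\phi)}$ is literally the same thing as a gap $\relg(\phi,\dphase)=\lO{\relg(\phi,\dperm)}$; and such a gap, via the decomposition above together with $\relgstd(\phi,\dperm)\le 3\relgstd(\phi,\dphase)\le 3\relg(\phi,\dphase)$, forces the non-extended term to be $\lO{\relg(\phi,\dperm)}$, so that $\relgext(\phi,\dperm)=\relg(\phi,\dperm)(1-\lO{1})$ and therefore $\relgext(\phi,\dphase)\le\relg(\phi,\dphase)=\lO{\relg(\phi,\dperm)}=\lO{\relgext(\phi,\dperm)}$. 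Conversely, an $\relgext$-gap has to be converted back, and here one has to be careful: $\relgext(\phi,\dphase)=\lO{\relgext(\phi,\dperm)}$ controls only one of the two terms inside $\qlxor(\phi)$, so the argument must also pin down the non-extended term $\relgstd(\phi,\dphase)$ using \cref{thm:needextend}, and run a short case analysis on which term realizes each maximum. I would expect \emph{this} bookkeeping --- making sure the quantifier ``extended matrices are needed'' is phrased exactly so that the $\relgstd$ contributions, bounded by \cref{thm:needextend}, are negligible precisely in the separation regime --- to be the only delicate point of the proof. The genuine mathematical content, namely \cref{thm:needextend} itself (a non-extended adversary matrix never certifies a stronger in-place bound than \XOR{} bound, up to the factor $3$), is proved separately and is assumed here.
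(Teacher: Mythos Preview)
Your overall plan---combine \cref{thm:needextend} with \cref{cor:sep} via the decomposition $\relg=\max\{\relgext,\relgstd\}$---is exactly what the paper does (the paper just says ``combining \cref{thm:needextend,cor:sep}'' and gives no further details). Your forward direction is correct and cleanly argued.

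The reverse direction, however, is not actually completed in your proposal, and the ``short case analysis'' you gesture at does not close with the ingredients you list. The issue is the case where $\relgstd(\phi,\dphase)$ dominates $\qlxor(\phi)$: the hypothesis $\relgext(\phi,\dphase)=\lO{\relgext(\phi,\dperm)}$ says nothing about $\relgstd(\phi,\dphase)$, and \cref{thm:needextend} only bounds $\relgstd(\phi,\dperm)$ \emph{above} by $3\relgstd(\phi,\dphase)$, which goes the wrong way. Abstractly nothing you have written rules out, say, $\relgext(\phi,\dphase)=1$, $\relgext(\phi,\dperm)=\sqrt N$, $\relgstd(\phi,\dphase)=\relgstd(\phi,\dperm)=N$, in which case the $\relgext$-gap holds but there is no query separation.

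The missing observation is that $\relgext=\relg$ always. Given any non-extended Hermitian $\Gamma$, set $\Gamma'=\Gamma+\epsilon\ketbra{f}{f}$ for any $f\in D$ and $\epsilon\neq 0$. Since $E_\phi[f,f]=0$ and $\Delta_{\calO}[f,f]=I-O_f^\dagger O_f=0$, we have $\Gamma'\circ E_\phi=\Gamma\circ E_\phi$ and $\Gamma'\circ\Delta_{\calO}=\Gamma\circ\Delta_{\calO}$, while $\Gamma'$ is now extended. Hence every objective value attainable by a non-extended matrix is also attainable by an extended one, so $\relgext(\phi,\Delta)=\relg(\phi,\Delta)$ for both $\Delta=\dphase$ and $\Delta=\dperm$. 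With this in hand, the biconditional in \cref{cor:needextend} is literally the biconditional in \cref{cor:sep}, and both directions are immediate. (Note that this also means \cref{thm:needextend} is not actually needed for the formal statement of \cref{cor:needextend} as written; its real content is the forward-direction message that \emph{non}-extended matrices cannot witness the separation.)
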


To prove \cref{thm:needextend}, we use the following two claims.
Recall $\g(X)$ (\cref{def:g})
is intuitively
the operator norm of the map $\Gamma\to\Gamma\circ X$.
\Cref{lem:schuropt} gives a method to bound $\g(X)$.
In \cref{lem:phasenorm},
we show that $\g(\dphase\circ E_\phi)\le 3$,
meaning element-wise multiplication by $\dphase\circ E_\phi$
cannot increase the norm of a matrix more than a factor of 3.

\begin{lemma}[Theorem 5.1 in \cite{pisier2004similarity}]
    \label{lem:schuropt}
    For any $A\in \mathbb C^{n\times m}$,
    ${\g}\p{A}\leq c$ if and only if there exists some $d\ge 1$ and vectors
    $u_i\in \mathbb C^d$ and $v_j\in \mathbb C^d$
    such that $A_{ij}=\braket{u_i| v_j}$
    for all $i\in [n]$ and $j\in [m]$ and
    \begin{align*}
        &\max_{i\in [n],~j\in [m]}\norm{u_i}\norm{v_j}\leq c.
    \end{align*}
\end{lemma}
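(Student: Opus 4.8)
This is a classical result of operator space theory; in the paper $\g$ is \emph{defined} as the Schur multiplier norm $\max_{\Gamma\neq 0}\norm{\Gamma\circ A}/\norm{\Gamma}$ (\cref{def:g}), and \cref{lem:schuropt} identifies it with the factorization constant. Write $\mu(A)$ for the infimum of $\p{\max_i\norm{u_i}}\p{\max_j\norm{v_j}}$ over all factorizations $A_{ij}=\braket{u_i|v_j}$; by finite-dimensional compactness this infimum is attained (one may take $d=n+m$), so the lemma is equivalent to the single identity $\g(A)=\mu(A)$. Throughout, $\norm{\cdot}$ is the operator norm and $\norm{X}_{\mathrm{tr}}$ is its dual, the trace (nuclear) norm. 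The plan is to prove the two inequalities separately.

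\textbf{The easy direction, $\g(A)\le\mu(A)$.} Fix any factorization $A_{ij}=\braket{u_i|v_j}$, set $R=\max_i\norm{u_i}$ and $S=\max_j\norm{v_j}$, and take an arbitrary $\Gamma$ together with unit vectors $x,y$. Writing $\braket{u_i|v_j}=\sum_k\overline{(u_i)_k}(v_j)_k$ and regrouping the sum by the coordinate $k$ of the factorization space, $y^\dagger(\Gamma\circ A)x=\sum_k (s^{(k)})^\dagger\,\Gamma\,t^{(k)}$ where $(s^{(k)})_i=y_i(u_i)_k$ and $(t^{(k)})_j=x_j(v_j)_k$. Hence $\abs{y^\dagger(\Gamma\circ A)x}\le\norm{\Gamma}\sum_k\norm{s^{(k)}}\norm{t^{(k)}}\le\norm{\Gamma}\sqrt{\sum_k\norm{s^{(k)}}^2}\sqrt{\sum_k\norm{t^{(k)}}^2}$ by Cauchy--Schwarz, while $\sum_k\norm{s^{(k)}}^2=\sum_i\abs{y_i}^2\norm{u_i}^2\le R^2$ and similarly $\sum_k\norm{t^{(k)}}^2\le S^2$. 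Taking the supremum over unit $x,y$ gives $\norm{\Gamma\circ A}\le RS\norm{\Gamma}$, so $\g(A)\le RS$; minimizing over factorizations, $\g(A)\le\mu(A)$.

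\textbf{The hard direction, $\mu(A)\le\g(A)$.} Here I would argue by semidefinite duality (equivalently, a minimax). Identifying positive semidefinite block matrices with Gram matrices, and rescaling a factorization so that $\max_i\norm{u_i}=\max_j\norm{v_j}$, one checks that $\mu(A)$ equals the value of the SDP minimizing $t$ over Hermitian $P,Q$ subject to $\left(\begin{smallmatrix}P & A\\ A^{\dagger} & Q\end{smallmatrix}\right)\succeq 0$, $P_{ii}\le t$, and $Q_{jj}\le t$. This SDP is strictly feasible ($P=Q=(t-1)I$ with $t-1>\norm{A}$), so strong duality holds and both optima are attained; forming the Lagrangian dual and simplifying gives $\mu(A)=\max\cb{2\operatorname{Re}\tr(Z^\dagger A):\left(\begin{smallmatrix}\diag(\alpha) & Z\\ Z^{\dagger} & \diag(\beta)\end{smallmatrix}\right)\succeq 0,\ \sum_i\alpha_i+\sum_j\beta_j=1}$. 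For a dual-feasible $(\alpha,\beta,Z)$, the block-PSD condition forces $Z_{ij}=\braket{p_i|q_j}$ with $\{p_i\}$ a \emph{mutually orthogonal} family satisfying $\norm{p_i}^2=\alpha_i$, and $\{q_j\}$ a mutually orthogonal family satisfying $\norm{q_j}^2=\beta_j$. Writing $p_i=a_i\hat p_i$ and $q_j=b_j\hat q_j$ with orthonormal frames, one obtains $2\operatorname{Re}\tr(Z^\dagger A)=2\operatorname{Re}\tr\p{U\,\diag(a)A\diag(b)\,V^\dagger}$ for isometries $U,V$; this is at most $2\norm{\diag(a)A\diag(b)}_{\mathrm{tr}}$, with equality attainable over $U,V$ since $V^\dagger U$ ranges over all contractions when the ambient space is large enough. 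Using $\sum_ia_i^2+\sum_jb_j^2=1$ and AM--GM to rebalance, this reduces to $\mu(A)=\sup\cb{\norm{\diag(a)A\diag(b)}_{\mathrm{tr}}:\norm{a}_2=\norm{b}_2=1,\ a,b\ge 0}$. Finally, by operator/trace-norm duality together with the identity $\tr\p{W\diag(a)A\diag(b)}=\sum_{i,j}a_i(W^{\top}\!\circ A)_{ij}b_j$, one has $\norm{\diag(a)A\diag(b)}_{\mathrm{tr}}=\max_{\norm{X}\le1}\operatorname{Re}\sum_{i,j}a_i(X\circ A)_{ij}b_j$, and $\abs{\sum_{i,j}a_i(X\circ A)_{ij}b_j}\le\norm{a}_2\norm{X\circ A}\norm{b}_2\le\g(A)\norm{X}\le\g(A)$ by Cauchy--Schwarz and the definition of $\g$. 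Hence $\mu(A)\le\g(A)$, and combined with the easy direction, $\g(A)=\mu(A)$.

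\textbf{Main obstacle.} Essentially all the content is in the hard direction: setting up the SDP and computing its Lagrangian dual correctly, verifying Slater's condition so that strong duality applies, and --- the real crux --- \emph{recognizing} the dual optimum as $\g(A)$. The decisive step is the last one, where testing $\g$ against the rank-one Schur multiplier $ab^{\top}$ (equivalently, using the dual solution's own unit vectors $a,b$ as the left/right test vectors for $\norm{X\circ A}$) converts the trace-norm quantity into an operator-norm bound governed by $\g(A)$. Degenerate entries (e.g.\ some $\alpha_i=0$, or $A=0$) require no special handling, since every step above is an inequality and no divisions occur.
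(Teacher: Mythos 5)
The paper does not prove this lemma at all: it is imported verbatim as Theorem 5.1 of Pisier's book, so there is no in-paper argument to compare against. Judged on its own terms, your proof is correct and is a legitimate self-contained derivation of this classical fact. The easy direction ($\g(A)\le\mu(A)$) is exactly the standard argument: expand $A_{ij}=\braket{u_i|v_j}$ over the factorization coordinate $k$, bound each term by $\norm{\Gamma}\,\norm{s^{(k)}}\norm{t^{(k)}}$, and finish with Cauchy--Schwarz; the bookkeeping $\sum_k\norm{s^{(k)}}^2=\sum_i\abs{y_i}^2\norm{u_i}^2\le R^2$ is right. The hard direction via the Gram-matrix SDP is also sound: the primal you write down does equal $\mu(A)$ after the rescaling $u_i\mapsto\lambda u_i$, $v_j\mapsto\lambda^{-1}v_j$, Slater's condition holds as you say, and the Lagrangian dual is indeed $\max 2\operatorname{Re}\tr(Z^\dagger A)$ over block-PSD matrices with diagonal corner blocks summing to trace one (I checked this; it is the one step you state without carrying out, and it does come out as claimed). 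From there the reduction to $\sup\cb{\norm{\diag(a)A\diag(b)}_{\mathrm{tr}}:\norm{a}_2=\norm{b}_2=1}$ and the final duality step $\norm{\diag(a)A\diag(b)}_{\mathrm{tr}}\le\norm{X\circ A}\le\g(A)$ are correct; note that for the inequality $\mu(A)\le\g(A)$ you only ever need the ``$\le$'' half of each intermediate identity, so the attainability remarks about $V^\dagger U$ ranging over all contractions are not load-bearing. Your route essentially recovers the Mathias/Haagerup trace-norm characterization of the Schur multiplier norm as an intermediate object, whereas Pisier's own proof runs through a Hahn--Banach separation argument; both are standard, and yours has the advantage of being entirely finite-dimensional and checkable by SDP duality.
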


\begin{proposition}\label{lem:phasenorm}
  $\g\p{\dphase\circ E_\phi} \leq 3$.
\end{proposition}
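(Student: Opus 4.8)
The plan is to apply the Gram-vector characterization of the $\g$ norm from \cref{lem:schuropt} and exhibit an explicit factorization of $M \coloneqq \dphase \circ E_\phi$ with all vectors of bounded length. First I would make the entries of $M$ completely explicit. Index rows and columns by pairs $(f,x)$ with $f \in D$ and $x \in [N]$. Since a phase oracle $O_f$ is diagonal, with $x$-th diagonal entry $\omega_N^{f(x)}$, the block $\dphase[f,g] = I - O_f^\dagger O_g$ is itself diagonal, with $(x,x)$ entry $1 - \omega_N^{g(x)-f(x)}$. Plugging this into the definition of the block-Hadamard product, the entry of $M$ at position $\bigl((f,x),(g,y)\bigr)$ is $\delta_{x,y}\, E_\phi[f,g]\,\bigl(1 - \omega_N^{g(x)-f(x)}\bigr)$, where $E_\phi[f,g]\in\{0,1\}$ equals $1$ exactly when $\phi(f)\neq\phi(g)$. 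So $M$ is block-diagonal in the $N$-register, and on that diagonal it multiplies a ``difference of roots of unity'' by the different-answer indicator.

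The heart of the argument is to realize this entry as an inner product $\braket{u_{(f,x)}|v_{(g,y)}}$, splitting the three ``jobs'' of the entry — forcing $x=y$, producing $1-\omega_N^{g(x)-f(x)}$, and producing the indicator $E_\phi[f,g]$ — into three tensor factors. Concretely I would take
\[
    u_{(f,x)} = e_x \otimes \bigl(1,\ \omega_N^{f(x)}\bigr) \otimes \tfrac{1}{\sqrt 2}\bigl(1,\ (-1)^{\phi(f)}\bigr), \qquad v_{(g,y)} = e_y \otimes \bigl(1,\ -\omega_N^{g(y)}\bigr) \otimes \tfrac{1}{\sqrt 2}\bigl(1,\ -(-1)^{\phi(g)}\bigr),
\]
where $e_x$ is the $x$-th standard basis vector of $\complex^N$. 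Then $\braket{e_x|e_y}=\delta_{x,y}$; the middle factor contributes $\bigl\langle(1,\omega_N^{f(x)})\,\big|\,(1,-\omega_N^{g(x)})\bigr\rangle = 1 - \overline{\omega_N^{f(x)}}\,\omega_N^{g(x)} = 1 - \omega_N^{g(x)-f(x)}$; and the last factor contributes $\tfrac12\bigl(1-(-1)^{\phi(f)+\phi(g)}\bigr) = E_\phi[f,g]$. Since inner products of tensor products multiply, $\braket{u_{(f,x)}|v_{(g,y)}}$ equals exactly the entry of $M$ computed above. Finally $\norm{u_{(f,x)}} = 1\cdot\sqrt 2\cdot 1 = \sqrt 2$ and likewise $\norm{v_{(g,y)}}=\sqrt 2$, so $\max \norm{u_{(f,x)}}\,\norm{v_{(g,y)}} = 2 \le 3$, and \cref{lem:schuropt} yields $\g(M)\le 2\le 3$.

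I do not expect a serious obstacle: this is a routine Gram-matrix factorization, and the only point that needs care is the bookkeeping of the block-Hadamard product $\dphase \circ E_\phi$ — one must verify that Hadamarding the $|D|N\times|D|N$ matrix $\dphase$ against the $|D|\times|D|$ matrix $E_\phi$ simply rescales each $N\times N$ block $\dphase[f,g]$ by the scalar $E_\phi[f,g]$, which is what keeps $M$ block-diagonal in the $N$-register and makes the $e_x$ factor enough to decouple distinct values of $x$. (One could instead split $\dphase\circ E_\phi = (E_\phi\otimes I_N) - (E_\phi\text{-scaled } O_f^\dagger O_g)$ and use submultiplicativity of $\g$ under the Hadamard product together with $\g(E_\phi)\le 1$; this also gives $2$.) Note the bound $3$ in the statement is not tight — the construction above gives $2$ — but $3$ is all that is needed downstream in \cref{thm:needextend}, so I would just state and prove $\le 3$, or prove $\le 2$ and observe it implies the claim.
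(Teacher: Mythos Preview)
Your proof is correct and in fact gives the tighter bound $\g(\dphase\circ E_\phi)\le 2$. The paper takes a slightly different route: it writes $\dphase\circ E_\phi = \Delta' - \Delta''$ with $\Delta'[f,g]=I$ and $\Delta''[f,g]=O_f^\dagger O_g$ on the blocks with $\phi(f)\neq\phi(g)$, applies \cref{lem:schuropt} separately to get $\g(\Delta')\le 1$ and $\g(\Delta'')\le 2$, and then invokes the triangle inequality for $\le 3$. Your construction does the whole job in a single Gram factorization by tensoring the three ``jobs'' (match $x=y$, produce the phase difference, produce the answer-indicator) into one pair of vectors, which is more compact and avoids the loss from the triangle inequality. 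The paper's decomposition is a bit more modular---each piece has a cleaner structure---but your direct factorization is the more economical argument and yields the sharper constant. Either way, only $\le 3$ is used downstream in \cref{thm:needextend}, so both are adequate.
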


To prove \cref{lem:phasenorm}, we first decompose $\dphase\circ E_\phi$ as $\dphase\circ E_\phi=\Delta' - \Delta''$. For all $f,g$ satisfying $\phi(f)\not= \phi(g)$, we define $\Delta'[f,g]=I$ and $\Delta''[f,g]=O_f^\dagger O_g$ (all other blocks are zero). We will show that $\g(\Delta') \le 1$ and $\g(\Delta'')\le 2$ using \cref{lem:schuropt}, and then apply the triangle inequality to get \cref{lem:phasenorm}.

\begin{lemma}
    $\g(\Delta')\le1$\label{Deltap}
\end{lemma}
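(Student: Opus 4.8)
The plan is to apply Pisier's characterization of the $\g$ norm (\cref{lem:schuropt}) after identifying the structure of $\Delta'$. First I would note that $\Delta'$ is exactly $E_\phi \otimes I_\ell$: its $(f,g)$ block is the $\ell\times\ell$ identity when $\phi(f)\neq\phi(g)$ and the zero block otherwise, so indexing rows and columns by pairs $(f,a)$ with $f\in D$, $a\in[\ell]$, we have the scalar entry $\Delta'[(f,a),(g,b)] = E_\phi[f,g]\,\delta_{ab}$.

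The second step is to produce the Gram representation demanded by \cref{lem:schuropt} with $c=1$. Since $\phi$ is Boolean, $D$ splits as $D_0=\phi^{-1}(0)$ and $D_1=\phi^{-1}(1)$, and $E_\phi$ has all-ones blocks off the diagonal of this $2\times 2$ partition and zero blocks on the diagonal. Working in $\complex^2$, set $u_f=e_1$, $v_f=e_2$ for $f\in D_0$ and $u_f=e_2$, $v_f=e_1$ for $f\in D_1$; a direct check of the four cases shows $\braket{u_f|v_g}=1$ precisely when $f$ and $g$ lie on opposite sides of the partition, i.e.\ $\braket{u_f|v_g}=E_\phi[f,g]$, while every $u_f$ and $v_f$ is a unit vector. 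To absorb the $I_\ell$ factor, pass to $\complex^{2\ell}$ and set $u_{(f,a)}=u_f\otimes e_a$, $v_{(g,b)}=v_g\otimes e_b$; then
\[
\braket{u_{(f,a)} | v_{(g,b)}} = \braket{u_f|v_g}\braket{e_a|e_b}=E_\phi[f,g]\,\delta_{ab}=\Delta'[(f,a),(g,b)],
\]
and each such vector still has norm $1$, so $\max\norm{u_{(f,a)}}\norm{v_{(g,b)}}=1$. By \cref{lem:schuropt}, $\g(\Delta')\leq 1$.

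Equivalently, one could first establish $\g(E_\phi)\leq 1$ by the two-dimensional construction above and then invoke the submultiplicativity of $\g$ under tensor products together with $\g(I_\ell)=1$. Either way, there is no real obstacle: the only point requiring care is the bookkeeping between the ``outer'' index $f\in D$ and the ``inner'' block index $a\in[\ell]$ when feeding the full $\ell\abs{D}\times\ell\abs{D}$ matrix $\Delta'$ (rather than $E_\phi$ itself) into \cref{lem:schuropt}. The content is simply the observation that the ``different-answers'' indicator matrix of a Boolean function has $\g$ norm at most $1$.
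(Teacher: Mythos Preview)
Your proof is correct and is essentially the same as the paper's. The paper also invokes \cref{lem:schuropt} with vectors $\ket{u_{f,j}}=\ket{j}\ket{\phi(f)}$ and $\ket{v_{f,j}}=\ket{j}\ket{\phi(f)\oplus 1}$, which, up to ordering of tensor factors and notation ($\ket{0},\ket{1}$ versus $e_1,e_2$), is exactly your construction $u_{(f,a)}=u_f\otimes e_a$, $v_{(g,b)}=v_g\otimes e_b$; your preliminary identification $\Delta'=E_\phi\otimes I_\ell$ and the tensor-submultiplicativity remark are extra commentary but not a different argument.
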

\begin{proof}
    We have
    \[
        \Delta'=\sum_{j\in[N]}
        \sum_{\substack{f,g \\ \phi(f)\not=\phi(g)}}
        \ketbra{f,j}{g,j}.
    \]
    For all $f\in D$ and $j\in [N]$, let
    $\ket{u_{f,j}}=\ket{j}\ket{\phi(f)}$ and
    $\ket{v_{f,j}}=\ket j\ket{\phi(f)\oplus 1}$.
    Then \[\ang{u_{f,j}|v_{g,i}}=\begin{cases}
        1&\text{if $i=j$ and $\phi(f)\not=\phi(g)$}\\
        0&\text{otherwise,}
    \end{cases}\]
    so $\bra{f,j}\Delta'\ket{g,i}=\ang{u_{f,j}|v_{g,i}}$ for all $f,g\in D$ and $i,j\in [N]$.
    By \cref{lem:schuropt}, $\g(\Delta')\le 1$.
\end{proof}

The proof for $\Delta''$ follows similarly.
\begin{lemma}
    $\g\p{\Delta''}\le 2$\label{Deltapp}
\end{lemma}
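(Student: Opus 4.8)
The plan is to mirror the proof of \cref{Deltap}: first make the entries of $\Delta''$ explicit, and then exhibit a Gram decomposition meeting the hypothesis of \cref{lem:schuropt}. Since the phase oracle is the diagonal unitary $O_f=\sum_{j\in[N]}\omega_N^{f(j)}\ketbra{j}{j}$, we have $O_f^\dagger O_g=\sum_{j\in[N]}\omega_N^{g(j)-f(j)}\ketbra{j}{j}$, so
\[
    \Delta''=\sum_{j\in[N]}\ \sum_{\substack{f,g\in D\\ \phi(f)\neq\phi(g)}}\ \omega_N^{g(j)-f(j)}\,\ketbra{f,j}{g,j}.
\]
In particular $\bra{f,j}\Delta''\ket{g,i}$ equals $\omega_N^{g(j)-f(j)}$ when $i=j$ and $\phi(f)\neq\phi(g)$, and $0$ otherwise.

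Next I would write each entry as an inner product. Compared to the proof of \cref{Deltap}, the only new feature is the scalar phase $\omega_N^{g(j)-f(j)}=\overline{\omega_N^{f(j)}}\cdot\omega_N^{g(j)}$, which already factors into a part depending only on the row index and a part depending only on the column index, so it can be folded into the vectors without changing their lengths. Concretely, for all $f\in D$ and $j\in[N]$ set
\[
    \ket{u_{f,j}}=\omega_N^{f(j)}\,\ket{j}\ket{\phi(f)}
    \qquad\text{and}\qquad
    \ket{v_{f,j}}=\omega_N^{f(j)}\,\ket{j}\ket{\phi(f)\oplus 1}.
\]
Then $\ang{u_{f,j}|v_{g,i}}=\overline{\omega_N^{f(j)}}\,\omega_N^{g(i)}\,\ang{j|i}\,\ang{\phi(f)|\phi(g)\oplus 1}$, and since $\ang{\phi(f)|\phi(g)\oplus 1}=1$ exactly when $\phi(f)\neq\phi(g)$, this equals $\bra{f,j}\Delta''\ket{g,i}$ for all $f,g\in D$ and $i,j\in[N]$. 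Each $\ket{u_{f,j}}$ and each $\ket{v_{f,j}}$ is a unit vector because $\abs{\omega_N^{f(j)}}=1$, so $\max_{f,j}\norm{u_{f,j}}\cdot\max_{g,i}\norm{v_{g,i}}=1$. By \cref{lem:schuropt} this gives $\g(\Delta'')\le 1$, hence $\g(\Delta'')\le 2$, as claimed.

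I expect no genuine obstacle here: the argument is structurally identical to that of \cref{Deltap}, and the one thing to verify is that the diagonal phase of $O_f^\dagger O_g$ splits multiplicatively across $f$ and $g$ so that it disappears into the Gram vectors. (An equivalent route, if one prefers, is to note that $\Delta''=W^\dagger\Delta' W$ for the diagonal unitary $W=\sum_{f\in D}\ketbra{f}{f}\otimes O_f$; conjugating a matrix by a diagonal unitary multiplies its entries by a pattern of the form $\overline{\lambda_i}\mu_j$, which leaves $\g$ unchanged, so $\g(\Delta'')=\g(\Delta')\le 1$.)
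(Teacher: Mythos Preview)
Your proof is correct and in fact yields the stronger bound $\g(\Delta'')\le 1$. The route is genuinely simpler than the paper's: you observe that the diagonal phase $\omega_N^{g(j)-f(j)}$ factors as $\overline{\omega_N^{f(j)}}\cdot\omega_N^{g(j)}$, so it can be absorbed into the Gram vectors from the proof of \cref{Deltap} without changing their norms. Your alternative remark that $\Delta''=W^\dagger\Delta' W$ for the diagonal unitary $W=\sum_f\ketbra{f}{f}\otimes O_f$ makes $\g(\Delta'')=\g(\Delta')\le 1$ immediate and is the cleanest way to see why the constant should be $1$.

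The paper instead appends an extra register $(\ket{N+1}\pm\ket{f(j)})$ of norm $\sqrt{2}$ to each Gram vector, which is how it arrives at the constant~$2$; that extra register also forces the reconstructed entry to vanish when $f(j)=g(j)$, so the paper's decomposition actually recovers a matrix with the additional restriction $f(j)\neq g(j)$, slightly at odds with the stated definition $\Delta''[f,g]=O_f^\dagger O_g$. Your version matches the definition exactly. Either argument suffices for the downstream triangle-inequality step in \cref{lem:phasenorm}, and your tighter constant would in fact give $\g(\dphase\circ E_\phi)\le 2$ there rather than $3$.
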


\begin{proof}
    We have
    \[
        \Delta'' = \sum_{j\in[N]}
        \sum_{\substack{f,g \\ \phi(f)\neq\phi(g) \\ f(j)\neq g(j)}}
        \omega_N^{-f(j)+g(j)}
        \ketbra{f,j}{g,j}.
    \]
    For all $f\in D$ and $j\in [N]$, let
    \[
         \ket{u_{f,j}} = \omega_N^{f(j)}\ket{j}\ket{\phi(f)}\p{\ket{N+1} + \ket{f(j)}}
    \]
    and
    \[
         \ket{v_{f,j}} = \omega_N^{f(j)}\ket{j}\ket{\phi(f)\oplus 1}\p{\ket{N+1} - \ket{f(j)}}
    \]
    If either $\phi(f)=\phi(g)$ or $i\neq j$, then $\braket{u_{f,j}\mid v_{g,i}}=0$.
    Otherwise,
    \[
        \braket{u_{f,j}\mid v_{g,j}}
        = \omega_N^{-f(j)+g(j)}(\bra{N+1} +\bra{f(j)})(\ket{N+1} -\ket{g(j)})
        = \omega_N^{-f(j)+g(j)}(1-\braket{f(j)|g(j)})
    \]
    which is equal to $\omega_N^{-f(j)+g(j)}$ if $f(j)\neq g(j)$ and zero otherwise.

    Thus, $\bra{f,j}\Delta' \ket{g,i} = \braket{u_{f,j}|v_{g,i}}$ for all $f,g\in D$ and $i,j\in [N]$.
    By \cref{lem:schuropt}, $\g(\Delta'') \le 2$.
\end{proof}

To get \cref{lem:phasenorm}, we just need to apply the triangle inequality to the results from \cref{Deltap} and \cref{Deltapp}.
\begin{proof}[Proof of \cref{lem:phasenorm}]
    We have \begin{align*}
        \g(\dphase\circ E_\phi)&=\g(\Delta'-\Delta'')\\
        &\le \g(\Delta') + \g(\Delta'') &&\text{Triangle inequality}\\
        &\le 3&&\text{\cref{Deltap,Deltapp}}.
    \end{align*}
\end{proof}
Finally, we prove \cref{thm:needextend}.

\begin{proof}[Proof of \cref{thm:needextend}]
    Notice that each block $\dphase[f,g]$ is diagonal.
    If an entry of $\dphase$ is nonzero,
    then the same entry of $\dperm$ is $1$.
    In other words, $\dphase = \dphase\circ \dperm$.
    So, if our adversary matrix $\Gamma$ is only defined on entries $(f,g)$ in which $\phi(f)\neq\phi(g)$, then
    \begin{align*}
        \norm{\Gamma\circ \dphase}&=\norm{\Gamma\circ E_\phi\circ \dphase}&&\text{since $\Gamma=\Gamma\circ E_\phi$}\\
        &=\norm{\Gamma\circ \dperm\circ \p{E_\phi\circ \dphase}}&&\text{since $\dphase=\dperm\circ\dphase$}\\
        &\leq 3\norm{\Gamma \circ\dperm}&&\text{\cref{lem:phasenorm}} .
    \end{align*}

    Now, note for any decision problem $\phi$, that $E_\phi$ is the adjacency matrix for an unweighted undirected bipartite graph. This means that for symmetric $\Delta$ and non-extended $\Gamma$, the Hermitian matrix $\Gamma \circ \Delta = E_\phi \circ \Gamma \circ \Delta$ is the adjacency matrix for a weighted bipartite graph. For such matrices, the maximum eigenvalue and operator norm are equal, so we see
    \[
    \lmax{(\Gamma\circ \dphase)} = \norm{\Gamma\circ \dphase} \le 3\norm{\Gamma \circ\dperm} = 3\lmax{(\Gamma \circ\dperm)}.
    \]
    Hermitian matrices with trace of zero have non-negative maximum eigenvalues, so for any non-extended adversary matrix $\Gamma$,
    \[
    \frac{\lmax\p{\Gamma \circ E_\phi}}{\lmax\p{\Gamma \circ \dperm}} \le 3\frac{\lmax\p{\Gamma \circ E_\phi}}{\lmax\p{\Gamma \circ \dphase}}.
    \]
    In particular, this holds for the non-extended adversary matrix optimizing $\relgstd(\phi,\dperm)$, so
    \[
    \relgstd(\phi,\dperm) \le 3 \relgstd(\phi,\dphase),
    \]
    concluding our proof.
\end{proof}

\subsection{A Candidate Decision-Problem Separation}
Now we apply our bound in the special case of $\Phi=\perminvgarb{}$ (\cref{def:permInvGarb}),
and we are able to extend the above statements about Las Vegas complexity
to standard bounded-error complexity.
With query access to an \XOR{} oracle,
it is straightforward to use Grover's search algorithm
for a bounded-error query complexity of $\qbxor(\Phi)=\bTnoparen(\sqrt{N})$ (\cref{lem:perminvgarbXor}).
Moreover, for XOR oracles, one can implement a zero-error version of Grover's search \cite{Hoyer2000-exactGrover}, 
which satisfies the conditions for a Las Vegas query algorithm. This means that $\qlxor(\Phi)=\bTnoparen(\sqrt{N})$.
On the other hand,
it is unclear how an in-place query algorithm might deal with the query-related garbage and
achieve any quantum speedup. 
Therefore, we conjecture that the in-place query complexity of this decision problem is $\bOm{N}$
(\cref{conj:perminvgarb}).

Any Las Vegas algorithm can be turned into a Monte Carlo algorithm with constant overhead (see \cref{eqn:BEleLV}). 
Therefore, $\qbperm(\Phi) \leq \bO{\qlperm(\Phi)}$.

Combining these two observations with \cref{cor:needextend}
yields the following theorem.

\begin{restatable}{theorem}{claimperminvgarbExtended}
\label{claim:perminvgarbExtended}
    For the decision problem $\Phi = \perminvgarb{}$,
    there is a separation
    $\qbxor(\Phi) \leq \lO{\qbperm(\phi)}$,
    if and only if
    $\relgext(\Phi,\dphase) \leq \lO{\relgext(\Phi,\dperm)}$.
\end{restatable}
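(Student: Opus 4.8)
The plan is to deduce this biconditional from \cref{cor:needextend}, which already establishes the same equivalence for \emph{Las Vegas} complexity, by arguing that for the specific problem $\Phi=\perminvgarb{}$ the bounded-error and Las Vegas query complexities coincide up to constant factors in both oracle models. On the \XOR{} side this is essentially noted above: Grover's algorithm gives $\qbxor(\Phi)=\bTnoparen(\sqrt N)$ (\cref{lem:perminvgarbXor}), and since $f$ is a permutation, $0$ has a unique preimage, so the implicit search over $[N]$ has exactly $0$ or $1$ marked elements; hence the exact (zero-error) version of Grover's algorithm \cite{Hoyer2000-exactGrover} is a valid Las Vegas algorithm, giving $\qlxor(\Phi)=\bTnoparen(\sqrt N)$ as well. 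Plugging these in, the hypothesis ``$\qbxor(\Phi)\le\lO{\qbperm(\Phi)}$'' becomes ``$\qbperm(\Phi)=\lOm{\sqrt N}$'' and ``$\qlxor(\Phi)\le\lO{\qlperm(\Phi)}$'' becomes ``$\qlperm(\Phi)=\lOm{\sqrt N}$'', so via \cref{cor:needextend} the theorem reduces to showing
\[
\qbperm(\Phi)=\lOm{\sqrt N}\quad\Longleftrightarrow\quad\qlperm(\Phi)=\lOm{\sqrt N}.
\]

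One direction is free from \cref{eqn:BEleLV}: since $\qbperm(\Phi)=\bO{\qlperm(\Phi)}$, a super-$\sqrt N$ bounded-error bound forces a super-$\sqrt N$ Las Vegas bound. The real work is the converse, which (using that $\bOmnoparen(\sqrt N)$ is always a lower bound, so ``not $\lOm{\sqrt N}$'' means $\bTnoparen(\sqrt N)$) amounts to: if $\qbperm(\Phi)=\bT{\sqrt N}$ then $\qlperm(\Phi)=\bO{\sqrt N}$. The route I would take is to start from a bounded-error in-place decision algorithm of cost $\bO{\sqrt N}$, promote it by a standard search-to-decision reduction to one that also outputs a candidate witness $x^*\le N$ on $1$-instances, then apply exact amplitude amplification---exploiting the $0$-or-$1$-marked promise---to obtain an algorithm that on $1$-instances returns a \emph{valid} witness with certainty, and finally verify the candidate with one additional in-place query. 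On a $1$-instance the verified witness certifies the answer; on a $0$-instance no valid witness can ever be produced, so failing to verify one certifies the answer $0$, yielding a zero-error in-place algorithm with $\bO{\sqrt N}$ queries.

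I expect this last step to be the main obstacle. In the \XOR{} model exact Grover works because the marked-element phase $\ket{x}\mapsto(-1)^{[f(x)=0]}\ket{x}$ can be implemented with $\bO{1}$ queries; with an in-place oracle this reflection is unavailable---indeed its apparent cost is precisely what motivates \cref{conj:perminvgarb}---so the amplification must treat the assumed $\bO{\sqrt N}$-query bounded-error algorithm as a black box, which requires pinning down its success amplitude (for instance via amplitude estimation, or a variable-time/oblivious amplification argument) rather than invoking textbook exact Grover. The delicate case is certifying $0$-instances with zero error---becoming \emph{sure} that no $x\le N$ maps to $0$---which is the in-place analogue of the exact quantum query complexity of the OR function under the promise of $0$ or $1$ ones. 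An alternative would be to prove a dichotomy directly: that any in-place algorithm for $\perminvgarb{}$ using $\lO{N}$ queries must in fact run in $\bT{\sqrt N}$ queries and be ``exactifiable''. Either way, once bounded-error and Las Vegas in-place complexity are shown to agree for $\perminvgarb{}$, combining this with \cref{cor:needextend} and the \XOR{}-side tightness above would complete the argument.
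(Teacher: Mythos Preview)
Your setup—reducing to \cref{cor:needextend} via $\qbxor(\Phi)=\Theta(\sqrt N)=\qlxor(\Phi)$ (exact Grover \cite{Hoyer2000-exactGrover}) and $\qbperm(\Phi)=O(\qlperm(\Phi))$ (\cref{eqn:BEleLV})—is exactly the paper's argument; its entire proof consists of recording those two facts and stating ``Combining these two observations with \cref{cor:needextend} yields the following theorem.'' The forward implication indeed follows cleanly from these ingredients, just as you lay out: $\qbperm(\Phi)=\omega(\sqrt N)$ forces $\qlperm(\Phi)=\omega(\sqrt N)$ via \cref{eqn:BEleLV}, hence $\qlxor(\Phi)=o(\qlperm(\Phi))$, and \cref{cor:needextend} finishes.

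Where you go beyond the paper is in worrying about the backward direction, and your concern is legitimate: the three cited facts do not yield it, since from $\qlperm(\Phi)=\omega(\sqrt N)$ together with $\qbperm(\Phi)\le O(\qlperm(\Phi))$ one cannot conclude $\qbperm(\Phi)=\omega(\sqrt N)$. The paper simply asserts the biconditional without addressing this. Your proposed patch—upgrading a hypothetical $O(\sqrt N)$-query bounded-error in-place algorithm to zero-error via search-to-decision plus exact amplitude amplification—is not a proof, and you acknowledge as much: the marked-element reflection is precisely the operation conjectured to be expensive with in-place oracles, black-box exact amplification requires the \emph{exact} success amplitude rather than an estimate, and zero-error certification of the $0$-answer (ruling out any $x\le N$ with $f(x)=0$) is a genuine obstacle your sketch does not overcome. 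So on the forward direction you match the paper and are fine; on the backward direction your plan remains speculative, though in fairness the paper's own argument does not fill this gap either.
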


So, we find that extended adversary matrices are crucial for
proving \cref{conj:perminvgarb}.

\end{document}